\documentclass[10pt,twoside,leqno]{article}
\usepackage[letterpaper]{geometry}

\usepackage{iftex}
\ifXeTeX
  \usepackage{scrlfile}
  \PreventPackageFromLoading{breakurl}
  \PreventPackageFromLoading{hypdvips}
\fi
\usepackage{siamproceedings}
\newsiamthm{claim}{Claim}
\newsiamremark{remark}{Remark}
\usepackage[T1]{fontenc}
\usepackage{amssymb,mathtools}
\usepackage{algpseudocode,booktabs}
\usepackage{needspace}
\usepackage{microtype}
\usepackage{enumitem}
\newcommand{\opt}{\mathrm{opt}}

\usepackage{xcolor}

\title{\Large Improved Approximation for Unsplittable Capacitated Vehicle Routing Using Quota Trees}
\author{Zhenyu Wang\thanks{Southern University of Science and Technology
(\texttt{\email{wangzy20233@mail.sustech.edu.cn}).}}}
\date{}
\begin{document}
\maketitle

\begin{abstract}
In the \emph{Unsplittable Capacitated Vehicle Routing Problem (UCVRP)},
we are given a set $V$ of $n$ clients with normalized demands
$d_v \in (0,1]$, a depot $r$, a unit vehicle capacity, and symmetric
metric travel costs $c$ on $V \cup \{r\}$. We seek a collection of routes
of minimum total cost that start and end at $r$, serve each client's
entire demand on a single route, and respect the unit capacity constraint
$\sum_{v \in V(R)} d_v \le 1$ on every route $R$, where $V(R)$ denotes
the clients served by $R$.

UCVRP generalizes the metric \emph{traveling salesman problem (TSP)}
and is therefore NP-hard. Capacity constraints add a packing
difficulty: when all clients are colocated away from the depot,
minimizing the route cost is equivalent to minimizing the number of
bins in \emph{bin packing}. Consequently, UCVRP admits no polynomial-time
approximation ratio below $3/2$ unless \textbf{P=NP}~\cite{GrandoniMathieuZhou23}.
The classical $3.5$-approximation has been improved; Zhao and
Xiao~\cite{ZhaoXiao26Stochastic} recently obtained a
$(2+\ln 2+\varepsilon)$-approximation.

In this paper, we give a deterministic polynomial algorithm with
approximation ratio $5/2$ for general capacity and arbitrary positive demands.
Our algorithm uses local search to select routes through one client
with large demand and clients with small demands.
It constructs candidate routes using endpoint quota trees after rounding
demands and creating vertex copies.
We replace selected routes using sets of two to four clients and a
squared-weight local search, keeping their client sets disjoint.
The remaining clients are served by partitioning a tour constructed
using weighted graphic matroid parity.
We also give a deterministic polynomial $5/2$-approximation algorithm
for the undirected capacitated arc routing problem (CARP) with arbitrary
positive edge demands.
\end{abstract}

\section{Introduction}
In the \emph{Unsplittable Capacitated Vehicle Routing Problem (UCVRP)},
the input consists of a depot $r$, a client set $V$ with $n=|V|$, and a
symmetric metric $c$ on $V\cup\{r\}$. Each client $v\in V$ has a
demand $d_v\in(0,1]$, where the vehicle capacity is normalized to one.
A \emph{feasible solution} is a collection $\mathcal T$ of routes
that start and end at $r$. For each route $R$, let $V(R)$ be the set
of clients it serves and $c(R)$ its cost. The sets $V(R)$,
$R\in\mathcal T$, must partition $V$, and
$\sum_{v\in V(R)}d_v\le1$ for every $R\in\mathcal T$.
Each client's demand is served in full by its assigned route.
Our goal is to minimize $c(\mathcal T):=\sum_{R\in\mathcal T}c(R)$;
we denote this minimum by $\opt$. This models deliveries from a
common depot using vehicles of limited capacity, with each client's
order delivered on a single trip.

The study of \emph{Capacitated Vehicle Routing (CVRP)} dates back
to the work of Dantzig and
Ramser~\cite{DantzigRamser59} on truck dispatching in 1959.
UCVRP contains the metric \emph{Traveling Salesman Problem (TSP)}
as a special case: when $\sum_{v\in V}d_v\le1$, an optimal solution
can be chosen to consist of a single tour. Thus UCVRP is NP-hard,
which motivates the study of approximation algorithms.

A classical approach, known as \emph{tour partitioning}, is to
compute a TSP tour $R_{\mathrm{TSP}}$ through the depot and all clients,
then partition its client order into consecutive sets of total
demand at most one. Connecting the ends of each set to the depot
gives feasible routes. This approach was studied by
Haimovich and Rinnooy Kan~\cite{HaimovichRinnooyKan85} and extended
to unequal demands by Altinkemer and Gavish~\cite{AltinkemerGavish87}.
Let $\alpha$ be the approximation ratio of the algorithm used to
compute the TSP tour.
Karlin, Klein, and Oveis Gharan~\cite{KarlinKO24,KarlinKO23}
obtained a ratio below $3/2$, including a deterministic algorithm.
Building on their work, Gurvits, Klein, and Leake~\cite{GurvitsKL24}
improved the guarantee to $\alpha\le3/2-2.18\cdot10^{-34}$, also
achievable deterministically. We state the numerical bounds below
using $\alpha=3/2$; the bounds depending on $\alpha$ admit the
corresponding slight improvement.
The shortest such tour costs at most $\opt$:
concatenate the routes of an optimal solution at the depot, retain
the initial and final depot visits and the first visit to each client,
and connect consecutive retained vertices by direct edges.
The triangle inequality ensures that the cost does not increase.
Hence $c(R_{\mathrm{TSP}})\le\alpha\opt$.
The classical partitioning algorithm produces routes of total cost
at most $c(R_{\mathrm{TSP}})+2\opt\le(\alpha+2)\opt$.
With Christofides' algorithm~\cite{Christofides22}, $\alpha=3/2$,
giving a $3.5$-approximation.

Blauth, Traub, and Vygen~\cite{BlauthTraubVygen23} improved this
guarantee by exploiting the structure of instances on which the
classical analysis is nearly tight. Friggstad, Mousavi, Rahgoshay,
and Salavatipour~\cite{FriggstadMRS25} combined matching with an
improved tour-partitioning procedure to obtain an
$(\alpha+1.75)$-approximation. They also used rounding of a
\emph{configuration linear program (LP)} to serve clients with
large demands before partitioning
a tour through the remaining clients. This gives an
$(\alpha+1+\ln 2+\varepsilon)$-approximation for every fixed
$\varepsilon>0$. Taking $\alpha=3/2$ and sufficiently small
$\varepsilon$ gives a $3.194$-approximation.
Combining their analysis with that of Blauth
et al. yields a further constant improvement.

More recently, Ebert and Weismantel~\cite{EbertWeismantel26} introduced
an \emph{Average Greedy} algorithm that controls both the cost of selected
routes and their coverage of clients with large demands. Combining
this algorithm with matching and tour partitioning gives an
$(\alpha+1.659)$-approximation, and thus a $3.159$-approximation
for symmetric metrics. Zhao and Xiao~\cite{ZhaoXiao26} combined matching,
LP rounding, and tour partitioning. Their revised analysis, based on
demand discretization and factor-revealing LPs, gives expected and
deterministic approximation ratios of about $3.147$ and
$3.148$, respectively, for general capacity.

Zhao and Xiao~\cite{ZhaoXiao26Stochastic} gave a new tour construction
using \emph{weighted graphic matroid parity}. Combining this construction
with the LP rounding of Friggstad et al.~\cite{FriggstadMRS25} gives a
$(2+\ln 2+\varepsilon)$-approximation for UCVRP.
We introduce a local-search algorithm based on quota trees.
Combined with set replacements and the tour construction of Zhao
and Xiao, it gives the following guarantee.

\begin{theorem}[Main result]\label{thm:main}
Algorithm~\ref{alg:global} is a deterministic
$5/2$-approximation for metric UCVRP.
Its running time is
\begin{equation}\label{eq:runtime}
n^{O(1)}.
\end{equation}
\end{theorem}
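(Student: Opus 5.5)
The plan is to prove Theorem~\ref{thm:main} by charging the output of Algorithm~\ref{alg:global} against two lower bounds on $\opt$. The first is the radial bound $\sum_{v\in V} 2 d_v c(r,v)\le \opt$. It holds because every route $R$ with $\sum_{v\in V(R)}d_v\le 1$ satisfies $c(R)\ge \sum_{v\in V(R)} 2d_v c(r,v)$: each client on $R$ is at distance at least $c(r,v)$ from the depot in both directions. The second is the tour bound: the weighted graphic matroid parity construction of Zhao and Xiao gives a tour whose cost, restricted to any subset of clients, is charged to $\opt$ with factor essentially $1$ rather than $\alpha$.

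Fix an optimal solution $\mathcal T^*$. Split each optimal route $R^*$ into its \emph{big} client (demand above $1/2$, at most one per route) and its \emph{small} clients. The algorithm's local search picks disjoint routes, each consisting of one big client plus small clients. Tour partitioning on the remaining clients costs at most (tour cost) $+\,2\sum_v d_v c(r,v)$ over those clients, with an extra loss from the big clients that are left over.

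The core of the proof is a local-optimality inequality. First, round the demands to a polynomial grid and create vertex copies. Using the endpoint quota tree subroutine, the candidate route for any big client $b$ and any quota is then within a constant factor of the cheapest route through $b$ that collects that quota. Next, show that at a local optimum with respect to the squared-weight objective and the exchanges of $2$ to $4$ client sets, no improving swap exists. This lets us compare the selected routes $\mathcal S$ with the optimal routes through big clients. I would aim for an inequality of the form
\[
c(\mathcal S) + \tfrac12\,\mathrm{rad}(V\setminus V(\mathcal S)) \le \tfrac32 \opt_{\mathrm{big}} + (\text{small-demand radial terms}).
\]
It would be proved by summing the non-improvement conditions over a fractional matching between the optimal big-client routes and $\mathcal S$. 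The squared weights are there so that this averaging controls how much of the total demand is covered.

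Finally, add the tour-partitioning cost of the uncovered clients to the selected routes, and check that the coefficients of $\opt$ from the tour bound and the radial bound total at most $5/2$. A factor-revealing case split on the demand of the big client (above or below $1/2$, with the rounding error absorbed into an $O(1/n)$ term that the grid makes vanish) should close the argument.

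The running time is polynomial for three reasons. The grid rounding keeps the number of quota values and copies polynomial. Each local-search step looks at only $O(n^4)$ exchange sets. The squared weights are integers bounded polynomially after rounding, so only polynomially many improving steps can occur. The main obstacle is the local-optimality inequality, in particular making the exchange argument hold when the client sets of optimal routes overlap several selected routes. I would first try a token-passing argument in which each optimal big route "claims" at most two selected routes, which is where the $2$-to-$4$ client sets come in.
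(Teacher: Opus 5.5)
There is a genuine gap. The proposal rests on a misstated tour bound, and it is missing the specific ingredients that make the local-search analysis go through.

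\textbf{The accounting does not close.} The Zhao--Xiao tour is not charged to $\opt$ with factor close to $1$. When the total demand is at most one, $\opt$ equals the optimal TSP value, so no polynomial algorithm can achieve that unless $\mathrm{P}=\mathrm{NP}$. What the construction actually gives is $c(R_{\mathrm{TSP}})\le\sum_{R\in\mathcal T^*}\bigl(2c(R)-2\rho(R)\bigr)$, which can be nearly $2\opt$. The proof must therefore show that the selected routes $\mathcal R$ plus the completion charges $p(X)$ of the unserved clients $X$ cost at most $\sum_{R\in\mathcal T^*}\bigl(2\rho(R)+\frac12c(R)\bigr)$, so that the $\rho$ terms cancel. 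This is a route-by-route comparison against $\mathcal T^*$. Your target inequality, with $\frac32\opt_{\mathrm{big}}$ and radial terms, does not produce the $+2\rho(R)$ terms needed to offset the real tour bound, so your final coefficient check cannot be carried out. Also, $2\sum_v d_vc(r,v)$ is the splittable partitioning charge. With unsplittable demands you need explicit per-client prices, for example the $\delta$-tank charges with $\delta=1/8$: these are $\frac{16}{7}d_vc(r,v)$ for $d_v\le1/8$ and rise to $2c(r,v)$ for $d_v>1/2$.

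\textbf{The local-search part misses the ideas that make the bound hold.}
\begin{itemize}
\item \emph{Two procedures are merged.} The squared-weight (Berman) exchange acts on sets of two to four \emph{non-small} clients. It is what handles optimal routes with at least two non-small clients. This, not an overlap or token argument, is why sizes $2$--$4$ appear: the coefficients $3/5$, $9/20$, $9/25$ give $F(R;p)\le2\rho(R)+c(R)/2$ by a case analysis on the number of non-small clients. The quota-tree search is a separate single-route improvement. It is needed only for routes with exactly one non-small client whose demand lies in $(7/16,9/16)$.
\item \emph{Your demand split is wrong.} Splitting big/small at $1/2$ does not isolate that range, and it ignores clients with demand between $1/8$ and $1/2$ entirely.
\item \emph{A constant factor is too weak.} ``Within a constant factor of the cheapest route'' is not enough. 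You need a candidate of cost at most $2(1+\varepsilon)c(P)$ against \emph{each} of the two $r$--$v$ halves $P$ of an optimal route; averaging then gives $(1+\varepsilon)c(R)$.
\item \emph{The errors cannot be made $O(1/n)$.} The quota-tree algorithm runs in $N^{O(1/\varepsilon)}$ time, and enumerating the small clients of demand above $\varepsilon/4$ costs $n^{O(1/\varepsilon)}$. Polynomially many route updates also require an acceptance threshold such as $\varepsilon h_v$. All of these losses are of fixed constant order. You therefore need a strictly positive slack on exceptional routes to absorb them. In the paper this slack is at least $\rho(R)/140$, obtained by using only small clients at distance at least $6c(r,v)/7$.
\item \emph{The overlap problem is unsolved.} Some small clients of an optimal route may already be held by other stored routes. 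The paper handles this with the discounted score $\Phi(R_v)=p_v+\frac35p(S_v)-c(R_v)$, and by running set packing with adjusted prices $p'_v=p_v-s(R_v)$. The loss $\frac3{10}p(U_v)$ is then paid by the retained $\frac25p(S_v)$, since $\frac4{25}>\frac3{20}$. Your token-passing sketch supplies nothing that plays this role.
\end{itemize}
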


We also consider the \emph{undirected capacitated arc routing problem}
(CARP), in which each required edge is traversed and its entire demand
is served by one route starting and ending at a common depot.
\begin{theorem}\label{thm:carp}
The algorithm in Appendix~\ref{app:carp} is a deterministic
$5/2$-approximation for CARP.
Its running time is polynomial.
\end{theorem}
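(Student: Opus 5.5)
The natural route to Theorem~\ref{thm:carp} is the standard reduction from undirected CARP to UCVRP, followed by an application of Theorem~\ref{thm:main}, with care taken over the fact that a required edge is not a point.

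\emph{Reduction.} Let $G=(N,E)$ be the graph with depot $r$, required edges $E_R\subseteq E$, edge demands $d_e\in(0,1]$, and lengths $\ell$. Let $c$ be the shortest-path metric on $N$. For each required edge $e=\{u,w\}$ I create two client copies $e^u,e^w$ located at $u$ and $w$. Serving $e$ then corresponds to visiting $e^u$ and $e^w$ consecutively, traversing $e$ at cost $\ell(e)\ge$ the metric distance. Since this pairing constraint is not native to UCVRP, I plan a cleaner variant that contracts each required edge to a single client. Set $c'(e,f)=\min$ over endpoints $x\in e$, $y\in f$ of $c(x,y)$, and set $c'(r,e)=\min_{x\in e}c(r,x)$. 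Write $L=\sum_{e\in E_R}\ell(e)$. This $c'$ is not a metric in general, so instead I would redo the analysis of Algorithm~\ref{alg:global} with each client replaced by an oriented edge. This is the approach of the known CARP-via-CVRP analyses (e.g.\ Jansen's and the recent tour-partitioning transfers): every route cost decomposes as traversal cost $L(R)$ plus deadhead cost, and every lower bound used for UCVRP has an edge analogue.

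\emph{Key steps, in order.} (1) Show that $\opt_{\mathrm{CARP}}\ge L$ and $\opt_{\mathrm{CARP}}\ge$ the UCVRP-type lower bounds used in the analysis of Theorem~\ref{thm:main}: the radial bound $\sum_e 2d_e\,\mathrm{dist}(r,e)$, the TSP bound, and the configuration/local-search bounds. Each is proved by reading an optimal CARP solution as a set of routes over ``edge-clients'' whose deadheading plus traversal dominates the corresponding quantity. (2) Replace the TSP tour used in tour partitioning by a Rural Postman / Eulerian closed walk covering $E_R$. The weighted graphic matroid parity construction of Zhao and Xiao is applied to the edge-clients, with each client carrying its internal edge, so the tour bound holds with the same constant. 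Partitioning the walk at required edges keeps each edge unsplit. (3) Construct the quota-tree candidate routes on vertex copies where each large-demand edge is an endpoint, choosing the better orientation. Run the same set-replacement and squared-weight local search; the exchange arguments only use the decomposition cost $=$ traversal $+$ connection and the triangle inequality on $c$. (4) Conclude the $5/2$ ratio by repeating the charging of Theorem~\ref{thm:main}, where $L$ is charged once, and note the polynomial running time since the instance size is polynomial in $|E|$.

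\emph{Main obstacle.} I expect step (3) to be hardest: verifying that the endpoint quota-tree construction and the local-search exchange inequalities survive when clients are edges with two possible entry points. Here the ``distance'' between clients is asymmetric in the choice of orientation. My first attempt will be to fix, for each route, an orientation of each served edge induced by the optimal route. I will then show that the quota-tree step optimizes over both copies $e^u,e^w$ with the constraint that at most one is selected, paying $\ell(e)$ extra. This fits the ``creating vertex copies'' mechanism already present for UCVRP. The extra $\ell(e)$ terms sum to at most $L\le\opt$ and are absorbed in the same place the traversal cost appears in the lower bound.
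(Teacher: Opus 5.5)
\textbf{Verdict: genuine gap.} Your outline has the right overall shape: a service tour in place of $R_{\mathrm{TSP}}$, cuts only between whole required edges, adapted quota-tree candidates, then the same packing and local-search charging. But at both places where a required edge differs from a point, the fix you propose does not work.

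\textbf{First gap: the accounting.} You measure radial quantities by $\mathrm{dist}(r,e)=\min_{x\in e}c(r,x)$ and plan to absorb the leftover $\ell(e)$ terms into $L\le\opt$. The $5/2$ argument has no slack for this. It adds $\sum_{R\in\mathcal T^*}(2c(R)-2\rho(R))$ for the tour to $\sum_{R\in\mathcal T^*}(2\rho(R)+c(R)/2)$ for the selected routes and unserved prices, which totals exactly $\frac52\opt$. An additive $L$ therefore only yields $\frac52\opt+L$, and $L$ can be comparable to $\opt$. Such a term does arise, for instance in tour partitioning: a cut after $e_i=u_iv_i$ costs up to $2c(r,v_i)$, which can exceed $2\,\mathrm{dist}(r,e_i)$ by $2\ell(e_i)$. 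The paper avoids any additive term by replacing $2c(r,v)$ everywhere with the singleton-route cost $h_e=c(r,a)+c_e+c(r,b)$, where $c_e=\ell(e)$:
\begin{itemize}
\item prices use $h_e/2$, and $\rho(R)=\frac12\max_{e\in E(R)}h_e$;
\item the radial bound $\sum_e d_eh_e\le\opt$ holds because $h_e\le c(R)$;
\item every cut costs at most $2c(r,v_i)\le h_{e_i}$, since $c(r,v_i)\le c(r,u_i)+c_{e_i}$;
\item the service tour of Zhao and Xiao (their Section~6.4) satisfies $c(R_{\mathrm{RPP}})\le\sum_R(2c(R)-2\rho(R))$ with this $\rho$.
\end{itemize}

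\textbf{Second gap: the candidate construction.} You rightly flag this as the main obstacle, but copies at one endpoint plus an extra $\ell(e)$ per served edge do not handle it.
\begin{itemize}
\item Lemma~\ref{lem:route-score} compares the candidate with a single exceptional route $R$. It needs the averaged candidate cost to be at most $(1+\varepsilon_0/4)c(R)$.
\item With $H=2\rho(R)$, the positive terms in that comparison can be as small as about $H/280$. The small required edges of $R$, however, can have total length a constant fraction of $H$.
\item So an extra $\sum_e\ell(e)$ breaks the score bound, and the global quantity $L$ cannot be used in this per-route comparison.
\item Moreover, Theorem~\ref{thm:quota-tree} only enforces a lower bound on the number of vertices, so it cannot express ``at most one of $e^u,e^w$''.
\end{itemize}

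\textbf{The missing idea} is to make the tree essentially contain the small edges themselves. The paper does this as follows.
\begin{itemize}
\item Each small edge is subdivided by $m_e=\lfloor nd_e/\theta\rfloor$ evenly spaced internal labels, so the quota measures traversed length fractions weighted by demand.
\item A partially covered edge meets the tree in segments ending at internal leaves. A fractional knapsack in decreasing $d_e/c_e$ order trades these segments for whole edges without increasing cost, losing only $O(\theta)$ demand.
\item Each edge in $B$ receives $M=\lceil(m+1)/\theta\rceil$ labels, so completing these edges costs at most $\theta c(P)$.
\item For the big edge $e_0=ab$, candidates ending at $a$ and at $b$ are averaged, which cancels the $c(r,a)$ and $c(r,b)$ terms (Lemmas~\ref{carp:lem:recovery} and~\ref{carp:lem:score}).
\end{itemize}
Without these ingredients or equivalents, your steps (2)--(4) do not give the factor $5/2$.
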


\subsection{Related work}
In unit-demand CVRP, all clients have normalized demand $d_v=1/k$
for an integer $k\ge1$, so every route serves at most $k$ clients.
In \emph{splittable CVRP}, a client's demand may be divided among
several routes. For both variants, classical tour partitioning gives
an $(\alpha+1)$-approximation~\cite{HaimovichRinnooyKan85,BlauthTraubVygen23}.
Blauth, Traub, and Vygen~\cite{BlauthTraubVygen23} improved this
guarantee to $5/2-1/3000$ when $\alpha=3/2$.
The tour construction of Zhao and Xiao~\cite{ZhaoXiao26Stochastic}
gives a $2$-approximation for both variants.

For Euclidean distances, Das and Mathieu~\cite{DasMathieu15} gave
a QPTAS for
unit-demand CVRP in any fixed dimension. For every fixed
$\varepsilon>0$, their algorithm achieves a $(1+\varepsilon)$-approximation
in $n^{\operatorname{poly}(\log n)}$ time.
For unit-demand CVRP in the Euclidean plane with arbitrary capacity,
Sitters~\cite{Sitters25} improved the running time to
$n^{f(\varepsilon)\log\log n}$, where $f$ depends only on $\varepsilon$.
In the Euclidean plane, Friggstad, Grandoni, and
Mousavi~\cite{FriggstadGrandoniMousavi26} obtained a polynomial-time
$(2-\tau)$-approximation for unit-demand CVRP, for an absolute
constant $\tau>10^{-5}$. For UCVRP in the Euclidean plane,
Grandoni, Mathieu, and Zhou~\cite{GrandoniMathieuZhou23} gave a
polynomial-time $(2+\varepsilon)$-approximation for every fixed
$\varepsilon>0$.

In tree metrics, distances are lengths of paths in an edge-weighted
tree. Mathieu and Zhou~\cite{MathieuZhou23PTAS} gave a
PTAS for unit-demand
CVRP on trees with arbitrary capacity: for every fixed
$\varepsilon>0$, the algorithm achieves a $(1+\varepsilon)$-approximation
in polynomial time. For unsplittable demands, Mathieu and
Zhou~\cite{MathieuZhou23} obtained a $(3/2+\varepsilon)$-approximation.
Approximating UCVRP on trees within a factor below $3/2$ is NP-hard,
so this guarantee is tight up to $\varepsilon$ unless \textbf{P=NP}.

Approximation schemes are also known for other graph metrics.
Becker, Klein, and Schild~\cite{BeckerKS19} gave a PTAS for
unit-demand CVRP in planar graph metrics when the capacity $k$ is
fixed. Jayaprakash and Salavatipour~\cite{JayaprakashS23} gave
QPTASes for unit-demand CVRP with unrestricted capacity on graphs
of bounded \emph{treewidth} or \emph{highway dimension}, and in
metrics of bounded \emph{doubling dimension}.
For \emph{graphic metrics}, induced by shortest-path distances in
connected unweighted undirected graphs, M{\"o}mke and
Zhou~\cite{MomkeZ23} obtained a $1.95$-approximation for unit-demand
CVRP when every vertex other than the depot is a client.

In the \emph{fixed-capacity} setting, the vehicle capacity $k$ is
a fixed positive integer and demands are positive integers before
normalization. For UCVRP, this means
$d_v\in\{1/k,2/k,\ldots,1\}$ and hence $|V(R)|\le k$ for every
feasible route $R$. Feasible routes can therefore be enumerated in
$n^{O(k)}$ time. Zhao and Xiao~\cite{ZhaoXiaoFixed26} studied this
setting in general metrics. Using $\alpha=3/2$, they obtained
approximation ratios $5/2-\Theta(1/\sqrt{k})$ for unit-demand and
splittable CVRP, and $5/2+\ln 2-\Theta(1/\sqrt{k})$ for UCVRP.

The \emph{multidepot CVRP} allows several depots and requires the
solution to choose a depot for each route. In the
\emph{fixed-destination} variant, each route returns to the depot
from which it starts. For an arbitrary number of depots in general
metrics, Zhao and Xiao~\cite{ZhaoXiaoMultidepot25} obtained
approximation ratios $4-1/1500$ for the unit-demand and splittable
variants, and $4-1/50000$ for unsplittable demands.
Friggstad and M{\"o}mke~\cite{FriggstadMomke26} obtained a
$3.9365$-approximation for the unit-demand variant using LP rounding.
More recently, Zhao and Xiao~\cite{ZhaoXiao26Stochastic} obtained
approximation ratios of $10/3$ for unit-demand and splittable CVRP,
and $3.792$ for unsplittable demands.

The undirected capacitated arc routing problem (CARP) was introduced
by Golden and Wong~\cite{GoldenWong81}. Eiselt, Gendreau, and
Laporte~\cite{EiseltGL95} surveyed the rural postman problem and its
extensions, including CARP. Jansen~\cite{Jansen93} developed
tour-partitioning algorithms for metric CARP, and
W{\o}hlk~\cite{Wohlk08} later gave a dynamic-programming approach
with the same worst-case guarantees. For general demands, these
algorithms give a capacity-independent approximation ratio of $7/2$.
An approximation-preserving reduction of van Bevern et
al.~\cite{vanBevernHNS14} extends metric CARP guarantees to arbitrary
nonnegative edge costs.

For unit-demand CARP with vehicle capacity $k$, Jansen~\cite{Jansen93}
obtained an approximation ratio of $5/2-3/(2k)$. Zhao and
Xiao~\cite{ZhaoXiaoCARP25} improved this bound to
$5/2-\Theta(1/\sqrt{k})$. More recently, Zhao and
Xiao~\cite{ZhaoXiao26Stochastic} gave a $2$-approximation for
unit-demand CARP, uniformly over all capacities. For general
unsplittable edge demands, they gave a $3$-approximation and stated
a further improvement to $2+\ln2+\varepsilon$ using LP rounding.
Our deterministic polynomial algorithm improves the general-demand
guarantee to $5/2$ and allows arbitrary positive edge demands.

\subsection{Our approach}
We use the \emph{radial lower bound}
$D:=2\sum_{v\in V}d_vc(r,v)\le\opt$.
Our algorithm first selects some disjoint feasible routes. It serves
the remaining clients by partitioning the order of a TSP tour into
consecutive sets of total demand at most one.
Each set becomes a route by adding an edge from and to the depot.
We compute the tour on all clients using the algorithm of Zhao and
Xiao~\cite{ZhaoXiao26Stochastic}. For an optimal solution $\mathcal T^*$,
write $\rho(R)=\max_{v\in V(R)}c(r,v)$ for nonempty $R$ and set
$\rho(\varnothing)=0$. Their construction gives
\[
c(R_{\mathrm{TSP}})\le
\sum_{R\in\mathcal T^*}\bigl(2c(R)-2\rho(R)\bigr).
\]
For a subset $X$ of a tour's clients, the \emph{restricted tour} retains
the initial and final depot visits and the clients of $X$ in their
original order, connecting consecutive retained vertices by direct edges.
Its cost does not increase by the triangle inequality.
After route selection, we restrict $R_{\mathrm{TSP}}$ to the unserved clients.

The $\delta$-tank theorem of Friggstad et
al.~\cite{FriggstadMRS25}, used with $\delta=1/8$,
gives a bound for any client set $X$.
Serve each client $v\in X$ with $d_v>1/2$ by $(r,v,r)$, restrict
the tour to the other clients, and partition the resulting order.
These operations give a cost of at most the TSP cost plus $p(X)$, where
$p(X)=\sum_{v\in X}p_v$ and
\[
p_v=\min\left\{2c(r,v),\,
  \frac{16}{7}\left[d_v+(d_v-\tfrac18)_+\right]c(r,v)\right\},
\qquad (x)_+=\max\{x,0\}.
\]
We call $p_v$ the price of client $v$.
A client is small if $d_v\le1/8$, and big if $d_v>7/16$.
A client is non-small if $d_v>1/8$.
In particular, a small client has price $p_v=(16/7)d_vc(r,v)$.

For a selected route family $\mathcal R$, write $V(\mathcal R)$ for
its served clients and $c(\mathcal R)$ for its total cost.
Completion gives a solution of total cost at most
$c(R_{\mathrm{TSP}})+c(\mathcal R)+p(V\setminus V(\mathcal R))$.
To obtain a $5/2$-approximation, we bound
$c(\mathcal R)+p(V\setminus V(\mathcal R))$ by
$\sum_{R\in\mathcal T^*}(2\rho(R)+c(R)/2)$.
Adding the tour bound then gives the claimed guarantee.
For a route $R$, the saving $s(R):=p(V(R))-c(R)$ is obtained by
serving $V(R)$ on $R$ instead of keeping all their prices in the bound.
The identity $c(\mathcal R)+p(V\setminus V(\mathcal R))
=p(V)-\sum_{R\in\mathcal R}s(R)$ turns route selection into the
problem of obtaining enough total saving.

Set packing selects feasible sets of two, three, or four non-small
clients, each served by a cheapest route.
The squared-weight local-search analysis of Berman~\cite{Berman00},
with weight rounding, gives comparison coefficients
$3/5$, $9/20$, and $9/25$ for sets of size two, three, and four.
We give the algorithm and its proof in
Theorem~\ref{thm:setpacking} and Appendix~\ref{app:proof-setpacking}.
Lemma~\ref{lem:exceptional-range} shows that these coefficients give
the required routewise bound except on routes with exactly one
non-small client.
Call such a route $R$ exceptional if
$\Delta(R):=p(V(R))-2\rho(R)-c(R)/2>0$, in which case its unique
non-small client $v$ satisfies $7/16<d_v<9/16$ and $c(r,v)>0$.

Local search maintains, for each big client $v$, either an empty route
or a route $R_v$ through $v$ and a set $S_v$ of small clients.
The nonempty routes have disjoint client sets.
For clients without a nonempty stored route, set
$R_v=S_v=\varnothing$ and $c(R_v)=s(R_v)=0$.
A replacement for $R_v$ serves $v$ and small clients that belong
to $S_v$ or are unassigned.
To account for this restriction, we use the following score for each
nonempty stored route:
\[
\Phi(R_v)=p_v+\frac35p(S_v)-c(R_v),
\qquad
s(R_v)=\Phi(R_v)+\frac25p(S_v),
\]
with $\Phi(\varnothing)=0$.
Thus the saving exceeds the score by $(2/5)p(S_v)$.

Write $h_v=2c(r,v)$ and fix a rational constant
$0<\varepsilon_0\le10^{-3}$, independent of the input.
When constructing a replacement for $R_v$, use only available small
clients at distance at least $\tau_v=6c(r,v)/7$ from the depot.
Every available small client $u$ with $c(r,u)\ge\tau_v$ satisfies
$p_u=(16/7)d_uc(r,u)\ge(16/7)\tau_vd_u$.
We construct candidates using the endpoint quota-tree theorem of
Chaudhuri et al.~\cite{ChaudhuriGRT03}, in the form stated by
Chekuri et al.~\cite[Theorem~3.3]{ChekuriKP12}.

To make the vertex quota express demand, we replace clients by
labelled copies at their original locations.
We enumerate the small clients of demand greater than
$\varepsilon_0/4$ that a candidate must serve, and use copies
to represent rounded demands of the other clients.
A greedy scan selects whole clients within capacity.
Let $P$ be an $r$--$v$ path, with total edge cost $c(P)$, whose
available small clients above the cutoff have total demand at most $1-d_v$.
Lemma~\ref{lem:demand-recovery} gives a tested route of cost at most
$2(1+\varepsilon_0/4)c(P)$ whose small-client demand is at least the
total demand of those clients minus $\varepsilon_0/4$.
The route is obtained by doubling tree edges outside the $r$--$v$
path, following an Euler trail, and adding $(v,r)$.
It retains the first visit to each selected small client and the
terminal visit to $v$, connecting consecutive retained vertices
directly. Section~\ref{sec:quota} gives the construction.

For an exceptional optimal route $R_v^*$, let $U_v$ be its small
clients currently served by stored routes other than $R_v$.
The two $r$--$v$ paths of $R_v^*$ have total cost $c(R_v^*)$.
Apply demand recovery to both paths and average the resulting score
bounds. For the tested route $R'$ of maximum score, $\Phi(R')$ is at least
this average.
The factor $1+\varepsilon_0/4$ increases the cost bound, while the
loss of up to $\varepsilon_0/4$ demand reduces the price term.
Together they decrease the averaged score bound by at most
$(\varepsilon_0/2)c(R_v^*)$. The cutoff inequality and the capacity
constraint give a positive term of at least $\rho(R_v^*)/140$.
Since $c(R_v^*)<16\rho(R_v^*)/7$, at least
$6\varepsilon_0\rho(R_v^*)$ remains after these losses.
Lemma~\ref{lem:route-score} therefore gives
$\Phi(R')\ge2\Delta(R_v^*)-(3/10)p(U_v)+6\varepsilon_0\rho(R_v^*)$.

Local search accepts $R'$ only if
$\Phi(R')>\Phi(R_v)+\varepsilon_0h_v$.
The normalized sum $\sum_{v:h_v>0}\Phi(R_v)/h_v$ starts at zero,
stays at most $n$, and increases by more than $\varepsilon_0$ at each update.
There are therefore fewer than $n/\varepsilon_0$ updates.
At termination, the final complete scan gives
\[
\Phi(R_v)\ge2\Delta(R_v^*)-\frac3{10}p(U_v)+4\varepsilon_0\rho(R_v^*).
\]
Here the loss from the stopping condition satisfies
$\varepsilon_0h_v\le2\varepsilon_0\rho(R_v^*)$.
Each small client occurs in at most one $U_v$.

After local search, set packing uses the adjusted prices
$p'_v=p_v-s(R_v)$.
The weight of a feasible non-small set $S$ is
$p'(S)-c(R_S)$, where $R_S$ is its cheapest route.
Thus the weight deducts the actual savings of the stored routes
whose big clients belong to $S$.
For selected sets, delete these stored routes and return their small
clients to the set served by tour partitioning.
Keep every stored route whose big client is not selected.
Lemma~\ref{lem:replacement} shows that this operation is accounted
for exactly by the adjusted prices.

The packing bound decreases by the full price reduction on
exceptional routes, and by at least $2/5$ of the reductions on other
routes. Substituting
$s(R_v)=\Phi(R_v)+(2/5)p(S_v)$ gives small-client terms of at least
$(4/25)\sum_v p(S_v)$.
For each exceptional route $R_v^*$, the stored-score bound gives
$\Delta(R_v^*)\le\Phi(R_v)/2+(3/20)p(U_v)$.
The sum of the unavailable-client terms is at most
$(3/20)\sum_v p(S_v)$.
Since $4/25-3/20=1/100$, these losses are covered.
Lemma~\ref{lem:selection} gives the required bound on selected-route
costs and unserved-client prices, and the TSP bound completes the
$5/2$ guarantee.

For fixed $\varepsilon_0$, candidate construction, both local searches,
the tour construction, and the final dynamic program run in polynomial time.

\Needspace{8\baselineskip}
\section{Preliminaries}\label{sec:prelim}
\subsection{Problem and notation}
\begin{definition}[\textsc{Unsplittable Capacitated Vehicle Routing}]\label{def:cvrp}
An instance $(V,r,c,d)$ of \textsc{Unsplittable Capacitated Vehicle Routing} (UCVRP) consists of:
\begin{itemize}[nosep]
    \item a set of clients $V$, where $|V| = n$,
    \item a depot $r$, not in $V$,
    \item symmetric metric travel costs $c : (V \cup \{r\}) \times (V \cup \{r\}) \to \mathbb{R}_{\ge 0}$,
    \item a demand $d_v \in (0, 1]$ for each client $v \in V$.
\end{itemize}
A feasible solution is a collection of routes $\mathcal{T}$ such that
\begin{itemize}[nosep]
    \item every route $R \in \mathcal{T}$ starts and ends at $r$,
    \item every client belongs to exactly one route,
    \item $\sum_{v\in V(R)}d_v\le1$ for all $R\in\mathcal T$, where $V(R)$ is the set of clients served by $R$.
\end{itemize}
The cost $c(R)$ of a route is the sum of its edge costs, counting
traversals with multiplicity. The objective is to minimize
$c(\mathcal T)=\sum_{R\in\mathcal T}c(R)$.
Fix an optimal solution $\mathcal T^*$ and write $\opt=c(\mathcal T^*)$.
\end{definition}

We assume that demands and distances are rational.
We use $R$ for a route, $P$ for a path, and $T$ for a tree;
$c(P)$ and $c(T)$ denote the sums of their edge costs.
For $S\subseteq V$, write $d(S)=\sum_{v\in S}d_v$.
For a nonempty route $R$, write
$\rho(R)=\max_{v\in V(R)}c(r,v)$, and set $\rho(\varnothing)=0$.
For each client $v$, write $h_v=2c(r,v)$.
For a route family $\mathcal R$, write
$V(\mathcal R)=\bigcup_{R\in\mathcal R}V(R)$ and
$c(\mathcal R)=\sum_{R\in\mathcal R}c(R)$.
The family is disjoint if no client belongs to two of its routes.

Split each route at its intermediate depot visits, and retain only one
visit to each client across the resulting routes. Connecting consecutive
retained vertices directly preserves feasibility and does not increase
the total cost, by the triangle inequality.
We therefore assume that each reference route visits each of its
clients once and has no intermediate depot visit.

\begin{definition}\label{def:classes}
Clients in $V_{\mathrm{big}}=\{v\in V:d_v>7/16\}$ are called
\emph{big}, and clients in $V_{\mathrm{small}}=\{v\in V:d_v\le1/8\}$
are called \emph{small}. A client is non-small if $d_v>1/8$.
\end{definition}

For each client $v$ on a feasible route $R$, the two parts of $R$
connecting $r$ and $v$ each cost at least $c(r,v)$.
Thus $2c(r,v)\le c(R)$. Multiplying by $d_v$, summing over the
route, and using $d(V(R))\le1$ gives the first line below.
Summing over $\mathcal T^*$ gives the radial lower bound $D$:
\begin{equation}\label{eq:radial}
\begin{aligned}
2\sum_{v\in V(R)}d_vc(r,v)&\le d(V(R))c(R)\le c(R),\\
D:=2\sum_{v\in V}d_vc(r,v)&\le\sum_{R\in\mathcal T^*}c(R)=\opt.
\end{aligned}
\end{equation}

We use the tour construction of Zhao and Xiao. Its bound uses
the maximum distance to the depot on each reference route.

\begin{theorem}[Zhao and Xiao~{\cite[Lemmas~4, 6, and~7]{ZhaoXiao26Stochastic}}]\label{thm:full-tour}
A deterministic algorithm constructs a TSP tour $R_{\mathrm{TSP}}$
on $V\cup\{r\}$ in $O(n^7)$ time. For every feasible solution
$\mathcal T$ with nonempty routes, it satisfies
\begin{equation}\label{eq:full-tour}
c(R_{\mathrm{TSP}})\le
\sum_{R\in\mathcal T}\bigl(2c(R)-2\rho(R)\bigr).
\end{equation}
\end{theorem}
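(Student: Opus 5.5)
Since Theorem~\ref{thm:full-tour} is imported from Zhao and Xiao, the plan is to reconstruct their argument in three parts: a polynomially solvable combinatorial relaxation, a comparison of that relaxation with an arbitrary feasible solution $\mathcal T$, and a conversion of the relaxation's output into a tour. Note first that existence alone is trivial: concatenating the routes of $\mathcal T$ and shortcutting gives a tour of cost at most $c(\mathcal T)$, and $2c(R)-2\rho(R)\ge c(R)$ for every feasible route $R$. So the real content is finding a tour within this bound deterministically, without knowing $\mathcal T$.

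\emph{Step 1 (per-route witness).} For each route $R\in\mathcal T$, I would show there is an edge set $F_R$ on $V(R)\cup\{r\}$ with $c(F_R)\le c(R)-\rho(R)$. The goal is for $F_R$, together with a set of ``paired'' depot edges, to form a connected structure that becomes Eulerian after doubling its non-paired part.

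Concretely, let $v$ be a client of $R$ with $c(r,v)=\rho(R)$, and let $R$ leave the depot by edge $(r,a)$ and return by edge $(r,b)$. Delete both depot edges to get a path $P_R$ from $a$ to $b$ through $v$. Then split the route's cost as
\[
2c(R)-2\rho(R)=2c(P_R)+\bigl(2c(r,a)+2c(r,b)-2\rho(R)\bigr).
\]
The bracket is nonnegative: each of the two $r$--$v$ halves of $R$ costs at least $\rho(R)$. The depot edges $(r,a),(r,b)$ are the natural \emph{pair} of edges in the matroid-parity formulation. Selecting both edges of a pair connects $P_R$ to $r$. The payment for the pair should be charged against $2c(r,a)+2c(r,b)-2\rho(R)$ rather than $2c(r,a)+2c(r,b)$. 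The saving of $2\rho(R)$ comes from shortcutting: when the doubled walk reaches the far client $v$, it can leave $v$ directly rather than retrace.

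\emph{Step 2 (relaxation and algorithm).} I would define the ground set for weighted graphic matroid parity on $V\cup\{r\}$, using the graph's cycle matroid so that feasible solutions are forests. The elements are pairs of edges: client--client edges that will be doubled, and depot edge pairs $\{(r,a),(r,b)\}$ that will be used once each. The weights are chosen so that the maximum-weight parity set corresponds to a minimum-cost connected spanning structure. Weighted matroid parity for graphic matroids is solvable deterministically in polynomial time, which is where the $O(n^7)$ bound would come from. The witnesses $F_R$ from Step~1, taken over all $R\in\mathcal T$, form one feasible parity solution. Hence the optimum costs at most $\sum_{R}(c(R)-\rho(R))$ in the halved accounting.

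\emph{Step 3 (tour extraction).} Take the optimal parity solution. Double the tree edges. Keep each paired depot edge once; each pair contributes even degree at $r$ and forms, with the tree path between $a$ and $b$, a closed structure. Take an Euler tour and shortcut repeated visits by the triangle inequality. This yields $R_{\mathrm{TSP}}$ with cost at most twice the relaxation value, giving \eqref{eq:full-tour}.

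The main obstacle is Step~1 together with the weight design in Step~2: making the saving of exactly $2\rho(R)$ per route visible to a polynomial relaxation. The first thing I would try is letting the pair associated with the far client $v$ be the two edges of $R$ incident to $v$. Their tree-cost is then replaced by a depot connection that is charged once instead of twice. I would then check, case by case on where $v$ lies relative to $a$ and $b$, that the triangle inequality $c(r,v)\le c(r,a)+c(P_R[a,v])$ yields the $2\rho(R)$ credit.
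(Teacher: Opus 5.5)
\paragraph{Verdict.} Your three-part outline has the same shape as what the paper imports: a parity relaxation, a witness built from $\mathcal T$, and Euler-tour extraction. In Zhao and Xiao, Lemma~6 builds from the partition $\{V(R):R\in\mathcal T\}$ a parity basis of weight at most $\sum_R(2c(R)-2\rho(R))$. Lemma~4 converts any parity basis into a tour of cost at most its weight, and Lemma~7 gives the $O(n^7)$ bound. Those two lemmas are the entire content of the theorem, and your proposal leaves both open, as you acknowledge. The concrete pieces you do specify also fail, in two places.

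\paragraph{The depot-pair charge in Step~1.} You claim $2c(r,a)+2c(r,b)-2\rho(R)\ge0$. This is false: the two $r$--$v$ halves of $R$ each cost at least $\rho(R)$, but the two depot edges alone need not. Put $r$ at $0$ on a line, $a$ and $b$ at $\varepsilon$, and $v$ at $\rho$. Then $c(R)=2\rho$, so the target is $2c(R)-2\rho(R)=2\rho$, while $2c(P_R)=4\rho-4\varepsilon$. Suppose you pay for the client part at double cost and charge the saving to a depot pair whose weight depends only on $(a,b)$. That accounting overshoots the target. The $2\rho(R)$ credit has to come out of the client--client part, as your own inequality $c(r,v)\le c(r,a)+c(P_R[a,v])$ indicates.

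\paragraph{The matroid formulation.} Your formulation is not consistent with the cycle matroid.
\begin{itemize}
\item The witness $P_R\cup\{(r,a),(r,b)\}$ is the cycle $R$, so it is not a forest.
\item Conversely, in any forest containing both $(r,a)$ and $(r,b)$, the only $a$--$b$ path is $a,r,b$. So there is no ``tree path between $a$ and $b$'' to close up with.
\item Using the pair once and doubling everything else leaves $a$ and $b$ with odd degree. Hence Step~3 does not produce an Eulerian multigraph.
\end{itemize}
Restoring parity needs an extra single-use $a$--$b$ connection. If you take the edge $(a,b)$, you get triangle gadgets, and these are too weak. Consider the unit $5$-cycle $r,a,v,w,b$ with its graph metric, where $\rho(R)=2$ and the target is $2\cdot5-2\cdot2=6$. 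Every spanning tree has four edges, split into two pairs. Each pair is charged either a triangle perimeter or a doubled pair, and each such charge costs at least $4$, so the total is at least $8>6$. The structure you actually have in mind uses $P_R$ once between the two depot edges. That is precisely the non-linear saving $-c(P_{ab})$, which pair weights on a forest cannot express.

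\paragraph{What is missing.} You need a concrete ground set of pairs with linear weights, together with two lemmas:
\begin{itemize}
\item an extraction lemma turning every parity basis into a connected Eulerian spanning multigraph of cost at most its weight (the role of Lemma~4);
\item a per-route witness of weight at most $2c(R)-2\rho(R)$ (the role of Lemma~6).
\end{itemize}
Without these, the proposal establishes only the trivial fact that some tour within the bound exists.
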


In their analysis, Lemma~6 uses the partition
$\{V(R):R\in\mathcal T\}$ to construct a \emph{parity basis}
of weight at most the sum in \eqref{eq:full-tour}.
Lemma~4 converts a parity basis into a tour whose cost is at most the
basis weight, and Lemma~7 gives the $O(n^7)$ running time.

\subsection{Tour partitioning and prices}
The $\delta$-tank theorem bounds the cost of partitioning a tour.
Write $(x)_+=\max\{x,0\}$.

\begin{theorem}[Friggstad et al.~{\cite{FriggstadMRS25}}]\label{lem:tank}
Let $X\subseteq V$, let $R_{\mathrm{TSP}}$ be a TSP tour on
$X\cup\{r\}$, and let $0\le\delta\le1/2$.
Its client order can be partitioned into consecutive segments of
demand at most one. Joining each segment to the depot at both ends
gives a route family $\mathcal R$ satisfying
\begin{equation}\label{eq:tank}
c(\mathcal R)\le c(R_{\mathrm{TSP}})
+\frac{2}{1-\delta}\sum_{v\in X}
\bigl[d_v+(d_v-\delta)_+\bigr]c(r,v).
\end{equation}
\end{theorem}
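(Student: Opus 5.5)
This is the $\delta$-tank theorem of Friggstad et al.; I would prove it by the classical averaging argument over cyclic shifts of the partition boundaries, with demands lifted into a ``tank'' model.

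\emph{Construction.} Lay the tour's clients $v_1,\dots,v_m$ on a circle of circumference $d(X)$ in tour order, client $v_i$ occupying an arc of length $d_{v_i}$. Choose a threshold $\theta\in[0,1)$ uniformly at random. Cut the order immediately before each client whose arc contains a point of $\theta+\mathbb Z$, so that a client straddling a cut point starts a new segment. Every segment then has demand at most one, since consecutive cut points are at distance one.

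\emph{Cost accounting.} The segments use the tour edges between consecutive clients of the same segment. Each segment also needs two depot connections, which can be charged to its first client and its last client. The total cost is therefore at most $c(R_{\mathrm{TSP}})$ plus the connection charges, and we bound the expected number of segment ends charged to each client $v$.

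\emph{Refinement via $\delta$.} A plain uniform threshold gives a charge of roughly $2\min(1,d_v)\,c(r,v)$, which is weaker than the claimed bound. The $\delta$-refinement does the following:
\begin{itemize}
\item each client with $d_v\le\delta$ keeps its demand, while each client with $d_v>\delta$ has its demand enlarged to $d_v+(d_v-\delta)$ in the circle model;
\item the threshold spacing is scaled to $1-\delta$, so that segments in the enlarged model have enlarged demand at most $1-\delta$;
\item a segment whose true demand could exceed one is then impossible. Any two clients with $d>\delta$ together contribute their true demand plus at least their excess over $\delta$, so true demand is controlled by enlarged demand plus a slack of $\delta$. A careful check shows that a segment of enlarged length at most $1-\delta$, possibly extended by one straddling item, stays within capacity.
\end{itemize}
The probability that a cut falls in $v$'s enlarged arc is at most $[d_v+(d_v-\delta)_+]/(1-\delta)$. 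Each cut induces one segment end on each side, i.e., two depot connections, and each connection costs at most $c(r,u)$ for the adjacent client $u$. This yields exactly the factor $\frac{2}{1-\delta}[d_v+(d_v-\delta)_+]c(r,v)$.

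\emph{Derandomization.} Only $O(m)$ combinatorially distinct thresholds exist, so some threshold achieves at most the expectation, and the partition can be found deterministically.

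The main obstacle is the feasibility argument in the third bullet: showing that segments defined in the inflated, $(1-\delta)$-scaled model respect true capacity one, including the boundary client that straddles a cut. Since the paper cites this theorem from Friggstad et al.\ without reproving it, I would ultimately defer to their proof and verify only that the statement as used matches theirs.
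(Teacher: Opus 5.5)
There is a genuine gap at the step you flag, and a more careful check cannot close it: the inflated-demand rule does not produce routes of demand at most one. Under your rule, a segment consists of the client $v$ whose arc contains a grid point $g$, followed by the clients whose arcs lie before the next grid point $g+(1-\delta)$. Suppose $d_v>\delta$ and $g$ falls near the right end of $v$'s arc. Then the following clients can occupy inflated length close to $1-\delta$. If they are all small, inflation does not change their demand, so the segment has true demand close to $d_v+1-\delta>1$. For example, take $\delta=1/8$ and a client of demand $1/2$ followed by small clients of total demand just below $7/8$. This gives a segment of demand near $11/8$, for a set of thresholds of positive measure. The same configuration with spacing one shows that your unrefined scheme is also infeasible. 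The claim that segments have demand at most one because consecutive cut points are at distance one ignores the part of the straddler lying before its cut point. Inflating arcs only raises the probability of the single cut before $v$; it never stops $v$ from sharing a route with almost a full load of later clients. Deferring this step to Friggstad et al.\ leaves the statement unproved.

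The missing idea is to \emph{isolate} such clients, working on the line of true cumulative demands rather than on an inflated circle. The paper carries out this argument in the proof of Lemma~\ref{carp:lem:completion}.
\begin{itemize}
\item Let $v_i$ occupy $(a_{i-1},a_i]$, where $a_i$ is the cumulative demand, and place grid points $\omega+j(1-\delta)$ with $\omega$ uniform in $[0,1-\delta)$.
\item If a grid point $b$ in this interval satisfies $b\ge a_i-\delta$, cut after $v_i$. If $b<a_i-\delta$, cut both before and after $v_i$, so that it is served alone.
\item The first cut associated with $b$ is at most $b+\delta$ and the last is at least $b$. Hence each segment between consecutive groups of cuts has demand at most $(b+1-\delta+\delta)-b=1$. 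Here $\delta$ is slack that lets a route overshoot its grid point, not an inflation of demands.
\item A grid point lands in the interval with probability $d_{v_i}/(1-\delta)$, and isolation occurs with probability $(d_{v_i}-\delta)_+/(1-\delta)$. So the expected number of cuts charged to $v_i$ is $[d_{v_i}+(d_{v_i}-\delta)_+]/(1-\delta)$. This is why your inflated arc has the right length, even though it certifies nothing about capacity.
\end{itemize}
Three smaller repairs are also needed.
\begin{itemize}
\item Charge each cut as a whole to one client using $c(u,r)+c(r,v)-c(u,v)\le2c(r,v)$ for the neighbour $u$. Charging each depot connection to its adjacent client puts one of them on $u$ rather than on $v$.
\item Model the tour as a line from depot to depot: grid points of spacing $1-\delta$ do not close up on a circle of circumference $d(X)$.
\item The rule uses $d_{v_i}\le1/2\le1-\delta$, so that an interval contains at most one grid point. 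The paper ensures this by serving clients with $d_v>1/2$ on singleton routes before partitioning.
\end{itemize}
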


For clients with $d_v\le1/2$, we use the charge in
Theorem~\ref{lem:tank} with $\delta=1/8$.
For $d_v>1/2$, the singleton route $(r,v,r)$ costs at most this
charge. We therefore define the client prices by
\begin{equation}\label{eq:price}
p_v=
\begin{cases}
\frac{16}{7}d_vc(r,v),&d_v\le1/8,\\[2pt]
\frac{32d_v-2}{7}c(r,v),&1/8<d_v\le1/2,\\[2pt]
2c(r,v),&d_v>1/2.
\end{cases}
\end{equation}

For any client price vector $p$, write $p(S)=\sum_{v\in S}p_v$.
The coefficients in \eqref{eq:price} lie in $[0,2]$, so
$0\le p_v\le h_v$. The same formula gives $p_v\le2h_vd_v$.

Lemma~\ref{lem:completion} combines these prices with tour partitioning:
given $\varnothing\ne X\subseteq V$ and a TSP tour $R_{\mathrm{TSP}}$ on
$X\cup\{r\}$, it constructs feasible routes of
cost at most $c(R_{\mathrm{TSP}})+p(X)$.

\subsection{Set packing and quota trees}

The route-selection step uses sets of two to four non-small clients.
We use the squared-weight local-search analysis of
Berman~\cite{Berman00}, with a bound that depends on each
comparison set's size. We give the algorithm and proof below.
For a weight function on sets, its value on a family denotes the sum
of its values on the family members.

\begin{theorem}\label{thm:setpacking}
Let $\mathcal F$ be an explicitly listed family of $N$ sets, each of
size two, three, or four, with positive rational weights $w(S)$.
A deterministic algorithm returns a pairwise disjoint subfamily
$\mathcal S\subseteq\mathcal F$ such that, for every pairwise disjoint
subfamily $\mathcal O\subseteq\mathcal F$,
\begin{equation}\label{eq:setpacking-guarantee}
\sum_{T\in\mathcal S}w(T)\ge
\frac9{10}\sum_{S\in\mathcal O}\frac{2w(S)}{|S|+1}.
\end{equation}
The algorithm has runtime $O(N^8)$.
\end{theorem}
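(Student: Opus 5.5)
We will prove Theorem~\ref{thm:setpacking} with Berman's squared-weight local search (SquareImp), run on rounded weights so that it terminates in polynomially many steps. First scale and round: let $W=\max_{S\in\mathcal F}w(S)$, and round each weight down to a multiple of $\eta W$ for a suitable small $\eta$ polynomial in $1/N$. Sets whose rounded weight is zero are discarded; their total contribution is negligible. The algorithm maintains a disjoint subfamily $\mathcal S$. It accepts an improvement $\mathcal A\subseteq\mathcal F$, with $\mathcal A$ pairwise disjoint and $|\mathcal A|\le 2$, if replacing by $\mathcal A$ the sets of $\mathcal S$ meeting $\mathcal A$ strictly increases $\sum_{T\in\mathcal S}w(T)^2$. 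We require this increase to be at least a fixed multiple of $(\eta W)^2$. Restricting to $|\mathcal A|\le2$ keeps each scan at $O(N^2)$ candidate improvements, each checked in $O(N)$ time. Since $\sum w^2\le N W^2$, the number of improvements is bounded by a polynomial in $N$. The exponent~$8$ should come from choosing $\eta$ as an appropriate inverse power of $N$, so that the iteration count times the scan cost is $O(N^8)$.

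For the guarantee, we use Berman's charging argument at a local optimum. Each $S\in\mathcal O$ gets a charge $\frac{2w(S)}{|S|+1}$ in place of the uniform $\frac{2}{k+1}$. Every $T\in\mathcal S$ distributes $w(T)$ among its elements, giving $w(T)$ to each element; since the charge is per element this costs about $|T|\,w(T)$, so the distribution has to be normalized carefully. More precisely, for $S\in\mathcal O$ let $N(S)$ be the sets of $\mathcal S$ meeting $S$. Local optimality for the single-set improvement $\{S\}$ gives $w(S)^2\le\sum_{T\in N(S)}w(T)^2$. Berman's refinement sends from each $T$ to $S$ the amount $\frac{w(T)}{2}$ per shared element if $w(T)\ge w(S)$, and less otherwise. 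One then shows that $S$ collects at least $\frac{w(S)}{|S|+1}\cdot 2$, up to the rounding loss, or else a one- or two-set improvement exists. The two-set case uses claw-freeness: sets $S\in\mathcal O$ that are \emph{underpaid} form a structure in which two of them sharing a $T$ yield an improving pair. Summing over $S$ and bounding each $T$'s total payout by $w(T)$ gives \eqref{eq:setpacking-guarantee} with an extra $(1-O(\eta N))$ factor. Choosing $\eta$ small enough turns this factor into the slack $9/10$.

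The main obstacle is the per-set-size charging. Berman's original proof works with a uniform bound $k$ on set sizes and yields $\frac{2}{k+1}$ for every set. Here each $S$ must receive $\frac{2}{|S|+1}$ even though it sits among neighbours $T$ of varying sizes. I would first try the payment rule under which $T$ pays $S$ the amount $\max\{0,\,w(T)-\tfrac12 w(S)\}$-type quantities. Then I would verify, separately for $|S|=2,3,4$, the inequality $\sum_{T\in N(S)}\text{pay}(T,S)\ge\frac{2w(S)}{|S|+1}$. This uses Cauchy--Schwarz on $w(S)^2\le\sum w(T)^2$ together with the fact that $|N(S)|\le|S|$. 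If the exact constant fails, the $9/10$ slack is there to absorb it, together with the rounding.
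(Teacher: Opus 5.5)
\textbf{There is a genuine gap: two of them.} The first undermines the whole charging strategy.

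\textbf{Wrong objective for the local search.} You run the squared-weight local search on $w$ itself and then try to show that each $S\in\mathcal O$ collects $2w(S)/(|S|+1)$. No payment rule can achieve this, because the inequality fails at $w$-local optima even when exchanges of up to four sets are allowed.

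Take $T=\{t_1,\dots,t_4\}$ with $w(T)=1$, and $S_i=\{t_i,x_i\}$ with $w(S_i)=0.49$ for $i=1,\dots,4$. Since $4\cdot0.49^2<1$, no exchange improves $\{T\}$. Yet $\frac9{10}\sum_i\frac{2w(S_i)}{3}\approx1.18>1=w(T)$.

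The missing idea is to run the local search on the rescaled weights $q(S)=2w(S)/(|S|+1)$. Round them to $z(S)=\lfloor q(S)/a\rfloor$ with $a=Q/(10N)$ and $Q=\max q$. Berman's charging then gives $\sum_{S\in\mathcal O}z(S)\le\sum_{T\in\mathcal S}\frac{|T|+1}{2}z(T)$. Each selected $T$ pays $z(T)/2$ to each of the at most $|T|$ members of $\mathcal O$ it meets, plus at most $z(T)/2$ in positive charges. So the size factor attaches to the \emph{selected} set $T$, and $\frac{|T|+1}{2}q(T)=w(T)$ converts it back exactly. The ``per-set-size'' obstacle you identify is therefore removed by the rescaling, not by a case analysis over $|S|=2,3,4$. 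The factor $9/10$ only absorbs the rounding loss $Na=Q/10\le q(\mathcal O^*)/10$; it cannot absorb a constant-factor failure of the charging.

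\textbf{Exchanges of two sets are too small.} Give each $S\in\mathcal O$ the charge $\operatorname{ch}(S)=z(S)-\frac12\sum_{T\in N(S)}z(T)$, and assign it to a neighbour $t(S)$ of largest $z$. The positively charged sets assigned to one $T$ are disjoint and each meets $T$, so there can be up to $|T|\le4$ of them.

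The bound $\sum\operatorname{ch}(S)\le z(T)/2$ over these sets comes from the non-improvement of exchanging \emph{all of them at once}. It uses $z(S)^2-\sum_{U\in N(S)\setminus\{T\}}z(U)^2\ge2z(T)\operatorname{ch}(S)$. This matches Berman's algorithm for $d$-claw-free graphs, which tests claws with up to $d-1$ talons; here $d=5$.

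Your claim that two underpaid sets sharing a $T$ yield an improving pair is false for squared weights. In the example above with $w(S_i)=0.7$, $\{T\}$ admits no improving single or pair, since $2\cdot0.7^2<1$. Yet the right-hand side of the guarantee is $\frac9{10}\cdot4\cdot\frac23\cdot0.7=1.68>1$. The single-set condition with Cauchy--Schwarz only gives $w(S)\le\sum_{T\in N(S)}w(T)$, which yields ratio $k$.

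Testing families of up to four disjoint sets is also the source of the $O(N^8)$ bound. There are $O(N^4)$ families per scan and $O(N)$ work to evaluate each. Since $z\le10N$, the integer potential bounds the number of accepted exchanges by $N(10N)^2$. This is not a matter of tuning $\eta$.
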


If $N=0$, return the empty family. Otherwise, set
$q(S)=2w(S)/(|S|+1)$, let $Q=\max_{S\in\mathcal F}q(S)$, and use
the rounding unit $a=Q/(10N)$. Set $z(S)=\lfloor q(S)/a\rfloor$ and
discard sets with $z(S)=0$. Every remaining rounded weight $z(S)$ is an integer
between $1$ and $10N$.

Start with $\mathcal S=\varnothing$. For a family $\mathcal C$ of at
most four pairwise disjoint remaining sets, define
$N_{\mathcal S}(\mathcal C)=
\{T\in\mathcal S:\exists S\in\mathcal C,
T\cap S\ne\varnothing\}$.
If $\sum_{S\in\mathcal C}z(S)^2>
\sum_{T\in N_{\mathcal S}(\mathcal C)}z(T)^2$, replace
$\mathcal S$ by
$(\mathcal S\setminus N_{\mathcal S}(\mathcal C))\cup\mathcal C$.
Enumerate the families $\mathcal C$ in a fixed order and restart the
scan after every accepted exchange. Stop when a complete scan accepts
none. Members common to $\mathcal S$ and $\mathcal C$ contribute the
same amount on both sides and can be omitted from the enumeration.
The replacement preserves pairwise disjointness.

The proof in Appendix~\ref{app:proof-setpacking} bounds the number of
exchanges and proves \eqref{eq:setpacking-guarantee}. For each selected
set $T\in\mathcal S$, it uses the fact that $T$ intersects at most
$|T|$ sets in any pairwise disjoint family.

Local search requires routes through one big client and a set of small
clients. We construct a tree containing the depot and the big client.
To control the demand represented in the tree, we encode rounded client
demands by labelled vertex copies and impose a lower bound on the number
of tree vertices. We then select clients represented in the tree using
their original demands to satisfy the capacity constraint, and convert
the tree into a route. Section~\ref{sec:quota} gives this construction.
The following theorem bounds the tree's cost by the cost of a path
with the same endpoints and vertex quota.
\begin{theorem}[Chaudhuri et al.~{\cite{ChaudhuriGRT03}};
Chekuri et al.~{\cite[Theorem~3.3]{ChekuriKP12}}]\label{thm:quota-tree}
Let $G=(V,E)$ be a complete undirected graph with $N=|V|$ and
nonnegative edge costs satisfying the triangle inequality.
Distinct vertices may have zero distance.
Fix two distinct vertices $r,v\in V$, an integer quota $2\le k\le N$,
and $0<\varepsilon\le1$.
Let $L$ denote the minimum cost among $r$--$v$ paths with $k$ vertices.
A deterministic algorithm with runtime $N^{O(1/\varepsilon)}$ returns
a tree $T$ containing $r$, $v$, and at least $k$ vertices in total,
with $c(T)\le(1+\varepsilon)L$.
\end{theorem}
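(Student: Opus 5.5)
This theorem is used as a black box from Chaudhuri et al.~\cite{ChaudhuriGRT03}, in the form of Chekuri et al.~\cite[Theorem~3.3]{ChekuriKP12}. In the paper I would only check that the hypotheses match: a complete metric graph in which distinct vertices may be at distance zero, two distinct terminals, an integer quota $2\le k\le N$, and a fixed $0<\varepsilon\le1$. If I had to re-derive it, I would follow the original argument, which combines a guessing phase with a Lagrangian-relaxed primal--dual algorithm for prize-collecting Steiner tree. The target is a tree whose cost is compared with the cheapest $k$-vertex $r$--$v$ path $P^*$, and not with a tree. This is what makes a ratio of $1+\varepsilon$ possible, whereas $k$-MST itself only admits a ratio of $2$.

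\emph{Step 1 (guessing).} Let $L=c(P^*)$. Walk along $P^*$ from $r$ and mark $O(1/\varepsilon)$ vertices $r=x_0,x_1,\dots,x_m=v$. Choose them so that each consecutive segment of $P^*$ between $x_{i-1}$ and $x_i$ has cost at most roughly $\varepsilon L$, or is a single heavy edge. Enumerate all candidate sequences of marked vertices. This accounts for the $N^{O(1/\varepsilon)}$ running time. Distances are all rational and at most $N$ times the largest edge cost, so $L$ can be guessed among polynomially many values, or avoided altogether by a scaling argument.

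\emph{Step 2 (primal--dual).} For a fixed guess, run the Goemans--Williamson prize-collecting Steiner tree algorithm rooted at the contracted spine $x_0,\dots,x_m$. Each vertex gets a uniform penalty $\lambda$, and the algorithm also keeps the $x_i$ connected. The key property is Lagrangian-preserving: the tree cost plus twice the penalty of the uncovered vertices is at most twice the dual. For a path, the doubling can be charged only to the portions off the spine, which are the short segments of $P^*$. This gives a loss of order $\varepsilon L$ rather than a factor of $2$.

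\emph{Step 3 (combining $\lambda$-values), the main obstacle.} Binary search on $\lambda$ to find two adjacent values whose trees cover fewer and more than $k$ vertices respectively. Then merge the two trees, following Garg's technique, to obtain a tree with at least $k$ vertices. The difficulty is bounding the cost of this merge by $\varepsilon L$. This is where the choice of segment length in the guess matters. I would first try to show that each merge step only attaches components lying within distance $\varepsilon L$ of the spine, and then pick the constants in Step 1 accordingly.
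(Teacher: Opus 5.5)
\textbf{Gap: the zero-distance case.} Citing Chaudhuri et al.\ and Chekuri et al.\ is also what the paper does. However, the only argument the paper supplies for this statement is the part you wave through. You list ``distinct vertices may be at distance zero'' as a hypothesis of the cited theorem to be checked off, with no justification. The paper does not take this for granted: the statement deliberately extends the cited metric result to pseudometrics, and the paper proves that extension.

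This is not cosmetic. The copy graph of Section~\ref{sec:quota} places all copies of a client at distance zero. Moreover, $L=0$ can occur, and then a multiplicative guarantee demands a tree of cost exactly zero. Guess-and-scale arguments, including your own Step~1 (guessing $L$), do not handle this case as stated. Your Steps~1--3 are not needed for the paper, and as written they are not a proof either, since the merge bound in Step~3 is left open.

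\textbf{The missing reduction.} The missing step is a short reduction to positive distances.
\begin{itemize}
\item If the zero-distance class of $r$ contains $v$ and at least $k$ vertices, return a zero-cost tree on that class.
\item Otherwise let $a$ be the minimum positive edge cost. Then $L\ge a$. Either $c(r,v)>0$, or an optimal path, having $k$ vertices, contains some $w$ with $c(r,w)>0$; in both cases the triangle inequality bounds $L$ below by that distance.
\item Add $\varepsilon a/(4N)$ to every edge between distinct vertices. The triangle inequality survives, because the constant appears once on the left and twice on the right. All distinct vertices are now at positive distance.
\item Apply the cited theorem with error $\varepsilon/2$. An optimal path has at most $N-1$ edges, so its perturbed cost is at most $L+\varepsilon a/4\le(1+\varepsilon/4)L$.
\item The perturbation only increases costs, so the returned tree has original cost at most $(1+\varepsilon/2)(1+\varepsilon/4)L\le(1+\varepsilon)L$. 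The running time remains $N^{O(1/\varepsilon)}$.
\end{itemize}
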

If the vertices at zero distance from $r$ include $v$ and at least
$k$ vertices, return a tree of cost zero on these vertices.
Otherwise, let $a$ be the minimum positive edge cost, increase each
edge cost between distinct vertices by $\varepsilon a/(4N)$, which
preserves the triangle inequality, and apply the theorem with error
$\varepsilon/2$. A minimum-cost $r$--$v$ path with $k$ vertices
has cost $L\ge a$ and at most $N-1$ edges.
The original cost of the returned tree is therefore at most
$(1+\varepsilon/2)(1+\varepsilon/4)L\le(1+\varepsilon)L$.

\Needspace{6\baselineskip}
\section{The algorithm}\label{sec:global}
\subsection{Main algorithm}

Let $R_{\mathrm{TSP}}$ be the tour on $V\cup\{r\}$ returned by
Theorem~\ref{thm:full-tour}. Let $\mathcal R$ be a feasible partial
route family with pairwise disjoint client sets, and put
$X=V\setminus V(\mathcal R)$. Applying Lemma~\ref{lem:completion}
to the restricted tour on $X$ gives total cost at most
$c(R_{\mathrm{TSP}})+c(\mathcal R)+p(X)$.
Define the saving of a route as $s(R)=p(V(R))-c(R)$, using the fixed
prices in \eqref{eq:price}. It therefore suffices to construct
$\mathcal R$ such that
\begin{equation}\label{eq:selection-goal}
c(\mathcal R)+p(X)\le
\sum_{R\in\mathcal T^*}\left[2\rho(R)+\frac12c(R)\right].
\end{equation}
Theorem~\ref{thm:full-tour} then gives a total cost of at most
$(5/2)\opt$.

For each $v\in V_{\mathrm{big}}$, local search maintains either a
feasible route $R_v$ through $v$ whose other clients are small,
or the empty route. Write $S_v=V(R_v)\setminus\{v\}$ for its small-client
set. We use $V(\varnothing)=\varnothing$ and
$c(\varnothing)=s(\varnothing)=0$; for $v\notin V_{\mathrm{big}}$,
set $R_v=S_v=\varnothing$. Local search compares routes by the score
\begin{equation}\label{eq:score}
\Phi(R)=s(R)-(1-\lambda)p(V(R)\cap V_{\mathrm{small}}),
\qquad \lambda=\frac35,\qquad \Phi(\varnothing)=0.
\end{equation}
Thus a nonempty stored route has score
$\Phi(R_v)=p_v+\lambda p(S_v)-c(R_v)$.
We use the fixed internal accuracy $\varepsilon_0$ from the introduction.
All visit orders and choices among tied candidates use fixed orders.

For a big client $v$ with $h_v=2c(r,v)>0$, let
$A\subseteq V_{\mathrm{small}}$ be the set of small clients not served
by other current routes. On input $(v,A)$,
Algorithm~\ref{alg:quota} uses only clients at distance at least
$6c(r,v)/7$ from the depot and returns a feasible route serving $v$.
Section~\ref{sec:quota} gives the construction,
Lemma~\ref{lem:route-score} bounds the returned route's score, and
Appendix~\ref{sec:runtime} analyzes the running time.
A replacement is accepted only if its score exceeds the stored score
by more than $\varepsilon_0h_v$.

Lemma~\ref{lem:local} guarantees client-disjoint stored routes,
nonnegative scores, and $0\le s(R_v)\le p_v$.
It also bounds the number of accepted updates by fewer than $n/\varepsilon_0$.
After local search, $R_v,S_v$ denote its final stored routes and
client sets. These values, and their savings, remain fixed during
the set replacements, including for routes omitted from the solution.
Set packing uses the adjusted prices $p'_v=p_v-s(R_v)$,
which satisfy $0\le p'_v\le p_v$ by the lemma. The saving $s$ and score
$\Phi$ continue to use the original prices $p$.

For a feasible set $S$ of two to four non-small clients, $R_S$ denotes
a cheapest route visiting exactly $S$. The set weight $w(S)$ deducts
the savings of the stored routes replaced by $R_S$.
Section~\ref{subsec:packing-construction} defines these weights and the
resulting route family. Section~\ref{subsec:dp} gives the recurrence
for the final partition.

\begin{algorithm}[!htb]
\caption{$5/2$-Approximation for Unsplittable CVRP}
\label{alg:global}
\begin{algorithmic}[1]
\Require Instance $(V,r,c,d)$.
\Ensure A feasible route family serving every client exactly once,
with total cost at most $(5/2)\opt$.
\State Compute $D=2\sum_{v\in V}d_vc(r,v)$
\If{$|V|\le1$ or $D=0$}
  \State \Return singleton routes for all clients
\EndIf
\State Apply Theorem~\ref{thm:full-tour} to obtain $R_{\mathrm{TSP}}$ on $V\cup\{r\}$
\State Compute the fixed prices $p$ by \eqref{eq:price}
\State Set $R_v\gets(r,v,r)$ if $d_v>1/2$ and $h_v>0$; otherwise $R_v\gets\varnothing$
\State Set $S_v\gets\varnothing$ for every $v\in V$
\Statex \textit{Local search}
\Repeat
  \State $\mathrm{changed}\gets\mathrm{false}$
  \For{each $v\in V_{\mathrm{big}}$ with $h_v>0$, in a fixed order}
    \State $A\gets V_{\mathrm{small}}\setminus\bigcup_{u\ne v}S_u$
    \State $R'\gets$ Algorithm~\ref{alg:quota} on $(v,A)$
    \If{$\Phi(R')>\Phi(R_v)+\varepsilon_0h_v$}
      \State $R_v\gets R'$; $S_v\gets V(R')\setminus\{v\}$; $\mathrm{changed}\gets\mathrm{true}$
      \State \textbf{break} the for-loop
    \EndIf
  \EndFor
\Until{$\mathrm{changed}=\mathrm{false}$}
\State Keep these stored $R_v,S_v$ fixed; set $p'_v\gets p_v-s(R_v)$ for every $v$
\Statex \textit{Set replacements}
\ForAll{$S\subseteq V\setminus V_{\mathrm{small}}$ with $2\le|S|\le4$ and $d(S)\le1$}
  \State Compute $R_S$ by enumerating visit orders, and $w(S)$ by \eqref{eq:packing-weight}
\EndFor
\State Apply Theorem~\ref{thm:setpacking} to the sets with $w(S)>0$, obtaining $\mathcal S$
\State $\mathcal R_{\mathrm{pack}}\gets\{R_S:S\in\mathcal S\}$
\State $\mathcal R\gets\mathcal R_{\mathrm{pack}}\cup
 \{R_v:R_v\ne\varnothing,\ v\notin V(\mathcal R_{\mathrm{pack}})\}$
\State $X\gets V\setminus V(\mathcal R)$; $\mathcal T\gets\mathcal R$
\If{$X\ne\varnothing$}
  \State Form the restricted tour of $R_{\mathrm{TSP}}$ on $X\cup\{r\}$
  \State Compute a minimum-cost feasible consecutive partition by \eqref{eq:dp}
  \State Backtrack and add the resulting routes to $\mathcal T$
\EndIf
\State \Return $\mathcal T$
\end{algorithmic}
\end{algorithm}

\subsection{Local search}\label{sec:local}
This stage is reached only when $n\ge2$ and $D>0$.
When testing a replacement for $R_v$, define the available set
$A=V_{\mathrm{small}}\setminus\bigcup_{u\in V\setminus\{v\}}S_u$
from the current local-search state. It contains $S_v$ and all unassigned
small clients.
We bound the number of updates using
$\sum_{v:h_v>0}\Phi(R_v)/h_v$, which increases by more than
$\varepsilon_0$ at each accepted update.
Figure~\ref{fig:local-search} shows a replacement of $R_v$.

\begin{figure}[!t]
\centering
% Trim the standalone caption; use the numbered caption below.
\includegraphics[trim=0 29bp 0 0,clip]{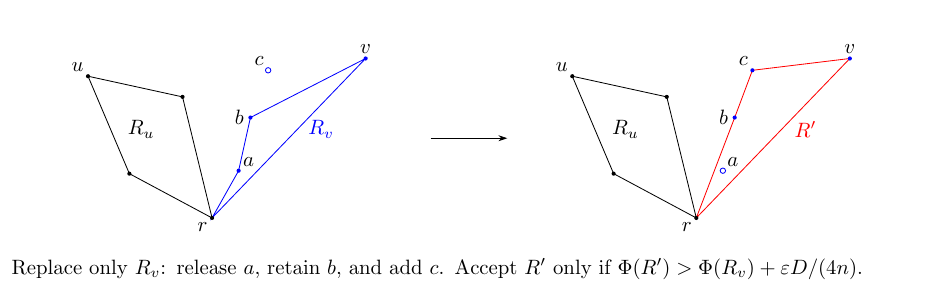}
\caption{Replace only $R_v$: release $a$, retain $b$, and add $c$.
Accept $R'$ only if $\Phi(R')>\Phi(R_v)+\varepsilon_0h_v$.}
\label{fig:local-search}
\end{figure}

The next lemma ensures that the stored routes remain feasible and
client-disjoint, and that the adjusted prices are nonnegative. It also
bounds the number of updates and gives the stopping condition used in
the cost analysis.
\begin{lemma}\label{lem:local}

Suppose $n\ge2$ and $D>0$.
The local-search stage of Algorithm~\ref{alg:global} satisfies the
following properties.
\begin{enumerate}
\item[(i)] At every stage, the nonempty routes $R_v$ are feasible and
have pairwise disjoint client sets. Each serves its big client $v$ and
a set $S_v$ of small clients. For every $v\in V$, we have
$0\le\Phi(R_v)\le s(R_v)\le p_v\le h_v$.
Every client $v$ with $d_v>1/2$ and $h_v>0$ has a nonempty
stored route throughout local search.
\item[(ii)] Local search terminates after fewer than
$n/\varepsilon_0$ accepted updates.
In the final complete scan, which accepts no update, let $R'_v$ be the
candidate returned by Algorithm~\ref{alg:quota} for each
$v\in V_{\mathrm{big}}$ with $h_v>0$.
Its available set is
$V_{\mathrm{small}}\setminus\bigcup_{u\ne v}S_u$,
computed from the final stored routes, and
\[
\Phi(R'_v)\le\Phi(R_v)+\varepsilon_0h_v.
\]
\end{enumerate}

\end{lemma}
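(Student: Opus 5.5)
Part (i) goes by induction over the accepted updates, and part (ii) by a potential argument on the normalized score sum.

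\emph{Base case.} Initially $R_v=(r,v,r)$ exactly when $d_v>1/2$ and $h_v>0$, and $S_v=\varnothing$ otherwise. The singleton route is feasible and these routes are client-disjoint. For such $v$ we have $p_v=2c(r,v)=h_v$ and $c(R_v)=2c(r,v)$, so $\Phi(R_v)=s(R_v)=0$. Empty routes have $\Phi=s=0$. The chain $0\le\Phi(R_v)\le s(R_v)\le p_v\le h_v$ then follows from $0\le p_v\le h_v$, noted after \eqref{eq:price}.

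\emph{Inductive step: feasibility and disjointness.} Suppose an update replaces $R_v$ by $R'$, the output of Algorithm~\ref{alg:quota} on $(v,A)$. I will use the properties of that algorithm from Section~\ref{sec:quota}: it returns a feasible route serving $v$ and a subset of $A$. Since $A$ excludes $\bigcup_{u\ne v}S_u$, the new $S_v$ is disjoint from every other stored set. Big clients appear only as the endpoint clients of their own routes, so the new route meets no other stored route, and disjointness is preserved.

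\emph{Inductive step: the inequalities.} Acceptance requires $\Phi(R')>\Phi(R_v)+\varepsilon_0h_v\ge0$, so scores stay nonnegative, and the new route is nonempty. The identity $s(R)=\Phi(R)+(2/5)p(S_v)$ gives $\Phi\le s$. For $s(R_v)\le p_v$, note that a route through $v$ and small clients $u\in S_v$ at distance at least $\tau_v$ costs at least $2\max_{u\in V(R_v)}c(r,u)$. Every small client satisfies $p_u\le(16/7)d_u c(r,u)$, and $d(S_v)\le1-d_v<9/16$. Hence
\[
p(S_v)\le \tfrac{16}{7}\cdot\tfrac{9}{16}\max_u c(r,u)\le 2\max_u c(r,u)-\ldots
\]
This step is the main obstacle. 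The direct bound gives $p(S_v)\le (9/7)\rho$ against cost at least $2\rho$, while $p_v\le h_v\le 2\rho$. So I will argue more carefully that $c(R_v)\ge p(S_v)$. By the radial argument \eqref{eq:radial} applied to $R_v$, $c(R_v)\ge 2\sum_{u\in V(R_v)}d_uc(r,u)/d(V(R_v))\ge2\sum_{u\in S_v}d_uc(r,u)$, which is at least $(7/8)p(S_v)$. If this is not quite enough, I will fall back on the bound $c(R_v)\ge 2\rho(R_v)\ge$ both $h_v$ and $2c(r,u)$, and combine it with $p(S_v)\le(16/7)(1-d_v)\rho<(9/7)\rho$. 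This yields $p(S_v)\le c(R_v)$, hence $s(R_v)=p_v+p(S_v)-c(R_v)\le p_v$.

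\emph{Persistence of nonempty routes.} If $d_v>1/2$ and $h_v>0$, the route $R_v$ starts nonempty and is only ever replaced by nonempty routes, so it stays nonempty throughout.

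\emph{Part (ii).} Consider the potential $\Psi=\sum_{v:h_v>0}\Phi(R_v)/h_v$. Each term lies in $[0,1]$ by (i), and the sum is over at most $n$ clients, so $0\le\Psi\le n$. Initially $\Psi=0$, since every initial score is zero. An accepted update changes only $R_v$ and raises $\Phi(R_v)/h_v$ by more than $\varepsilon_0$. The other stored routes are untouched, so their terms are unchanged. Hence there are fewer than $n/\varepsilon_0$ updates, after which the repeat loop performs a complete scan with $\mathrm{changed}=\mathrm{false}$.

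\emph{Stopping condition.} In that final scan no break occurs, so the stored routes are fixed throughout it. Each tested $v$ therefore sees the available set $V_{\mathrm{small}}\setminus\bigcup_{u\ne v}S_u$ computed from the final routes. Rejection of its candidate $R'_v$ means exactly $\Phi(R'_v)\le\Phi(R_v)+\varepsilon_0h_v$.
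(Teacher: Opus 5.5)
Your proof is correct and follows the paper's argument. Part (i) goes by induction over accepted updates, with $s(R_v)\le p_v$ reducing to $p(S_v)\le\frac{16}{7}d(S_v)\rho(R_v)<\frac97\rho(R_v)<2\rho(R_v)\le c(R_v)$ (the paper writes this as $p(S_v)<\frac{9}{14}c(R_v)$ via $h_u\le c(R_v)$), and part (ii) uses the potential $\sum_{v:h_v>0}\Phi(R_v)/h_v\in[0,n]$. That step was never an obstacle, so drop the unfinished display and the radial detour (which only gives $c(R_v)\ge\frac78p(S_v)$) and state the fallback bound directly.
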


The proof is given in Appendix~\ref{app:proof-local}.

\subsection{Set replacements}\label{subsec:packing-construction}

After local search, fix the stored routes $R_v$ and their small-client
sets $S_v$, and set $p'_v=p_v-s(R_v)$ for every $v\in V$.
Lemma~\ref{lem:local} and \eqref{eq:price} give
$0\le p'_v\le p_v\le2c(r,v)$.

Enumerate every set $S\subseteq V\setminus V_{\mathrm{small}}$ with
$2\le|S|\le4$ and $d(S)\le1$.
Compute a cheapest depot route $R_S$ visiting exactly $S$ by checking
its at most $24$ visit orders. Give $S$ the weight
\begin{equation}\label{eq:packing-weight}
w(S)=p(S)-c(R_S)-\sum_{v\in S}s(R_v)=p'(S)-c(R_S).
\end{equation}
Discard sets with nonpositive weight. Apply
Theorem~\ref{thm:setpacking} to the remaining family to obtain a
pairwise disjoint family $\mathcal S$, and let
$\mathcal R_{\mathrm{pack}}=\{R_S:S\in\mathcal S\}$.
Section~\ref{sec:packing} bounds its cost together with the adjusted
prices of the clients it does not serve.

To form $\mathcal R$, start with
$\mathcal R_{\mathrm{pack}}$ and retain every nonempty stored route
$R_v$ whose big client $v$ is not served by
$\mathcal R_{\mathrm{pack}}$.
The unserved set $X=V\setminus V(\mathcal R)$ includes the small
clients on every omitted stored route. Section~\ref{subsec:dp} serves
all clients in $X$.

Each packing route contains only non-small clients, so it can share
a client with $R_v$ only at $v$. The replacement rule omits every stored route
with such an intersection. The retained stored routes are already
client-disjoint, so $\mathcal R$ is feasible and client-disjoint.
Lemma~\ref{lem:replacement} proves the exact cost identity for this
replacement step using the adjusted prices $p'$.

\subsection{Tour partitioning}\label{subsec:dp}

If $X\ne\varnothing$, form the restricted
tour of $R_{\mathrm{TSP}}$ on $X\cup\{r\}$. Write its order
as $(r,v_1,\ldots,v_\ell,r)$, where $\ell=|X|$.

Let $q[i]$ be the minimum cost of partitioning $v_1,\ldots,v_i$
into consecutive feasible routes, with $q[0]=0$.
The recurrence is
\begin{equation}\label{eq:dp}
q[i]=
\min_{\substack{0\le j<i\\\sum_{t=j+1}^{i}d_{v_t}\le1}}
\left\{q[j]+c(r,v_{j+1})
+\sum_{t=j+1}^{i-1}c(v_t,v_{t+1})+c(v_i,r)\right\}.
\end{equation}
A minimizing predecessor for each $i$ determines the returned routes.
Prefix sums of demands and edge costs give $O(\ell^2)$ time and $O(\ell)$ space.
Figure~\ref{fig:tsp-partition} shows a consecutive partition and its routes.

\begin{figure}[!t]
\centering
% Trim the standalone caption; use the numbered caption below.
\includegraphics[trim=0 29bp 0 0,clip]{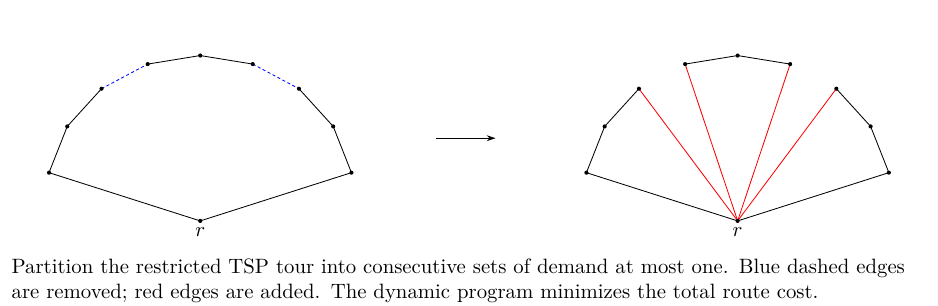}
\caption{Partition the restricted TSP tour into consecutive sets of demand
at most one. Blue dashed edges are removed; red edges are added.
The dynamic program minimizes the total route cost.}
\label{fig:tsp-partition}
\end{figure}

The following lemma combines singleton routes for clients with
$d_v>1/2$ and tour partitioning to obtain the general cost bound
defined by the client prices. Claim~\ref{clm:residual-dp} then shows
that for the unserved set $X$ in Algorithm~\ref{alg:global}, the
dynamic program~\eqref{eq:dp} achieves this bound directly without a
separate singleton-route step.
\begin{lemma}\label{lem:completion}
Let $\varnothing\ne X\subseteq V$, and let $R_{\mathrm{TSP}}$ be a given
TSP tour on $X\cup\{r\}$.
A feasible route family serving every client in $X$ exactly once,
with total cost at most $c(R_{\mathrm{TSP}})+p(X)$, can be computed in
$O(|X|^2)$ time and $O(|X|)$ additional storage.
\end{lemma}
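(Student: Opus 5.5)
Split $X$ into $X_{>}=\{v\in X:d_v>1/2\}$ and $X_{\le}=X\setminus X_{>}$. Serve each $v\in X_{>}$ by the singleton route $(r,v,r)$, of cost $2c(r,v)=p_v$ by \eqref{eq:price}. If $X_{\le}\ne\varnothing$, form the restricted tour of $R_{\mathrm{TSP}}$ on $X_{\le}\cup\{r\}$, whose cost is at most $c(R_{\mathrm{TSP}})$ by the triangle inequality. Then apply Theorem~\ref{lem:tank} with $\delta=1/8$ to obtain consecutive segments of demand at most one whose joined routes cost at most
\[
c(R_{\mathrm{TSP}})+\frac{16}{7}\sum_{v\in X_{\le}}\bigl[d_v+(d_v-\tfrac18)_+\bigr]c(r,v).
\]
For $d_v\le1/8$ the bracket is $d_v$, giving $(16/7)d_vc(r,v)=p_v$. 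For $1/8<d_v\le1/2$ the bracket is $2d_v-1/8$, giving $\frac{32d_v-2}{7}c(r,v)=p_v$. Summing the two parts gives total cost at most $c(R_{\mathrm{TSP}})+p(X)$. The routes serve every client of $X$ exactly once, and each is feasible because each segment has demand at most one and each singleton has $d_v\le1$.

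For the algorithmic claim, Theorem~\ref{lem:tank} is existential, so I would not rely on computing its specific partition. Instead, run the dynamic program \eqref{eq:dp} on the restricted tour of $X_{\le}$. Its minimum over all feasible consecutive partitions is at most the cost of the partition guaranteed by the theorem. I would justify that \eqref{eq:dp} computes exactly the minimum cost of a consecutive partition by induction on $i$: any optimal partition of $v_1,\dots,v_i$ ends with a last segment $v_{j+1},\dots,v_i$ of demand at most one, and every term in \eqref{eq:dp} is realized by a feasible partition. Prefix sums of $d$ and of consecutive edge costs make each transition $O(1)$, so the table costs $O(|X|^2)$ time. Storing $q$, predecessors and prefix sums uses $O(|X|)$ space. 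Forming the restricted tour and the singletons takes $O(|X|)$ time, assuming the tour is given as a list; I would filter it in one pass.

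The only delicate point is matching the price formula exactly to the tank-theorem charge in the middle demand range and confirming that the threshold $1/2$ sits inside the theorem's scope. Since $\delta=1/8\le1/2$, the theorem applies to any subset, including $X_{\le}$. I expect no real obstacle beyond checking that the dynamic program's optimum dominates the existential partition, and that restricting to $X_{\le}$ leaves the tour-cost term at $c(R_{\mathrm{TSP}})$. The case $X_{\le}=\varnothing$ is trivial, since then the singletons alone cost $p(X)$.
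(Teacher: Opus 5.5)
Your proposal is correct and follows essentially the same argument as the paper: serve each client with $d_v>1/2$ by a singleton route of cost $p_v$, apply the $\delta=1/8$ tank theorem to the restricted tour on the remaining clients (with the charge matching $p_v$ exactly), and use the dynamic program \eqref{eq:dp} to realize a partition at least as cheap as the existential one, within the stated time and space bounds. Your explicit case check of the price formula and your justification of the dynamic program by induction add detail that the paper leaves implicit.
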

\begin{proof}

Serve each client $v\in X$ with $d_v>1/2$ by $(r,v,r)$, at cost $p_v$,
and form the restricted tour on the remaining clients.
The triangle inequality ensures that the tour cost does not increase.
Apply Theorem~\ref{lem:tank} with $\delta=1/8$ to the remaining tour.
Each remaining client contributes
$\frac{16}{7}[d_v+(d_v-1/8)_+]c(r,v)=p_v$.
Adding the singleton costs gives the claimed bound.

The dynamic program~\eqref{eq:dp}, applied to the remaining client
order, computes a minimum-cost feasible consecutive partition, so its
cost is no greater than that of the
partition supplied by the theorem.
With prefix sums, it uses $O(|X|^2)$ time and
$O(|X|)$ additional storage.
\end{proof}

\begin{claim}\label{clm:residual-dp}

If $X\ne\varnothing$ after the set replacements in
Algorithm~\ref{alg:global}, the dynamic program~\eqref{eq:dp} applied
to the restricted tour on $X\cup\{r\}$ returns feasible routes of
total cost at most $c(R_{\mathrm{TSP}})+p(X)$.

\end{claim}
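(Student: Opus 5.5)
The plan is to show that, for the specific unserved set $X$, every client of $X$ with $d_v>1/2$ can be given a singleton route inside the consecutive partition searched by \eqref{eq:dp}. Lemma~\ref{lem:completion} then applies without a separate singleton step.

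\textbf{Step 1: the clients with $d_v>1/2$ in $X$.} Let $v\in X$ with $d_v>1/2$. I split into two cases.
\begin{itemize}
\item If $h_v>0$, Lemma~\ref{lem:local}(i) says $v$ has a nonempty stored route $R_v$ throughout local search. After the set replacements, either $R_v$ is retained or $v$ lies in a packing set. In both cases $v\notin X$.
\item Hence every $v\in X$ with $d_v>1/2$ has $c(r,v)=0$, so $v$ is colocated with the depot and $p_v=0$.
\end{itemize}

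\textbf{Step 2: a competing partition.} Let $Y=\{v\in X:d_v>1/2\}$ and consider the restricted tour $(r,v_1,\ldots,v_\ell,r)$.
\begin{itemize}
\item Restricting further to $X\setminus Y$ gives a tour whose cost is at most the restricted tour cost, and hence at most $c(R_{\mathrm{TSP}})$, by the triangle inequality.
\item Apply Theorem~\ref{lem:tank} with $\delta=1/8$ to this smaller tour. It yields consecutive segments of $X\setminus Y$, of total cost at most $c(R_{\mathrm{TSP}})+p(X\setminus Y)$.
\item Build a partition of the full order $v_1,\ldots,v_\ell$: each $v\in Y$ is its own segment, and each segment of $X\setminus Y$ is cut wherever a member of $Y$ lies between its endpoints in the full order.
\end{itemize}

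\textbf{Step 3: these cuts cost nothing.} Since every $v\in Y$ satisfies $c(r,v)=0$, the triangle inequality gives $c(a,v)+c(v,b)=c(a,b)$-compatible reroutings through $r$. Concretely:
\begin{itemize}
\item Splitting a segment at a depot-colocated client and reconnecting both pieces to $r$ replaces edges $a\to v\to b$ by $a\to r$ and $r\to b$. This has the same cost, since $c(a,r)=c(a,v)$ and $c(r,b)=c(v,b)$.
\item Each singleton route $(r,v,r)$ costs $0=p_v$.
\item More carefully, I will compare edge by edge. The full-order segments use direct edges between consecutive full-order clients, and through depot-colocated vertices these equal the restricted-tour edges $c(a,b)\le c(a,v)+c(v,b)=c(a,r)+c(r,b)$.
\end{itemize}
So the new partition is feasible, consecutive in the full order, and costs at most $c(R_{\mathrm{TSP}})+p(X\setminus Y)+0=c(R_{\mathrm{TSP}})+p(X)$.

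The dynamic program~\eqref{eq:dp} minimizes over all feasible consecutive partitions of $v_1,\ldots,v_\ell$, so its output is no costlier than this partition. The main point to verify carefully is Step 1, namely that the stored routes of clients with $d_v>1/2$ never disappear without their client being covered. This relies on the replacement rule: a stored route $R_v$ is omitted only when $v\in V(\mathcal R_{\mathrm{pack}})$.
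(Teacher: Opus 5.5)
Your Step~1 is the paper's own argument, and the claim you reach is true, but Step~3 does not prove it. Step~2 bounds segments of the tour restricted to $X\setminus Y$. In that tour, two consecutive clients $a,b$ that are separated in the full order by a run of $Y$-clients are joined by the single edge $(a,b)$. Cutting such a segment at that run replaces $c(a,b)$ by $c(a,r)+c(r,b)$, which can be strictly larger. You compare $a\to v\to b$ with $a\to r$, $r\to b$, but that is a segment passing through $v$, and the tank bound never controlled its cost. The inequality $c(a,b)\le c(a,r)+c(r,b)$ that you write runs in the wrong direction for what you need. For a concrete failure, let $a,b$ be small clients colocated at distance $1$ from $r$, and let $y\in Y$ sit at the depot between them in the tour order. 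The single segment $(r,a,b,r)$, of cost $2$, is a partition allowed by Theorem~\ref{lem:tank} on $X\setminus Y$. Your pieces $(r,a,r)$, $(r,y,r)$, $(r,b,r)$ cost $4>2+4/7\ge 2+p_a+p_b$. So the cuts are not free. In Step~2 you discarded exactly the slack that pays for them when you bounded the $X\setminus Y$ tour by $c(R_{\mathrm{TSP}})$ immediately.

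The repair is to account against the restricted tour on $X$, not on $X\setminus Y$. Since $c(r,y)=0$ for $y\in Y$, the triangle inequality gives $c(a,y)=c(a,r)$ and $c(y,y')=0$ for $y,y'\in Y$. So across each maximal run of $Y$-clients between consecutive clients $a,b$ of $X\setminus Y$, the restricted tour on $X$ pays exactly $c(a,r)+c(r,b)$. Across a run adjacent to the depot it pays the same as the tour on $X\setminus Y$. Hence the restricted tour on $X$ costs the tour on $X\setminus Y$ plus the nonnegative terms $c(a,r)+c(r,b)-c(a,b)$ summed over interior runs. Each of your cuts adds exactly one such term. Your partition therefore costs at most the restricted tour on $X$ plus $p(X\setminus Y)$, hence at most $c(R_{\mathrm{TSP}})+p(X)$.

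The paper skips this detour. It applies Theorem~\ref{lem:tank} with $\delta=1/8$ directly to the restricted tour on $X$; the theorem as stated puts no upper bound on demands. It notes that the charge $\frac{16}{7}[d_v+(d_v-\frac18)_+]c(r,v)$ equals $p_v$ for every $v\in X$, since both vanish on $Y$, and concludes by minimality of \eqref{eq:dp}. Your repaired version has one modest advantage: it uses the tank theorem only on clients with $d_v\le1/2$, exactly as in Lemma~\ref{lem:completion}.
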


\begin{proof}

By Lemma~\ref{lem:local}, every client with $d_v>1/2$ and
$h_v>0$ has a nonempty stored route. The set replacements either
retain that route or serve $v$ on a selected packing route.
Thus every unserved client
with $d_v>1/2$ has $c(r,v)=0$, so the $\delta=1/8$ charge in
Theorem~\ref{lem:tank} equals $p_v$ for every $v\in X$.
The restricted tour costs at most $c(R_{\mathrm{TSP}})$.
The minimum-cost consecutive partition therefore attains the
claimed bound.
\end{proof}

\newpage

\section{Routes from quota trees}
\label{sec:quota}

\subsection{Route construction}\label{subsec:quota-construction}

Fix a client $v$ with $d_v>7/16$ and $h_v>0$, and an available
set $A\subseteq V_{\mathrm{small}}$.
We use the fixed cutoff $\tau_v=6c(r,v)/7$ and the filtered set
$A'=\{u\in A:c(r,u)\ge\tau_v\}$.
For every $X\subseteq A'$, summing the client prices gives
$p(X)\ge(16/7)\tau_v d(X)$.
In Lemma~\ref{lem:route-score}, we apply this inequality to the small
clients on the candidate route supplied by Lemma~\ref{lem:demand-recovery}
to obtain a lower bound on its score.
Rounding loses at most $\varepsilon_0/4$ total demand.
Each quota-tree call uses error $\varepsilon_0/4$.

Enumerate every set
$B\subseteq\{u\in A':d_u>\varepsilon_0/4\}$ with $d_v+d(B)\le1$.
Each such set has $|B|<4/\varepsilon_0$.
For each choice of $B$, the route must serve every client in
$B\cup\{v\}$. Call the clients $u\in A'$ with $d_u\le\varepsilon_0/4$
\emph{optional}. The greedy scan below determines which of these
clients to include.

For each optional client $u$, let
$m_u=\lfloor4nd_u/\varepsilon_0\rfloor$ and round $d_u$ down to
$\varepsilon_0 m_u/(4n)$. Create $m_u$ labelled copies of $u$.
Since $0\le d_u-\varepsilon_0 m_u/(4n)<\varepsilon_0/(4n)$ and there are
at most $n$ optional clients, the total decrease is at most
$\varepsilon_0/4$.
Let $m=\sum_{u\in A':\,d_u\le\varepsilon_0/4}m_u$.
Since each optional client has at most $n$ copies, $m\le n^2$.

Create $m+3$ copies of each client in $B\cup\{v\}$,
and add separate endpoint labels $r,v$.
All copies and endpoint labels are distinct
vertices of a complete graph. Edge lengths are inherited from the
original instance; copies of the same client have distance zero.
A client is represented in a tree if the tree contains a label
belonging to that client.

For each $k=0,\ldots,m$, require the tree to contain at least
$(|B|+1)(m+3)+2+k$ vertices, including $r$ and $v$.
A tree omitting all copies of some client in $B\cup\{v\}$
can contain at most $|B|(m+3)+m+2$ vertices, fewer than the quota.
Thus the tree contains a copy of every client in $B\cup\{v\}$.
These clients' copies and the two endpoint labels account for at most
$(|B|+1)(m+3)+2$ vertices, so at least $k$ vertices of the tree
must be copies of optional clients.
Figure~\ref{fig:quota-copy-graph} shows a tree in the copy graph.

\begin{figure}[!t]
\centering
% Trim the standalone caption; use the numbered caption below.
\includegraphics[trim=0 37.5bp 0 0,clip]{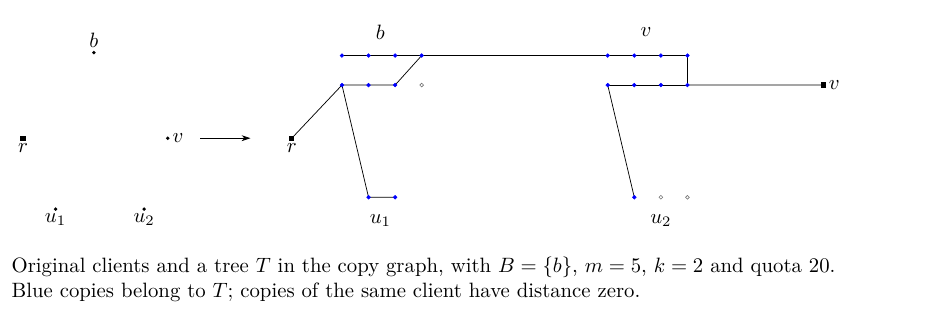}
\caption{Original clients and a tree $T$ in the copy graph, with $B=\{b\}$,
$m=5$, $k=2$ and quota $20$. Blue copies belong to $T$;
copies of the same client have distance zero.}
\label{fig:quota-copy-graph}
\end{figure}

For each returned tree $T$, select clients by the following
greedy rule. Initialize $S=B\cup\{v\}$. Scan, in a fixed order, each
optional client $u$ having at least one copy in $T$, and add $u$ to
$S$ if and only if $d(S)+d_u\le1$.
If $u$ is rejected, then $1-d(S)<d_u\le\varepsilon_0/4$.
Double every edge of $T$ outside its unique $r$--$v$ path, follow an
Euler trail of the resulting multigraph from $r$ to $v$, and append
$(v,r)$.
The $r$--$v$ path has cost at least $c(r,v)$, so the resulting
closed walk costs at most $2c(T)$.
Order the clients in $S\setminus\{v\}$ as
$u_1,\ldots,u_{|S|-1}$ by the first visit to a copy of each client along
this Euler trail. Set $R'=(r,u_1,\ldots,u_{|S|-1},v,r)$.
Figure~\ref{fig:tree-to-route} shows this construction.

\begin{algorithm}[!htb]
\caption{Quota Trees for Route Selection}
\label{alg:quota}
\begin{algorithmic}[1]
\Require A client $v$ with $d_v>7/16$ and $h_v>0$,
$A\subseteq V_{\mathrm{small}}$.
\Ensure A feasible route serving $v$ and a subset of $A$, with maximum
score $\Phi$ among the routes tested below.
\State $R_{\mathrm{best}}\gets(r,v,r)$.
\State $\tau_v\gets6c(r,v)/7$;
$A'\gets\{u\in A:c(r,u)\ge\tau_v\}$.
\State Set $m_u\gets\lfloor4nd_u/\varepsilon_0\rfloor$ for each
$u\in A'$ with $d_u\le\varepsilon_0/4$, and let $m$ be their sum.
\ForAll{$B\subseteq\{u\in A':d_u>\varepsilon_0/4\}$ with $d_v+d(B)\le1$}
  \State Build the complete graph with $m_u$ copies of each optional
  client, $m+3$ copies of each client in $B\cup\{v\}$, and endpoint
  labels $r,v$, using distances from the original instance.
  \For{$k=0,\ldots,m$}
    \State Apply Theorem~\ref{thm:quota-tree} with endpoints $r,v$,
    quota $(|B|+1)(m+3)+2+k$, and error $\varepsilon_0/4$ to obtain $T$.
    \State $S\gets B\cup\{v\}$.
    \ForAll{optional clients $u$ represented in $T$, in a fixed order}
      \If{$d(S)+d_u\le1$}
        \State $S\gets S\cup\{u\}$.
      \EndIf
    \EndFor
    \State Double every edge of $T$ outside its unique $r$--$v$ path,
    follow an Euler trail from $r$ to $v$, and append $(v,r)$.
    \State Order $S\setminus\{v\}$ as $u_1,\ldots,u_{|S|-1}$
    by the first visit to a copy of each client along this Euler trail.
    \State $R'\gets(r,u_1,\ldots,u_{|S|-1},v,r)$.
    \If{$\Phi(R')>\Phi(R_{\mathrm{best}})$}
      \State $R_{\mathrm{best}}\gets R'$.
    \EndIf
  \EndFor
\EndFor
\State \Return $R_{\mathrm{best}}$.
\end{algorithmic}
\end{algorithm}

Since $\varepsilon_0$ is fixed, one call to Algorithm~\ref{alg:quota}
takes $n^{O(1)}$ time. Appendix~\ref{sec:runtime} bounds the number of enumerated
sets $B$ and the number of calls to the quota-tree algorithm.

\subsection{Demand and cost bounds}

The next lemma compares a tested candidate with a reference $r$--$v$
path. It gives an upper bound on the candidate's cost and a lower bound
on the total demand of its small clients.
Lemma~\ref{lem:route-score} uses these bounds to obtain a score guarantee
for a tested candidate. Algorithm~\ref{alg:quota} returns a route with
at least that score.

\begin{lemma}
\label{lem:demand-recovery}
Fix an input $(v,A)$ to Algorithm~\ref{alg:quota}.
Let $\tau_v=6c(r,v)/7$ and $A'=\{u\in A:c(r,u)\ge\tau_v\}$.
For every set $X\subseteq A'$ satisfying $d_v+d(X)\le1$, and every
$r$--$v$ path $P$ visiting all clients in $X$, the algorithm tests a
feasible route $R'$ serving $v$ and a subset of $A'$ such that
\[
c(R')\le2(1+\varepsilon_0/4)c(P),
\qquad
d(V(R')\setminus\{v\})\ge d(X)-\varepsilon_0/4.
\]
Every route tested by the algorithm is feasible and serves $v$ and a
subset of $A$.
\end{lemma}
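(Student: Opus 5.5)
The plan is to fix $X$ and $P$, choose the right guess $B$ and quota index $k$, and then apply Theorem~\ref{thm:quota-tree} with that quota to obtain a cheap tree that represents enough rounded demand.

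\textbf{Choosing $B$ and $k$.} Let $B=\{u\in X:d_u>\varepsilon_0/4\}$. Then $B\subseteq A'$ and $d_v+d(B)\le d_v+d(X)\le1$, so $B$ is enumerated. Let $O=X\setminus B$ be the optional clients of $X$, and set $k=\sum_{u\in O}m_u\le m$.

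\textbf{A path meeting the quota.} From $P$, build an $r$--$v$ path $P'$ in the copy graph:
\begin{itemize}
\item start at the endpoint label $r$;
\item at each visit of $P$ to a client $u\in B\cup O$, traverse all copies of $u$ consecutively (all $m+3$ copies if $u\in B$, all $m_u$ copies if $u\in O$);
\item at the end, traverse all $m+3$ copies of $v$ and then the label $v$.
\end{itemize}
Copies of the same client are at distance zero, so $c(P')\le c(P)$ by the triangle inequality; skipping the clients of $P$ not in $X$ only shortcuts. The vertex count of $P'$ is $(|B|+1)(m+3)+2+k$, which equals the quota. If $P'$ has more vertices than the quota, truncating copies of the optional clients would reach the quota exactly, but here the count matches by construction.

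Hence $L\le c(P)$, and Theorem~\ref{thm:quota-tree} with error $\varepsilon_0/4$ returns a tree $T$ with $c(T)\le(1+\varepsilon_0/4)c(P)$. The quota argument already given in the construction shows that $T$ contains a copy of every client in $B\cup\{v\}$ and at least $k$ copies of optional clients.

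\textbf{Demand.} Let $Y$ be the set of optional clients represented in $T$. Since $T$ contains at least $k$ optional copies,
\[
\sum_{u\in Y}\frac{\varepsilon_0m_u}{4n}\ \ge\ \frac{\varepsilon_0k}{4n}=\sum_{u\in O}\frac{\varepsilon_0m_u}{4n}\ \ge\ d(O)-\frac{\varepsilon_0}{4},
\]
using the rounding bound, and $d_u$ is at least its rounded value. So $d(Y)\ge d(O)-\varepsilon_0/4$. The greedy scan then splits into two cases:
\begin{itemize}
\item If it rejects no client, then $S=B\cup Y\cup\{v\}$, and $d(S\setminus\{v\})\ge d(B)+d(O)-\varepsilon_0/4=d(X)-\varepsilon_0/4$.
\item If it rejects some $u$, then $d(S)>1-\varepsilon_0/4$ by the rejection inequality already noted. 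Hence $d(S\setminus\{v\})>1-d_v-\varepsilon_0/4\ge d(X)-\varepsilon_0/4$, using $d_v+d(X)\le1$.
\end{itemize}
In both cases the demand bound holds.

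\textbf{Cost.} The doubled-tree Euler walk plus the edge $(v,r)$ costs at most $2c(T)-c(r\text{--}v\text{ path of }T)+c(v,r)\le2c(T)$. The route $R'$ shortcuts this walk, retaining the first visits to the clients of $S$ and the final visit to $v$; the triangle inequality and zero copy distances then give $c(R')\le2c(T)\le2(1+\varepsilon_0/4)c(P)$.

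\textbf{General claim.} Every tested route, including the initial $(r,v,r)$, serves $v$ and a subset $S\setminus\{v\}\subseteq B\cup A'\subseteq A$, with $d(S)\le1$. This holds because the scan starts from $d_v+d(B)\le1$ and only adds clients that fit. Each client appears once, since the route is ordered by first visits. So every tested route is feasible.

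\textbf{Main obstacle.} The delicate step is the demand recovery, specifically the case split on whether the greedy scan rejects a client; the rest is bookkeeping of copies and quotas.
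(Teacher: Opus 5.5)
Your proposal is correct and follows essentially the same argument as the paper: the same choice of $B$ and $k$, the same copy-graph path with exactly the quota number of vertices, the same two-case analysis of the greedy scan, and the same doubled-tree shortcut bound $c(R')\le 2c(T)$. The only cosmetic difference is in the demand bound: you use $d_u\ge\varepsilon_0 m_u/(4n)$ together with $\sum_{u\in Y}m_u\ge k$, whereas the paper uses $d_u\ge\varepsilon_0 t/(4n)$ for the $t$ copies of $u$ actually in $T$, and the two are equivalent.
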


\begin{proof}
We choose the enumerated set and quota from $X$, so that $c(P)$ bounds
the cost of the returned tree up to the factor $1+\varepsilon_0/4$. To bound the selected demand, we
use two possibilities: the greedy scan accepts all optional clients
represented in the tree, or a rejected client shows that less than
$\varepsilon_0/4$ capacity remains.

Fix $X$ and $P$ satisfying the statement.
Consider the iteration
with $B=\{u\in X:d_u>\varepsilon_0/4\}$ and
$k=\sum_{u\in X\setminus B}m_u$.
This iteration is included because $d_v+d(B)\le1$ and $0\le k\le m$.

Retain the endpoints $r,v$ and the clients of $X$ in their order along
$P$, and connect consecutive retained vertices directly. The triangle
inequality ensures that the cost does not increase.
Replace each client in $X$ by all its copies, omitting clients with
zero copies, and insert all copies of $v$ immediately before the
endpoint label $v$. The resulting path in the copy graph has exactly
$(|B|+1)(m+3)+2+k$ vertices and cost at most $c(P)$.
Theorem~\ref{thm:quota-tree} therefore returns a tree $T$ with
$c(T)\le(1+\varepsilon_0/4)c(P)$.
Let $R'$ be the candidate constructed from $T$ in this iteration.

By the counting argument in Section~\ref{subsec:quota-construction}, $T$ contains a
copy of every client in $B\cup\{v\}$ and at least $k$ optional copies.
If an optional client $u$ has $t$ copies in $T$, then
$t\le m_u\le4nd_u/\varepsilon_0$, so $d_u\ge\varepsilon_0 t/(4n)$.
Summing over the optional clients represented in $T$, their total
demand is at least $\varepsilon_0 k/(4n)$.

If the greedy scan accepts all these clients, then
\[
\begin{aligned}
d(V(R')\setminus\{v\})
&\ge d(B)+\frac{\varepsilon_0 k}{4n}\\
&=d(B)+\sum_{u\in X\setminus B}
  \frac{\varepsilon_0}{4n}
  \left\lfloor\frac{4nd_u}{\varepsilon_0}\right\rfloor\\
&\ge d(X)-\frac{\varepsilon_0|X\setminus B|}{4n}
 \ge d(X)-\frac{\varepsilon_0}{4}.
\end{aligned}
\]
The last line uses $\lfloor z\rfloor\ge z-1$ and
$|X\setminus B|\le n$.

If an optional client $u$ is rejected, the remaining capacity at that
moment is smaller than $d_u\le\varepsilon_0/4$.
The remaining capacity can only decrease during the rest of the scan.
Hence the final selected small clients have total demand
$d(V(R')\setminus\{v\})>1-d_v-\varepsilon_0/4\ge d(X)-\varepsilon_0/4$,
where the last inequality follows from $d_v+d(X)\le1$.
This proves the demand bound.

To bound the route cost, let $P_T$ be the unique $r$--$v$ path in $T$.
Doubling all edges outside $P_T$ gives a connected multigraph whose
only odd-degree vertices are $r$ and $v$, so it has an Euler trail
from $r$ to $v$, as used in the construction. Append $(v,r)$ to this
trail. Retain the initial depot visit, the first visit to a copy of
each client in $V(R')\setminus\{v\}$, the terminal visit to $v$, and the
final depot visit. Connect consecutive retained vertices directly.
By the triangle inequality,
\[
c(R')\le2c(T)-c(P_T)+c(r,v)
\le2c(T)\le2(1+\varepsilon_0/4)c(P),
\]
using $c(P_T)\ge c(r,v)$.

Each iteration starts with the feasible set $B\cup\{v\}$ and adds a
client $u$ only when $d(S)+d_u\le1$.
Thus every constructed route is feasible and serves $v$ and a subset
of $A'$. Since $A'\subseteq A$, all tested routes serve $v$ and a subset
of $A$.
The initial route $(r,v,r)$ has the same properties.
\end{proof}

\begin{figure}[!t]
\centering
% Trim the standalone caption; use the numbered caption below.
\includegraphics[trim=0 29bp 0 0,clip]{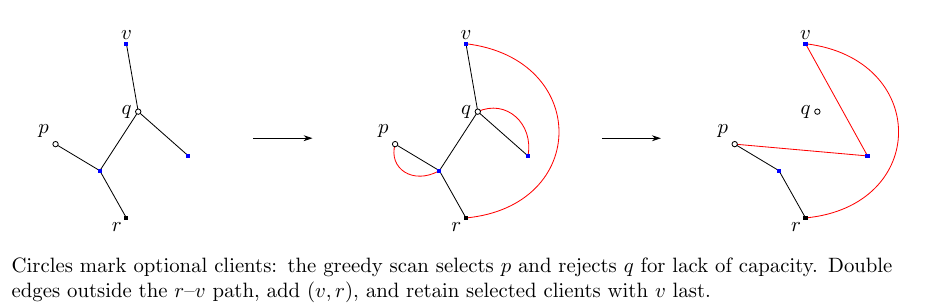}
\caption{Circles mark optional clients: the greedy scan selects $p$ and
rejects $q$ for lack of capacity. Double edges outside the $r$--$v$ path,
add $(v,r)$, and retain selected clients with $v$ last.}
\label{fig:tree-to-route}
\end{figure}

\section{Cost analysis}\label{sec:main-proof}
We bound the cost of the selected partial route family plus the prices
of the unserved clients, as required by \eqref{eq:selection-goal}.
Section~\ref{sec:packing} first compares set-packing savings with a
feasible reference solution and shows how the bound changes when prices
are reduced. Section~\ref{subsec:exceptional-routes} identifies the
optimal routes that need additional savings.
Section~\ref{subsec:candidate-scores} bounds the score of the route
returned by Algorithm~\ref{alg:quota}. Finally, Section~\ref{subsec:combining-savings} uses the final
local-search comparisons and the route-replacement identity to prove
\eqref{eq:selection-goal}.

\subsection{Set packing}\label{sec:packing}

We analyze the set-packing stage in
Section~\ref{subsec:packing-construction} for input prices $p'$.
For $j\in\{2,3,4\}$, put
$b_j=\frac9{10}\frac2{j+1}$, so
$b_2=3/5$, $b_3=9/20$, and $b_4=9/25$.
These are the coefficients supplied by
Theorem~\ref{thm:setpacking} for comparison sets of size $j$.

Fix a feasible reference route family $\mathcal R_0$ whose client
sets partition $V$. For a reference route $R$, let
$m=|V(R)\setminus V_{\mathrm{small}}|$.
If $m\ge2$, put $j=\min\{m,4\}$ and let $S_j(R)$ consist of the
$j$ non-small clients on $R$ with largest original demands, breaking
ties by a fixed order. These sets remain fixed when prices change.
Each is contained in a feasible route, and sets from distinct
reference routes are disjoint. For any nonnegative price vector
$p'$, define
\begin{equation}\label{eq:packing-contribution}
F(R;p')=
\begin{cases}
p'(V(R))-b_j\bigl(p'(S_j(R))-c(R)\bigr)_+,
   &m\ge2,\quad j=\min\{m,4\},\\
p'(V(R)),&m<2.
\end{cases}
\end{equation}
The subtracted term is the saving guaranteed by using $S_j(R)$ in
the comparison packing. The next lemma sums $F(R;p')$ over the
reference routes to bound the route cost plus the prices of unserved
clients, and shows how this sum changes when prices decrease.

\begin{lemma}\label{lem:packing-bound}
Let $\mathcal R_0$ be a feasible route family whose client sets
partition $V$, and let $F$ be defined by
\eqref{eq:packing-contribution}.
\begin{enumerate}
\item[(i)] For any nonnegative input prices $p'$, the route family
$\mathcal R_{\mathrm{pack}}$
returned by the set-packing stage satisfies
\[
c(\mathcal R_{\mathrm{pack}})
+p'\bigl(V\setminus V(\mathcal R_{\mathrm{pack}})\bigr)
\le\sum_{R\in\mathcal R_0}F(R;p').
\]
\item[(ii)] For any nonnegative price vectors $p''\le p'$ and every
$R\in\mathcal R_0$,
\[
F(R;p')-F(R;p'')\ge
\frac25\bigl(p'(V(R))-p''(V(R))\bigr).
\]
If $R$ has fewer than two non-small clients, the left-hand side equals
$p'(V(R))-p''(V(R))$.
\end{enumerate}
\end{lemma}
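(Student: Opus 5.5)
For part (i), I would begin by rewriting the left-hand side in terms of the set-packing weights. Each selected route $R_S$ serves exactly $S$, and the selected sets are disjoint. So
\[
c(\mathcal R_{\mathrm{pack}})+p'\bigl(V\setminus V(\mathcal R_{\mathrm{pack}})\bigr)
=p'(V)-\sum_{S\in\mathcal S}\bigl(p'(S)-c(R_S)\bigr)
=p'(V)-\sum_{S\in\mathcal S}w(S).
\]
Here $w$ is the weight $p'(S)-c(R_S)$ from \eqref{eq:packing-weight}, computed for the given price vector $p'$.

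Next I build the comparison family. Let $\mathcal O$ consist of the sets $S_j(R)$ for those reference routes $R$ with $m\ge2$ and $p'(S_j(R))-c(R)>0$. Each $S_j(R)$ is a subset of non-small clients with $2\le|S_j(R)|\le4$. It has $d(S_j(R))\le d(V(R))\le1$, so it is enumerated. The route $R$ visits $S_j(R)$, and shortcutting $R$ to these clients gives a route of cost at most $c(R)$. Hence $c(R_{S_j(R)})\le c(R)$, and therefore $w(S_j(R))\ge p'(S_j(R))-c(R)>0$. This means every set of $\mathcal O$ survives the discard of nonpositive weights. The sets are pairwise disjoint because the reference routes partition $V$.

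Theorem~\ref{thm:setpacking} then gives
\[
\sum_{S\in\mathcal S}w(S)\ge\sum_{S\in\mathcal O}b_{|S|}w(S)\ge\sum_{R:\,m\ge2}b_j\bigl(p'(S_j(R))-c(R)\bigr)_+ .
\]
Routes with $m<2$, or with nonpositive bracket, contribute zero to the right-hand sum, so nothing is lost there. Substituting into the identity and writing $p'(V)=\sum_{R\in\mathcal R_0}p'(V(R))$ gives exactly $\sum_R F(R;p')$.

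The only subtlety in (i) is the rounding step inside Theorem~\ref{thm:setpacking}, which may discard some sets. However, its guarantee is stated for every disjoint subfamily of the listed family, so I can invoke it as a black box.

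For part (ii), the case $m<2$ is immediate from the definition. For $m\ge2$, put $\Delta=p'-p''\ge0$. The map $x\mapsto(x)_+$ is $1$-Lipschitz and monotone, so
\[
0\le\bigl(p'(S_j)-c\bigr)_+-\bigl(p''(S_j)-c\bigr)_+\le\Delta(S_j).
\]
It follows that
\[
F(R;p')-F(R;p'')\ge\Delta(V(R))-b_j\Delta(S_j)\ge(1-b_j)\Delta(V(R)),
\]
using $\Delta(S_j)\le\Delta(V(R))$. Since $b_j\le b_2=3/5$, we get $1-b_j\ge2/5$, which gives the claim.

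I expect no real obstacle beyond checking that every comparison set has positive weight and so lies in the family passed to the theorem.
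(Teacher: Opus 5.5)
Your proposal is correct and follows essentially the same argument as the paper: for part~(i), the comparison family of sets $S_j(R)$ with positive bracket, shortcutting to get $w(S_j(R))\ge p'(S_j(R))-c(R)>0$, and invoking Theorem~\ref{thm:setpacking} with coefficients $b_j$; for part~(ii), the $1$-Lipschitz bound on the positive part together with $b_j\le b_2=3/5$. The one point you leave implicit, which the paper states explicitly, is that $S_j(R)$ is chosen by original demands and is therefore the same set under $p'$ and $p''$.
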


\begin{proof}
For part~(i), reference routes with positive subtracted terms provide
a disjoint comparison family for set packing. The cheapest route
through each comparison set costs at most its reference route.
For part~(ii), the subtracted term can offset at most a $b_j$
fraction of a price decrease, so $F$ decreases by at least $2/5$
of that decrease.

Consider a route $R\in\mathcal R_0$ with $m\ge2$ and
$p'(S_j(R))>c(R)$. Its set $S_j(R)$ is among those enumerated in
Section~\ref{subsec:packing-construction}. Retain its clients in
their visit order on $R$, together with the initial and final depot
visits, and connect consecutive retained vertices directly.
The triangle inequality bounds the resulting route cost by $c(R)$.
Since $R_{S_j(R)}$ is cheapest, the set weight satisfies
$w(S_j(R))=p'(S_j(R))-c(R_{S_j(R)})
\ge p'(S_j(R))-c(R)>0$.
Thus this set survives the removal of nonpositive weights.

All such sets are disjoint, so they form a comparison family for
Theorem~\ref{thm:setpacking}. The coefficient of a comparison set
of size $j$ is $b_j$. Thus the theorem gives the following lower
bound on $p'(V(\mathcal R_{\mathrm{pack}}))-c(\mathcal R_{\mathrm{pack}})$:
\[
\begin{aligned}
p'(V(\mathcal R_{\mathrm{pack}}))-c(\mathcal R_{\mathrm{pack}})
&=\sum_{S\in\mathcal S}w(S)\\
&\ge\sum_{R\in\mathcal R_0}
\bigl(p'(V(R))-F(R;p')\bigr).
\end{aligned}
\]
Subtracting this saving from $p'(V)$ proves part~(i), since the
reference routes partition $V$ and the subtracted terms in $F$ are
zero when $m<2$.

For part~(ii), fix $R$ and first suppose $m\ge2$.
The set $S_j(R)$ is unchanged because it was chosen by original
demands, rather than prices. For $x\in\mathbb R$ and $t\ge0$, the
positive-part function satisfies $0\le x_+-(x-t)_+\le t$.
Apply this with $x=p'(S_j(R))-c(R)$ and
$t=p'(S_j(R))-p''(S_j(R))$.
The decrease of the subtracted term is at most $b_j$ times the
price decrease on $S_j(R)$, which is at most the total decrease on
$V(R)$. Hence
\[
\begin{aligned}
F(R;p')-F(R;p'')
&\ge(1-b_j)\bigl(p'(V(R))-p''(V(R))\bigr)\\
&\ge\frac25\bigl(p'(V(R))-p''(V(R))\bigr),
\end{aligned}
\]
where the last inequality uses $b_j\le b_2=3/5$.
For $m<2$, there is no subtracted term, giving the stated equality.
\end{proof}

\subsection{Exceptional routes}\label{subsec:exceptional-routes}

We apply Lemma~\ref{lem:packing-bound} to the optimal route family
$\mathcal T^*$. The following lemma compares $F(R;p)$ with
$2\rho(R)+c(R)/2$. It also identifies the routes
on which local search must supply an additional saving.

\begin{lemma}\label{lem:exceptional-range}
Let $p$ be the prices in \eqref{eq:price}.
For every feasible route $R$, $F(R;p)\le2\rho(R)+c(R)/2$, unless
$R$ has exactly one non-small client $v$ with
$7/16<d_v<9/16$ and $c(r,v)>0$.
If $R$ has exactly one non-small client, then
$F(R;p)=p(V(R))$.
If also $p(V(R))>2\rho(R)+c(R)/2$, then
$c(R)<16\rho(R)/7$.
\end{lemma}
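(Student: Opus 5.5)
We prove the statement by bounding every price linearly in the demand, $p_u\le g(d_u)\,\rho(R)$ with
\[
g(d)=\min\bigl\{2,\tfrac{16}{7}\bigl[d+(d-\tfrac18)_+\bigr]\bigr\},
\]
and using $c(R)\ge2\rho(R)$. The latter holds because the route reaches a client at distance $\rho(R)$ and returns. In particular $2\rho(R)+c(R)/2\ge3\rho(R)$. Throughout write $\rho=\rho(R)$ and $m$ for the number of non-small clients on $R$. We split into cases by $m$.

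\emph{Case $m=0$.} Here $F(R;p)=p(V(R))\le\frac{16}{7}d(V(R))\rho\le\frac{16}{7}\rho\le3\rho$.

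\emph{Case $m=1$, with non-small client $v$.} By definition \eqref{eq:packing-contribution}, $F(R;p)=p(V(R))$. The small clients have total demand at most $1-d_v$, so $p(V(R))\le p_v+\frac{16}{7}(1-d_v)\rho$.
\begin{itemize}[nosep]
\item If $c(r,v)=0$, then $p_v=0$ and the bound is below $\frac{16}{7}\rho$.
\item If $d_v\le1/2$, the bound is $\frac{32d_v-2+16-16d_v}{7}\rho=\frac{14+16d_v}{7}\rho$. This is at most $3\rho$ exactly when $d_v\le7/16$.
\item If $d_v>1/2$, then $p_v\le2\rho$ and the bound is $2\rho+\frac{16}{7}(1-d_v)\rho$. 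This is at most $3\rho$ exactly when $d_v\ge9/16$.
\end{itemize}
So $F(R;p)\le3\rho\le2\rho+c(R)/2$ unless $7/16<d_v<9/16$ and $c(r,v)>0$. In that range both expressions are maximized at $d_v=1/2$, so $p(V(R))\le\frac{22}{7}\rho$. Hence if $p(V(R))>2\rho+c(R)/2$, then $c(R)/2<\frac{22}{7}\rho-2\rho=\frac87\rho$, which gives $c(R)<16\rho/7$. This settles the last two assertions of the statement.

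\emph{Case $m\ge2$, with $j=\min\{m,4\}$ and $b=b_j$.} This is the main work. Since $F(R;p)=\min\{p(V(R)),\,p(V(R))-b(p(S_j(R))-c(R))\}$, for every $t\in[0,1]$ we have
\[
F(R;p)\le p(V(R))-tb\,p(S_j(R))+tb\,c(R).
\]
I will choose $t=\min\{1,1/(2b)\}$, so that $tb\le1/2$. For $j=2$ this gives $tb=1/2$. For $j=3,4$ it gives $t=1$ with $tb<1/2$, and the leftover $(1/2-b)c(R)$ on the right-hand side is at least $(1-2b)\rho$. It then suffices to prove
\[
p(V(R))-tb\,p(S_j(R))\le\bigl(2+(1-2tb)\bigr)\rho=(3-2tb)\rho .
\]

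To prove this, bound each price by $g(d_u)\rho$. Clients of $S_j(R)$ then carry coefficient $(1-tb)g$, and all other clients carry coefficient $g$. The problem becomes a finite maximization over the demands $d_1\ge\dots\ge d_j>1/8$ of $S_j(R)$ and over the remaining demand $1-\sum_i d_i$. Any further non-small clients exist only when $m>4$, and each has demand at most $d_4$. Because $g$ is piecewise linear with slope at most $32/7$ on non-small demands and $16/7$ on small ones, the maximum is attained at breakpoints. These are $d\in\{1/8,1/2\}$ or configurations where the capacity constraint is tight. For each $j\in\{2,3,4\}$, I would check the resulting handful of vertex configurations numerically.

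The main obstacle is this check, and it requires care in two places. First, when $d_1>1/2$ the cap $g\le2$ is active. Second, for $j=2$ the discount $tb=1/2$ must absorb two clients of demand near $1/2$: then the left-hand side is about $\frac12\cdot\frac{22}{7}\rho$ for the pair, while the bound is only $2\rho$. If the uniform choice of $t$ fails somewhere, I would instead optimize $t$ per configuration, for example by using $t=1$ when $p(S_j(R))$ is large. As a last resort I would use the exact form $\min\{\cdot,\cdot\}$ directly.
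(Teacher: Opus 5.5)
Your cases $m=0$ and $m=1$ are correct and match the paper's argument; the paper writes the $m=1$ estimate as $p(V(R))-H\le\frac87H\min\{d,1-d\}$ with $H=2\rho$. For $m\ge2$ your route is genuinely different. As written, though, it is only a plan: you never check $p(V(R))-tb\,p(S_j(R))\le(3-2tb)\rho$, and you leave open whether the uniform $t$ suffices. It does, and the check is short. For every $d>1/8$ we have $g(d)\le(32d-2)/7$, and all coefficients are nonnegative, so you can drop the cap and give the leftover capacity to small clients. Everything then becomes linear. Let $D$ be the non-small demand. For $j=2$,
\[
\tfrac12\cdot\tfrac{32D-4}{7}+\tfrac{16}{7}(1-D)=2
\]
holds identically, so the bound $2\rho$ is met with equality on a whole face. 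Your worry about two clients near $1/2$ is unfounded: they contribute $\frac12(2\rho+2\rho)=2\rho$. For $j=3$,
$\frac{11}{20}\cdot\frac{32D-6}{7}+\frac{16}{7}(1-D)=\frac{254+32D}{140}\le\frac{143}{70}<\frac{21}{10}$.
For $j=4$, write $D_4=d(S_4(R))$ and let $E$ be the demand of the remaining $m-4$ non-small clients. The requirement $\le\frac{57}{25}$ reduces to $112D_4+400E\le50m-73$. We have $D_4+E\le1$, and $E\le(m-4)/m$ because $S_4(R)$ takes the largest demands. Hence the left side is at most $112+288(m-4)/m$. For $m=4,\dots,7$ this gives $112,\,169.6,\,208,\,235.4$ against $127,\,177,\,227,\,277$. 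The ordering is essential here: with four demands near $1/8$ in $S_4(R)$ and a fifth of $1/2$, the case $m=5$ would fail. So state that constraint explicitly. The breakpoint heuristic is unnecessary once the cap is dropped.

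Now the comparison. For $m\ge2$ the paper raises prices to linear functions of truncated demand, justified by the monotonicity in Lemma~\ref{lem:packing-bound}(ii). It then parametrizes by the average raised non-small price $z$ and keeps the exact positive part. Using $H/L\ge z/A_m$, it reduces to a one-variable piecewise-linear maximization, evaluated in a table. Your convex-combination relaxation of the $\min$ instead removes the positive part. It needs neither the monotonicity lemma nor truncation, and it links cost and radius only through $c(R)\ge2\rho$. The result is three one-line verifications. The paper's route is sharper: it gives per-$m$ maxima of $(F-H)/L$ using the exact $L/H$ trade-off. That precision is not needed for this lemma, though. The only tight case is $j=2$, where your choice $tb=1/2$ cancels the $c(R)$ terms exactly.
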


The proof is given in Appendix~\ref{app:proof-exceptional-range}.
For every route $R$, write
$\Delta(R)=p(V(R))-2\rho(R)-c(R)/2$.
Call a route of $\mathcal T^*$ exceptional if it has exactly one
non-small client and $\Delta(R)>0$.
Let $V_{\mathrm{exc}}$ be the set of these unique non-small clients,
and for each $v\in V_{\mathrm{exc}}$, let $R_v^*\in\mathcal T^*$ be
the exceptional route containing $v$.
Lemma~\ref{lem:exceptional-range} gives
$7/16<d_v<9/16$, $c(r,v)>0$, and
$c(R_v^*)<16\rho(R_v^*)/7$.

On an exceptional route, decreasing the price of its unique non-small
client decreases $F$ by the same amount. On every other route, $F$
decreases by at least $2/5$ of the total price decrease.
This gives the following bound at the adjusted prices.

\begin{lemma}\label{lem:packing-interface}
Suppose $0\le p'_v\le p_v$ for every $v\in V$.
Then the set-packing output $\mathcal R_{\mathrm{pack}}$ satisfies
\[
\begin{aligned}
&c(\mathcal R_{\mathrm{pack}})
 +p'\bigl(V\setminus V(\mathcal R_{\mathrm{pack}})\bigr)\\
&\quad\le\sum_{R\in\mathcal T^*}
       \left[2\rho(R)+\frac12c(R)\right]
 +\sum_{v\in V_{\mathrm{exc}}}
       \bigl[\Delta(R_v^*)-(p_v-p'_v)\bigr]
 -\frac25\sum_{v\notin V_{\mathrm{exc}}}(p_v-p'_v).
\end{aligned}
\]
\end{lemma}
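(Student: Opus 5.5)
The plan is to chain Lemma~\ref{lem:packing-bound} with Lemma~\ref{lem:exceptional-range}, taking $\mathcal R_0=\mathcal T^*$. First, part~(i) of Lemma~\ref{lem:packing-bound} with the nonnegative prices $p'$ gives
\[
c(\mathcal R_{\mathrm{pack}})+p'\bigl(V\setminus V(\mathcal R_{\mathrm{pack}})\bigr)\le\sum_{R\in\mathcal T^*}F(R;p').
\]
It remains to bound each $F(R;p')$ by comparing it with $F(R;p)$ through part~(ii), applied with $p'\le p$ in the roles of $p''\le p'$. This gives
\[
F(R;p')\le F(R;p)-\tfrac25\bigl(p(V(R))-p'(V(R))\bigr)
\]
for every route. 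For routes with fewer than two non-small clients, it gives the stronger identity $F(R;p')=F(R;p)-(p(V(R))-p'(V(R)))$.

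Next, split $\mathcal T^*$ into exceptional and non-exceptional routes. For a non-exceptional route, Lemma~\ref{lem:exceptional-range} gives $F(R;p)\le2\rho(R)+c(R)/2$. This holds because either $R$ is not of the special single-non-small type, or it has exactly one non-small client, in which case $F(R;p)=p(V(R))$ and $\Delta(R)\le0$. Combined with the $2/5$ bound, the route contributes at most $2\rho(R)+c(R)/2-\frac25\sum_{u\in V(R)}(p_u-p'_u)$. For an exceptional route $R_v^*$, there is exactly one non-small client, so the identity case applies and $F(R_v^*;p)=p(V(R_v^*))=2\rho+c/2+\Delta(R_v^*)$. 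Hence $F(R_v^*;p')=2\rho(R_v^*)+c(R_v^*)/2+\Delta(R_v^*)-\sum_{u\in V(R_v^*)}(p_u-p'_u)$.

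The remaining step is to reconcile these per-route decreases with the statement's sums, which are indexed by clients: the full decrease $p_v-p'_v$ for $v\in V_{\mathrm{exc}}$, and $2/5$ of the decrease for all other clients. On an exceptional route, the full decrease is at least $(p_v-p'_v)+\frac25\sum_{u\in V(R_v^*)\setminus\{v\}}(p_u-p'_u)$, since every term is nonnegative. Since $\mathcal T^*$ partitions $V$, each client $u\notin V_{\mathrm{exc}}$ lies in exactly one route and receives at least the $2/5$ share. Summing over the routes then yields the claimed inequality.

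The only delicate point is this bookkeeping step: checking that $V_{\mathrm{exc}}$ consists exactly of the unique non-small clients of exceptional routes, so that no client is double-counted or omitted. I expect this to be routine given the definitions.
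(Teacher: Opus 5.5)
Your proposal is correct and follows essentially the same route as the paper's proof. It applies Lemma~\ref{lem:packing-bound}(i) to $\mathcal T^*$ at prices $p'$, uses the exact identity on each exceptional route $R_v^*$ (weakened to a $2/5$ share at clients other than $v$), and bounds the non-exceptional routes by Lemma~\ref{lem:exceptional-range} (or $\Delta(R)\le 0$) combined with Lemma~\ref{lem:packing-bound}(ii). The per-route bounds are then summed using that $\mathcal T^*$ partitions $V$.
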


The proof is given in Appendix~\ref{app:proof-packing-interface}.

\subsection{Candidate scores}\label{subsec:candidate-scores}
Lemma~\ref{lem:packing-interface} expresses the cost bound in terms
of excesses on exceptional routes and the savings from local search.
We use the fixed distance cutoff $6c(r,v)/7$ to convert recovered
demand into a lower bound on a candidate's small-client prices.
Applying demand recovery to the two depot-to-client paths of an
exceptional route then gives a score bound. The positive terms in
this bound exceed the errors in tree cost and recovered demand,
together with the local-search threshold.

The next lemma bounds the small-client prices by their total demand
and the demand above the fixed cutoff. Its proof compares the
contribution of each client separately.

\begin{lemma}\label{lem:distance-cutoff}
Let $v\in V$, let $H\ge h_v$, and let
$S\subseteq V_{\mathrm{small}}$ satisfy $h_u\le H$ for every $u\in S$.
Set $t=6h_v/7$ and $S_+=\{u\in S:h_u\ge t\}$.
Then
\begin{equation}\label{eq:distance-cutoff}
p(S)-\frac6{35}t\,d(S_+)
\le\left(\frac{8H}{7}-\frac{36h_v}{245}\right)d(S).
\end{equation}
\end{lemma}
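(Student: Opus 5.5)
Since both sides of \eqref{eq:distance-cutoff} are sums over $u\in S$, we prove the inequality client by client and then sum. For a small client $u$ we have $p_u=(16/7)d_uc(r,u)=(8/7)d_uh_u$. The term $(6/35)t\,d(S_+)$ splits as $(6/35)t\,d_u$ for $u\in S_+$, and the right-hand side splits as $\bigl(8H/7-36h_v/245\bigr)d_u$. Dividing by $d_u>0$, it therefore suffices to show, for each $u\in S$,
\[
\frac87h_u-\frac6{35}t\,[u\in S_+]\le\frac{8H}7-\frac{36h_v}{245},
\]
where $[u\in S_+]$ is $1$ if $h_u\ge t$ and $0$ otherwise.

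We split into two cases according to whether $u$ lies above the cutoff.

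\emph{Case $u\in S_+$.} Here $h_u\le H$, and $t=6h_v/7$ gives $(6/35)t=36h_v/245$. The left-hand side is then at most $8H/7-36h_v/245$, which is exactly the right-hand side.

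\emph{Case $u\notin S_+$.} Here $h_u<t=6h_v/7$, so the left-hand side is less than $(8/7)(6/7)h_v=48h_v/49=240h_v/245$. Since $H\ge h_v$, the right-hand side is at least $(8/7-36/245)h_v=(280-36)h_v/245=244h_v/245$. This exceeds $240h_v/245$, so the inequality holds here as well.

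Summing the per-client inequalities, each weighted by $d_u$, over $u\in S$ yields \eqref{eq:distance-cutoff}. Nothing here is a real obstacle. The only points to check are the arithmetic in the second case, namely that $48/49\le 8/7-36/245$ (it holds with slack $4/245$), and that $h_v\ge0$. If $h_v=0$, then $t=0$, every $u$ lies in $S_+$, and the first case already covers it.
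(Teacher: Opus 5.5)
Your proposal is correct and follows the paper's proof exactly: you prove the per-unit-demand inequality for each client, split on whether $h_u\ge t$, use $(6/35)t=36h_v/245$ above the cutoff and $48h_v/49=240h_v/245\le244h_v/245$ below it, and then weight by $d_u$ and sum. Your remark on the case $h_v=0$ is a harmless extra check, since the second case is then vacuous.
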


\begin{proof}
We prove the inequality for each client's price per unit demand,
then multiply by its demand and sum.
For $u\in S_+$, the price formula and $h_u\le H$ give
$(8/7)h_u-(6/35)t\le8H/7-36h_v/245$.
For $u\in S\setminus S_+$, we have
$(8/7)h_u<(8/7)t=240h_v/245$.
Since $H\ge h_v$, this is at most
$244h_v/245\le8H/7-36h_v/245$.
Multiplying the corresponding inequality by $d_u$ for each client
and summing proves \eqref{eq:distance-cutoff}.
\end{proof}

The next lemma bounds $\Phi(R')$ for the route $R'$ returned by
Algorithm~\ref{alg:quota} in terms of the excess of an exceptional
route and the prices of its unavailable small clients.

\begin{lemma}\label{lem:route-score}
Let $R$ be an exceptional route with unique non-small client $v$.
Write $H=2\rho(R)$ and
$\Delta(R)=p(V(R))-H-c(R)/2>0$.
For an available set $A\subseteq V_{\mathrm{small}}$, let
$U=(V(R)\cap V_{\mathrm{small}})\setminus A$.
Algorithm~\ref{alg:quota} on $(v,A)$ returns a feasible route $R'$
that serves $v$ and a subset of $A$ and satisfies
\[
\Phi(R')\ge2\Delta(R)-\frac3{10}p(U)+3\varepsilon_0H.
\]
\end{lemma}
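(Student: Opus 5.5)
The plan is to apply Lemma~\ref{lem:demand-recovery} separately to the two depot-to-$v$ halves of $R$ and average the two resulting score bounds. The bound is then reduced to an inequality that is linear in a few demand and distance parameters. Lemma~\ref{lem:distance-cutoff} is designed to handle the small-client terms of that inequality exactly.

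\textbf{Setup.} Write $S=V(R)\setminus\{v\}\subseteq V_{\mathrm{small}}$, and split $R$ at its unique visit to $v$ into two $r$--$v$ paths $P_1,P_2$ with $c(P_1)+c(P_2)=c(R)$. Let $A'=\{u\in A:c(r,u)\ge\tau_v\}$ and $X_i=V(P_i)\cap S\cap A'$. Then $X_i\subseteq A'$, and $d_v+d(X_i)\le d(V(R))\le1$. Since $c(r,u)\ge\tau_v$ is equivalent to $h_u\ge t$ with $t=6h_v/7$, we have $X_1\cup X_2=(S_+\cap A)=S_+\setminus U$ in the notation of Lemma~\ref{lem:distance-cutoff}.

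\textbf{Averaged score bound.} Lemma~\ref{lem:demand-recovery} gives tested routes $R'_i$ with $c(R'_i)\le2(1+\varepsilon_0/4)c(P_i)$ and small-client demand at least $d(X_i)-\varepsilon_0/4$. Their small clients lie in $A'$, so they have price at least $(16/7)\tau_vd_u=(48/49)h_vd_u$. Since $\Phi(R')\ge\max_i\Phi(R'_i)\ge\frac12\sum_i\Phi(R'_i)$, this gives
\[
\Phi(R')\ge p_v+\frac{72}{245}h_v\,d(S_+\setminus U)-c(R)-\frac{\varepsilon_0}4c(R)-\frac{18}{245}\varepsilon_0h_v .
\]

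\textbf{Comparison with $2\Delta(R)$.} The target involves $2\Delta(R)=2p_v+2p(S)-2H-c(R)$. I bound $p(S)$ by Lemma~\ref{lem:distance-cutoff}, which applies since $h_u\le H$ for all $u\in S$. The key observation is that $2\cdot\frac6{35}t=\frac{72}{245}h_v$ equals the averaged coefficient above, so the $d(S_+)$ terms cancel except on $S_+\cap U$. That leftover term is absorbed by $\frac3{10}p(U)$: for $u\in U\cap S_+$, $\frac3{10}p_u\ge\frac3{10}\cdot\frac{48}{49}h_vd_u=\frac{72}{245}h_vd_u$. Since $U\subseteq S$, clients in $U\setminus S_+$ only help. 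The claim then reduces to
\[
2H-p_v-2\Bigl(\frac{8H}7-\frac{36h_v}{245}\Bigr)d(S)\ \ge\ \frac{\varepsilon_0}4c(R)+\frac{18}{245}\varepsilon_0h_v+3\varepsilon_0H .
\]
For this I use $d(S)\le1-d_v<9/16$ and $d_v>7/16$. I also use $p_v=\frac{32d_v-2}7\cdot\frac{h_v}2$ when $d_v\le1/2$ and $p_v=h_v$ otherwise, together with $h_v\le H$ and $c(R)<8H/7$ from Lemma~\ref{lem:exceptional-range}.

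\textbf{Main obstacle.} The hard step is this final factor-revealing check. The left side must be shown to have a strictly positive margin, of order $H/140$ as stated in the introduction, over the whole range of $d_v$ and $h_v/H\in(0,1]$. Since the expression is linear in $d(S)$ and in $p_v$, I would take $d(S)=1-d_v$. I would then verify the inequality at the endpoints $d_v\to7/16$, $d_v=1/2$, $d_v\to9/16$, and at $h_v/H\in\{0,1\}$ after bounding $h_v\le H$ where needed. Finally, the margin must dominate the error terms, which total about $(2/7+18/245+3)\varepsilon_0H$; this is comfortable because $\varepsilon_0\le10^{-3}$.
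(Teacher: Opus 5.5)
Your proposal is correct and follows essentially the paper's own proof: it applies Lemma~\ref{lem:demand-recovery} to the two halves of $R$, averages, uses Lemma~\ref{lem:distance-cutoff} so the $d(S_+)$ term cancels, absorbs $S_+\cap U$ into $\frac3{10}p(U)$, and ends with a check based on $d(S)\le1-d_v$, $7/16<d_v<9/16$ and $c(R)<8H/7$. The paper does this last check by an algebraic identity and $p_v\le2h_vd_v$, obtaining margin $H/280=\rho(R)/140$; your corner check of the bilinear expression gives the exact worst case $H/245$, at $d_v=1/2$ and $h_v=H$.

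There are two small slips. First, the averaged demand loss is $\frac{36}{245}\varepsilon_0h_v$, not $\frac{18}{245}\varepsilon_0h_v$. Second, the introduction's margin is $\rho/140$, not $H/140$. With the corrected errors, which total at most $\frac{841}{245}\varepsilon_0H$, the margin $H/245$ still suffices for $\varepsilon_0\le10^{-3}$, though with less room than ``comfortable'' suggests.
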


\begin{proof}
Apply Lemma~\ref{lem:demand-recovery} to the two $r$--$v$ paths of $R$.
Their costs sum to $c(R)$, so averaging the two score bounds gives
a cost term of $-(1+\varepsilon_0/4)c(R)$.
Lemma~\ref{lem:distance-cutoff} and the capacity constraint then bound
the small-client prices on $R$ in terms of the recovered demand.
The inequalities $7/16<d_v<9/16$ give positive terms larger than
the errors in tree cost and recovered demand.

Let $S=V(R)\cap V_{\mathrm{small}}$, $t=6h_v/7$, and
$S_+=\{u\in S:h_u\ge t\}$.
The algorithm uses $\tau_v=t/2$ and the filtered available set
$A'=\{u\in A:h_u\ge t\}$.
Every client $u\in A'$ satisfies $p_u\ge(8/7)t d_u$.
Lemma~\ref{lem:exceptional-range} gives $7/16<d_v<9/16$, $h_v>0$,
and $c(R)<8H/7$.

Split $R$ into two $r$--$v$ paths $P_1,P_2$, reversing the
second part of the route. Their costs sum to $c(R)$. Let $X_i$
be the small clients of $P_i$ that belong to $A'$.
These sets are disjoint and have union $V(R)\cap A'$.
For each $i$, feasibility gives $d_v+d(X_i)\le1$.
Lemma~\ref{lem:demand-recovery} supplies a tested route whose cost is
at most $2(1+\varepsilon_0/4)c(P_i)$ and whose small-client demand is
at least $d(X_i)-\varepsilon_0/4$.
Its score is therefore at least
$p_v+(24/35)t[d(X_i)-\varepsilon_0/4]
-2(1+\varepsilon_0/4)c(P_i)$.
The algorithm returns a route of maximum score among all tested routes.
Taking the average of the two bounds gives a demand term
$(12/35)t d(V(R)\cap A')$.

Since $S_+\setminus A'\subseteq U$ and
$p_u\ge(8/7)td_u$ for $u\in S_+$, we have
$(8/7)t[d(S_+)-d(V(R)\cap A')]\le p(U)$.
Consequently,
\[
\begin{aligned}
\Phi(R')\ge{}&p_v-c(R)+\frac{12}{35}t\,d(S_+)
-\frac3{10}p(U)\\
&-\frac{\varepsilon_0}{4}c(R)
-\frac{24}{35}t\,\frac{\varepsilon_0}{4}.
\end{aligned}
\]
The factor $1+\varepsilon_0/4$ contributes the loss
$(\varepsilon_0/4)c(R)$, and the demand loss contributes
$(24/35)t(\varepsilon_0/4)$. Since $(24/35)t=(144/245)h_v\le c(R)$,
their sum is at most $\varepsilon_0 c(R)/2$.

To apply the cutoff bound, we use feasibility and the price bound
$p_v\le2h_vd_v$. The coefficient $8H/7-36h_v/245$ is nonnegative,
and $d(S)\le1-d_v$. We use the following bound:
\[
\begin{aligned}
\left(\frac{8H}{7}-\frac{36h_v}{245}\right)(1-d_v)
&=H-h_vd_v-\frac{8d_v-1}{7}(H-h_v)
-\frac{h_v(1-d_v)}{245}\\
&\le H-\frac{p_v}{2}-\frac{8d_v-1}{7}(H-h_v)
-\frac{h_v(1-d_v)}{245}.
\end{aligned}
\]
Multiplying \eqref{eq:distance-cutoff} by two and using the preceding
bound gives a lower bound on the demand term. Substituting it into
the score estimate and using
$2p_v+2p(S)-2H-c(R)=2\Delta(R)$ gives
\[
\begin{aligned}
\frac{12}{35}t\,d(S_+)
&\ge2p(S)-2H+p_v
 +\frac{2(8d_v-1)}7(H-h_v)+\frac{2h_v(1-d_v)}{245},\\
\Phi(R')
&\ge2\Delta(R)-\frac3{10}p(U)
 +\frac{2h_v(1-d_v)}{245}
 +\frac{2(8d_v-1)}7(H-h_v)-\frac{\varepsilon_0}{2}c(R).
\end{aligned}
\]
Since $7/16<d_v<9/16$ and $h_v\le H$, the two positive terms are
at least $h_v/280+(5/7)(H-h_v)\ge H/280$.
Since $c(R)<8H/7$, the error is at most $(4/7)\varepsilon_0H$.
Our choice $\varepsilon_0\le10^{-3}$ gives
$1/280-(4/7)\varepsilon_0\ge3\varepsilon_0$, proving the stated bound.
\end{proof}

For an exceptional optimal route $R_v^*$, let
$U_v=(V(R_v^*)\cap V_{\mathrm{small}})\cap\bigcup_{u\ne v}S_u$
be its unavailable small clients in the final stored state.
The final scan tests $v$ by Lemma~\ref{lem:exceptional-range}.
Let $R'_v$ be the candidate returned for $v$ in this scan.
Its available set is $V_{\mathrm{small}}\setminus\bigcup_{u\ne v}S_u$,
so the unavailable set in Lemma~\ref{lem:route-score} is exactly $U_v$.
Applying that lemma to $R_v^*$ gives
$\Phi(R'_v)\ge2\Delta(R_v^*)-(3/10)p(U_v)+6\varepsilon_0\rho(R_v^*)$.
Lemma~\ref{lem:local}(ii) gives
$\Phi(R_v)\ge\Phi(R'_v)-\varepsilon_0h_v$.
Since $\varepsilon_0h_v\le2\varepsilon_0\rho(R_v^*)$, the final stored route satisfies
\begin{equation}\label{eq:stored-score}
\Phi(R_v)\ge2\Delta(R_v^*)-\frac3{10}p(U_v)+4\varepsilon_0\rho(R_v^*).
\end{equation}

\subsection{Combining savings}\label{subsec:combining-savings}
We now combine the bound on $\Phi(R')$ in Lemma~\ref{lem:route-score}
with the packing bound to control the cost of the selected routes plus the prices of
the unserved clients. We first express this cost using the adjusted
prices. We then use half of the bound on $\Phi(R_v)$ for each
$v\in V_{\mathrm{exc}}$ to offset $\Delta(R_v^*)$. The terms
$(2/5)p(S_v)$ in $s(R_v)=\Phi(R_v)+(2/5)p(S_v)$ cover the losses
from unavailable clients.

\begin{lemma}\label{lem:replacement}
Let $\mathcal R$ consist of the routes in $\mathcal R_{\mathrm{pack}}$
and the retained local-search routes. The adjusted prices are
$p'_v=p_v-s(R_v)$, where $R_v$ denotes the final stored route for $v$.
Then
\begin{equation}\label{eq:replacement}
c(\mathcal R)+p(V\setminus V(\mathcal R))
=c(\mathcal R_{\mathrm{pack}})
+p'(V\setminus V(\mathcal R_{\mathrm{pack}})).
\end{equation}
\end{lemma}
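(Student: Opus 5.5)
We prove \eqref{eq:replacement} by direct bookkeeping, expressing both sides through savings. Write $\mathcal R_{\mathrm{ret}}$ for the retained stored routes, so $\mathcal R=\mathcal R_{\mathrm{pack}}\cup\mathcal R_{\mathrm{ret}}$. These are exactly the nonempty $R_v$ with $v\notin V(\mathcal R_{\mathrm{pack}})$. By Section~\ref{subsec:packing-construction} the family $\mathcal R$ is client-disjoint. Hence
\[
c(\mathcal R)+p(V\setminus V(\mathcal R))
=p(V)-\sum_{R\in\mathcal R_{\mathrm{pack}}}s(R)-\sum_{R\in\mathcal R_{\mathrm{ret}}}s(R).
\]
This holds because $s(R)=p(V(R))-c(R)$ and the client sets of $\mathcal R$ are disjoint.

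For the right-hand side, we substitute $p'_u=p_u-s(R_u)$ for every client $u$, recalling $s(R_u)=0$ whenever $R_u=\varnothing$, in particular for every non-big client. This gives
\[
c(\mathcal R_{\mathrm{pack}})+p'(V\setminus V(\mathcal R_{\mathrm{pack}}))
=p(V)-\sum_{R\in\mathcal R_{\mathrm{pack}}}s(R)-\sum_{u\notin V(\mathcal R_{\mathrm{pack}})}s(R_u).
\]
It remains to check that the last sum equals $\sum_{R\in\mathcal R_{\mathrm{ret}}}s(R)$. Only clients $u$ with nonempty $R_u$ contribute, and these are big clients. Such a $u$ outside $V(\mathcal R_{\mathrm{pack}})$ is exactly the big client of a retained route. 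The correspondence $u\mapsto R_u$ is injective, since distinct stored routes have distinct big clients, each being the unique non-small client of its route by Lemma~\ref{lem:local}(i). Thus the two sums coincide term by term.

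The main point needing care is the substitution step. The term $p'(V\setminus V(\mathcal R_{\mathrm{pack}}))$ includes small clients that lie on stored routes, both retained and omitted, and these are charged at $p'=p$. For a retained route, its small clients appear in $p'(\cdot)$ at full price, while the saving $s(R_v)$, which includes $p(S_v)$, is subtracted at $v$. For an omitted route, its small clients remain at full price, and $v$ is served by packing with the $s(R_v)$ term absent. In both cases the accounting matches the left-hand side. We will state explicitly that $p'=p$ off the big clients with nonempty stored routes, so that no small client's saving is double-counted.
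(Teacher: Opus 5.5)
Your proof is correct and is essentially the paper's bookkeeping: both write $c(\mathcal R)+p(V\setminus V(\mathcal R))$ as $p(V)$ minus the total saving of the client-disjoint family $\mathcal R$, then convert using $p'_u=p_u-s(R_u)$ with $s(\varnothing)=0$. The paper substitutes directly into the saving of $\mathcal R$ and invokes $p(V)-\sum_v s(R_v)=p'(V)$, whereas you expand both sides separately and match the retained-route savings term by term; these are the same identity.
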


\begin{proof}
Since the selected routes have disjoint client sets, their total
saving is the sum of the savings of retained stored routes and routes
in $\mathcal R_{\mathrm{pack}}$.
Substituting $p_v=p'_v+s(R_v)$ for clients served by
$\mathcal R_{\mathrm{pack}}$ gives
\[
\begin{aligned}
p(V(\mathcal R))-c(\mathcal R)
&=\sum_{v\notin V(\mathcal R_{\mathrm{pack}})}s(R_v)
 +p(V(\mathcal R_{\mathrm{pack}}))-c(\mathcal R_{\mathrm{pack}})\\
&=\sum_{v\in V}s(R_v)
 +p'(V(\mathcal R_{\mathrm{pack}}))-c(\mathcal R_{\mathrm{pack}}).
\end{aligned}
\]
Subtracting this saving from $p(V)$ and using
$p(V)-\sum_v s(R_v)=p'(V)$ proves \eqref{eq:replacement}.
\end{proof}

The next lemma supplies the bound required by
\eqref{eq:selection-goal}.

\begin{lemma}\label{lem:selection}
Let $\mathcal R$ be the partial route family in
Algorithm~\ref{alg:global} after the set replacements. Then
\[
c(\mathcal R)+p(V\setminus V(\mathcal R))
\le\sum_{R\in\mathcal T^*}\left[2\rho(R)+\frac12c(R)\right].
\]
\end{lemma}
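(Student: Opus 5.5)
We follow the outline sketched in the introduction. First apply Lemma~\ref{lem:replacement} to rewrite the left-hand side as $c(\mathcal R_{\mathrm{pack}})+p'(V\setminus V(\mathcal R_{\mathrm{pack}}))$ with $p'_v=p_v-s(R_v)$. By Lemma~\ref{lem:local}(i), $0\le p'_v\le p_v$, so Lemma~\ref{lem:packing-interface} applies. Since $p_v-p'_v=s(R_v)$, it bounds this quantity by
\[
\sum_{R\in\mathcal T^*}\Bigl[2\rho(R)+\frac12c(R)\Bigr]
+\sum_{v\in V_{\mathrm{exc}}}\bigl[\Delta(R_v^*)-s(R_v)\bigr]
-\frac25\sum_{v\notin V_{\mathrm{exc}}}s(R_v).
\]
It then suffices to show that the last two sums together are nonpositive.

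Next substitute $s(R_v)=\Phi(R_v)+\frac25p(S_v)$. This holds for nonempty stored routes, and for empty ones both sides vanish since $S_v=\varnothing$.

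\emph{Exceptional clients.} For $v\in V_{\mathrm{exc}}$, use \eqref{eq:stored-score} in the form $\Delta(R_v^*)\le\frac12\Phi(R_v)+\frac3{20}p(U_v)-2\varepsilon_0\rho(R_v^*)$. This gives
\[
\Delta(R_v^*)-s(R_v)\le-\frac12\Phi(R_v)-\frac25p(S_v)+\frac3{20}p(U_v).
\]
Using $\Phi(R_v)\ge0$ from Lemma~\ref{lem:local}(i), the right-hand side is at most $-\frac25p(S_v)+\frac3{20}p(U_v)$.

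\emph{Non-exceptional clients.} For $v\notin V_{\mathrm{exc}}$, $\Phi(R_v)\ge0$ gives $-\frac25s(R_v)\le-\frac4{25}p(S_v)$. Since $\frac25\ge\frac4{25}$, every $v$ then contributes at most $-\frac4{25}p(S_v)$, plus $\frac3{20}p(U_v)$ when $v$ is exceptional.

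It remains to show $\frac3{20}\sum_{v\in V_{\mathrm{exc}}}p(U_v)\le\frac4{25}\sum_vp(S_v)$. Each $U_v$ lies in $\bigcup_{u\ne v}S_u$. The sets $U_v$ are pairwise disjoint because the routes $R_v^*$ are distinct routes of $\mathcal T^*$ with disjoint client sets. The stored sets $S_u$ are pairwise disjoint by Lemma~\ref{lem:local}(i). Hence
\[
\sum_vp(U_v)\le p\Bigl(\bigcup_uS_u\Bigr)=\sum_up(S_u),
\]
using nonnegativity of prices. Since $\frac3{20}<\frac4{25}$, the total correction is at most $-\frac1{100}\sum_up(S_u)\le0$, which proves the lemma.

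The main obstacle is the bookkeeping. One must confirm that \eqref{eq:stored-score} holds for every $v\in V_{\mathrm{exc}}$: such $v$ is big with $h_v>0$, so it is tested in the final scan, and its available set yields exactly $U_v$. One must also check that the $\frac25$ factor from Lemma~\ref{lem:packing-interface} is applied only to non-exceptional clients, while exceptional clients receive the full price reduction.
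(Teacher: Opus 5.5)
Your proof is correct and follows essentially the same route as the paper. You rewrite via Lemma~\ref{lem:replacement}, apply Lemma~\ref{lem:packing-interface} with $p'_v=p_v-s(R_v)$, substitute $s(R_v)=\Phi(R_v)+\frac25p(S_v)$, use half of \eqref{eq:stored-score} on exceptional clients, and absorb $\frac3{20}\sum_v p(U_v)$ into $\frac4{25}\sum_v p(S_v)$ by disjointness; the only difference is cosmetic, since you do the accounting client by client and drop the nonnegative $\frac12\Phi(R_v)$ and $\varepsilon_0$ slack immediately, whereas the paper first aggregates into \eqref{eq:weighted-savings} and discards those terms at the end.
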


\begin{proof}
\eqref{eq:stored-score} bounds $2\Delta(R_v^*)$ in terms of
$\Phi(R_v)$ and $p(U_v)$.
We use half of this inequality. The terms $(2/5)p(S_v)$ in the
savings cover the loss from unavailable clients after summation,
giving the bound needed in Lemma~\ref{lem:packing-interface}.

Recall that $U_v$ consists of the unavailable small clients of
$R_v^*$ in the final stored state. The reference routes $R_v^*$ have
disjoint client sets.
Each unavailable client is therefore counted at most once, giving
$\sum_{v\in V_{\mathrm{exc}}}p(U_v)\le\sum_{u\in V}p(S_u)$.
Taking half of \eqref{eq:stored-score} and summing yields
\begin{equation}\label{eq:summed-scores}
\frac12\sum_{v\in V_{\mathrm{exc}}}\Phi(R_v)
\ge\sum_{v\in V_{\mathrm{exc}}}\Delta(R_v^*)
-\frac3{20}\sum_{u\in V}p(S_u)
+\varepsilon_0\sum_{v\in V_{\mathrm{exc}}}2\rho(R_v^*).
\end{equation}

The savings enter Lemma~\ref{lem:packing-interface} with coefficient
$1$ for $v\in V_{\mathrm{exc}}$ and $2/5$ otherwise.
For every $v$, the identity
$s(R_v)=\Phi(R_v)+(2/5)p(S_v)$ holds, and
$\Phi(R_v)\ge0$ by Lemma~\ref{lem:local}(i).
Thus every $p(S_v)$ has coefficient at least $(2/5)^2=4/25$, and
every $\Phi(R_v)$ with $v\in V_{\mathrm{exc}}$ has coefficient at least
$1/2$. Using \eqref{eq:summed-scores} and
$4/25-3/20=1/100$ gives
\begin{equation}\label{eq:weighted-savings}
\begin{aligned}
&\sum_{v\in V_{\mathrm{exc}}}s(R_v)
+\frac25\sum_{v\notin V_{\mathrm{exc}}}s(R_v)\\
&\quad\ge\frac12\sum_{v\in V_{\mathrm{exc}}}\Phi(R_v)
+\frac4{25}\sum_{v\in V}p(S_v)\\
&\quad\ge\sum_{v\in V_{\mathrm{exc}}}\Delta(R_v^*)
+\varepsilon_0\sum_{v\in V_{\mathrm{exc}}}2\rho(R_v^*)
+\frac1{100}\sum_{v\in V}p(S_v).
\end{aligned}
\end{equation}
Substitute this bound into Lemma~\ref{lem:packing-interface} with
$p'_v=p_v-s(R_v)$. Its price constraints follow from
Lemma~\ref{lem:local}(i). The $\Delta(R_v^*)$ terms cancel.
Applying \eqref{eq:replacement} gives
\[
\begin{aligned}
c(\mathcal R)+p(V\setminus V(\mathcal R))
&\le\sum_{R\in\mathcal T^*}\left[2\rho(R)+\frac12c(R)\right]\\
&\quad-\varepsilon_0\sum_{v\in V_{\mathrm{exc}}}2\rho(R_v^*)
-\frac1{100}\sum_{v\in V}p(S_v).
\end{aligned}
\]
Both subtracted terms are nonnegative, proving the claim.
\end{proof}

\section{Running time and approximation}
Section~\ref{sec:main-proof} bounds the cost of the selected routes
plus the prices of the unserved clients. We combine this cost bound
with the TSP tour and the completion guarantee to prove Theorem~\ref{thm:main}.

\subsection{Running time}
The running-time bound~\eqref{eq:runtime} is proved in Appendix~\ref{sec:runtime}.

\subsection{Proof of the main theorem}

\begin{proof}[Proof of Theorem~\ref{thm:main}]
We first handle the early-return cases, then check that the set
replacements and completion preserve feasibility.
For the cost bound, we add the TSP bound to the bound for selected
routes and remaining client prices. The terms $2\rho(R)$ cancel
for each optimal route.

If $n\le1$, the returned singleton routes are optimal.
If $D=0$, positivity of the demands implies $c(r,v)=0$ for every
client, so the returned routes have cost zero and are also optimal.
Henceforth assume $n\ge2$ and $D>0$.

The partial family $\mathcal R$ is feasible and client-disjoint by
Lemma~\ref{lem:local}(i) and the construction in
Section~\ref{subsec:packing-construction}.
The DP serves exactly the clients in $X=V\setminus V(\mathcal R)$
with feasible routes. Thus the returned family $\mathcal T$ serves
every client exactly once.

By Claim~\ref{clm:residual-dp}, completion adds cost at most
$c(R_{\mathrm{TSP}})+p(X)$ when $X\ne\varnothing$.
When $X=\varnothing$, no additional routes are needed, so this bound
also applies. Combining Lemma~\ref{lem:selection} with
Theorem~\ref{thm:full-tour} gives
\[
\begin{aligned}
c(\mathcal T)
&\le c(R_{\mathrm{TSP}})+c(\mathcal R)+p(X)\\
&\le\sum_{R\in\mathcal T^*}
\left[2c(R)-2\rho(R)+2\rho(R)+\frac12c(R)\right]
=\frac52\opt.
\end{aligned}
\]
\end{proof}

\section*{Acknowledgement}
Many thanks to Zachary Friggstad for his helpful discussions,
feedback, and suggestions on this work.

AI tools were used to assist with the research, the program-based checking
and the writing of this paper. The author is responsible for the
verification of all mathematical claims.

\bibliographystyle{siamplain}
{\raggedright
\bibliography{references}

\begin{thebibliography}{10}

\bibitem{AltinkemerGavish87}
{\sc K.~Altinkemer and B.~Gavish}, {\em Heuristics for unequal weight delivery
  problems with a fixed error guarantee}, Operations Research Letters, 6
  (1987), pp.~149--158, \url{https://doi.org/10.1016/0167-6377(87)90012-5}.

\bibitem{BeckerKS19}
{\sc A.~Becker, P.~N. Klein, and A.~Schild}, {\em A {PTAS} for bounded-capacity
  vehicle routing in planar graphs}, in Algorithms and Data Structures (WADS
  2019), vol.~11646 of Lecture Notes in Computer Science, Springer, 2019,
  pp.~99--111, \url{https://doi.org/10.1007/978-3-030-24766-9_8}.

\bibitem{Berman00}
{\sc P.~Berman}, {\em A {$d/2$} approximation for maximum weight independent
  set in {$d$}-claw free graphs}, Nordic Journal of Computing, 7 (2000),
  pp.~178--184,
  \url{https://www.cs.helsinki.fi/njc/References/berman:178.html}.

\bibitem{BlauthTraubVygen23}
{\sc J.~Blauth, V.~Traub, and J.~Vygen}, {\em Improving the approximation ratio
  for capacitated vehicle routing}, Mathematical Programming, 197 (2023),
  pp.~451--497, \url{https://doi.org/10.1007/s10107-022-01841-4}.

\bibitem{ChaudhuriGRT03}
{\sc K.~Chaudhuri, B.~Godfrey, S.~Rao, and K.~Talwar}, {\em Paths, trees, and
  minimum latency tours}, in Proceedings of the 44th Annual IEEE Symposium on
  Foundations of Computer Science (FOCS), IEEE Computer Society, 2003,
  pp.~36--45, \url{https://doi.org/10.1109/SFCS.2003.1238179},
  \url{https://www.microsoft.com/en-us/research/wp-content/uploads/2003/01/pub08-mlt.pdf}.

\bibitem{ChekuriKP12}
{\sc C.~Chekuri, N.~Korula, and M.~P{\'a}l}, {\em Improved algorithms for
  orienteering and related problems}, ACM Transactions on Algorithms, 8 (2012),
  pp.~23:1--23:27, \url{https://doi.org/10.1145/2229163.2229167},
  \url{https://chekuri.cs.illinois.edu/papers/orienteering-journal.pdf}.

\bibitem{Christofides22}
{\sc N.~Christofides}, {\em Worst-case analysis of a new heuristic for the
  travelling salesman problem}, Operations Research Forum, 3 (2022),
  \url{https://doi.org/10.1007/s43069-021-00101-z}.
\newblock Article 20.

\bibitem{DantzigRamser59}
{\sc G.~B. Dantzig and J.~H. Ramser}, {\em The truck dispatching problem},
  Management Science, 6 (1959), pp.~80--91,
  \url{https://doi.org/10.1287/mnsc.6.1.80}.

\bibitem{DasMathieu15}
{\sc A.~Das and C.~Mathieu}, {\em A quasipolynomial time approximation scheme
  for euclidean capacitated vehicle routing}, Algorithmica, 73 (2015),
  pp.~115--142, \url{https://doi.org/10.1007/s00453-014-9906-4}.

\bibitem{EbertWeismantel26}
{\sc D.~Ebert and L.~Weismantel}, {\em Improved approximation for unsplittable
  {CVRP} via a greedy approach}.
\newblock arXiv:2609.16910, 2026, \url{https://arxiv.org/abs/2609.16910}.

\bibitem{EiseltGL95}
{\sc H.~A. Eiselt, M.~Gendreau, and G.~Laporte}, {\em Arc routing problems,
  part {II}: The rural postman problem}, Operations Research, 43 (1995),
  pp.~399--414, \url{https://doi.org/10.1287/opre.43.3.399}.

\bibitem{FriggstadGrandoniMousavi26}
{\sc Z.~Friggstad, F.~Grandoni, and R.~Mousavi}, {\em Breaching the
  2-approximation barrier for euclidean capacitated vehicle routing}, in
  Proceedings of the 2026 Annual ACM-SIAM Symposium on Discrete Algorithms
  (SODA), Society for Industrial and Applied Mathematics, 2026, pp.~2644--2668,
  \url{https://doi.org/10.1137/1.9781611978971.97}.

\bibitem{FriggstadMomke26}
{\sc Z.~Friggstad and T.~M{\"o}mke}, {\em Approximating multiple-depot
  capacitated vehicle routing via {LP} rounding}, in Integer Programming and
  Combinatorial Optimization (IPCO 2026), vol.~16588 of Lecture Notes in
  Computer Science, Springer, 2026, pp.~112--127,
  \url{https://doi.org/10.1007/978-3-032-28691-8_8}.

\bibitem{FriggstadMRS25}
{\sc Z.~Friggstad, R.~Mousavi, M.~Rahgoshay, and M.~R. Salavatipour}, {\em
  Improved approximations for a {CVRP} with unsplittable demands}, Mathematics
  of Operations Research,  (2025),
  \url{https://doi.org/10.1287/moor.2022.0097}.
\newblock Articles in Advance.

\bibitem{GoldenWong81}
{\sc B.~L. Golden and R.~T. Wong}, {\em Capacitated arc routing problems},
  Networks, 11 (1981), pp.~305--315,
  \url{https://doi.org/10.1002/net.3230110308}.

\bibitem{GrandoniMathieuZhou23}
{\sc F.~Grandoni, C.~Mathieu, and H.~Zhou}, {\em Unsplittable euclidean
  capacitated vehicle routing: A {$(2+\varepsilon)$}-approximation algorithm},
  in 14th Innovations in Theoretical Computer Science Conference (ITCS 2023),
  vol.~251 of Leibniz International Proceedings in Informatics (LIPIcs),
  Schloss Dagstuhl -- Leibniz-Zentrum f{\"u}r Informatik, 2023,
  pp.~63:1--63:13, \url{https://doi.org/10.4230/LIPIcs.ITCS.2023.63}.

\bibitem{GurvitsKL24}
{\sc L.~Gurvits, N.~Klein, and J.~Leake}, {\em From trees to polynomials and
  back again: New capacity bounds with applications to {TSP}}, in 51st
  International Colloquium on Automata, Languages, and Programming (ICALP
  2024), vol.~297 of Leibniz International Proceedings in Informatics (LIPIcs),
  Schloss Dagstuhl -- Leibniz-Zentrum f{\"u}r Informatik, 2024,
  pp.~79:1--79:20, \url{https://doi.org/10.4230/LIPIcs.ICALP.2024.79}.

\bibitem{HaimovichRinnooyKan85}
{\sc M.~Haimovich and A.~H.~G. {Rinnooy Kan}}, {\em Bounds and heuristics for
  capacitated routing problems}, Mathematics of Operations Research, 10 (1985),
  pp.~527--542, \url{https://doi.org/10.1287/moor.10.4.527}.

\bibitem{Jansen93}
{\sc K.~Jansen}, {\em Bounds for the general capacitated routing problem},
  Networks, 23 (1993), pp.~165--173,
  \url{https://doi.org/10.1002/net.3230230304}.

\bibitem{JayaprakashS23}
{\sc A.~Jayaprakash and M.~R. Salavatipour}, {\em Approximation schemes for
  capacitated vehicle routing on graphs of bounded treewidth, bounded doubling,
  or highway dimension}, ACM Transactions on Algorithms, 19 (2023),
  pp.~20:1--20:36, \url{https://doi.org/10.1145/3582500}.

\bibitem{KarlinKO23}
{\sc A.~R. Karlin, N.~Klein, and S.~{Oveis Gharan}}, {\em A deterministic
  better-than-{$3/2$} approximation algorithm for metric {TSP}}, in Integer
  Programming and Combinatorial Optimization (IPCO 2023), vol.~13904 of Lecture
  Notes in Computer Science, Springer, 2023, pp.~261--274,
  \url{https://doi.org/10.1007/978-3-031-32726-1_19}.

\bibitem{KarlinKO24}
{\sc A.~R. Karlin, N.~Klein, and S.~{Oveis Gharan}}, {\em A (slightly) improved
  approximation algorithm for metric {TSP}}, Operations Research, 72 (2024),
  pp.~2543--2594, \url{https://doi.org/10.1287/opre.2022.2338},
  \url{https://doi.org/10.1287/opre.2022.2338}.

\bibitem{MathieuZhou23PTAS}
{\sc C.~Mathieu and H.~Zhou}, {\em A {PTAS} for capacitated vehicle routing on
  trees}, ACM Transactions on Algorithms, 19 (2023), pp.~17:1--17:28,
  \url{https://doi.org/10.1145/3575799}.

\bibitem{MathieuZhou23}
{\sc C.~Mathieu and H.~Zhou}, {\em A tight {$(1.5+\varepsilon)$}-approximation
  for unsplittable capacitated vehicle routing on trees}, Mathematical
  Programming, 212 (2025), pp.~115--146,
  \url{https://doi.org/10.1007/s10107-024-02094-z}.

\bibitem{MomkeZ23}
{\sc T.~M{\"o}mke and H.~Zhou}, {\em Capacitated vehicle routing in graphic
  metrics}, in Symposium on Simplicity in Algorithms (SOSA 2023), Society for
  Industrial and Applied Mathematics, 2023, pp.~114--123,
  \url{https://doi.org/10.1137/1.9781611977585.ch11}.

\bibitem{Sitters25}
{\sc R.~Sitters}, {\em An {$n^{O(\log\log n)}$} time approximation scheme for
  capacitated {VRP} in the {Euclidean} plane}.
\newblock arXiv:2507.15549v2, 2025, \url{https://arxiv.org/abs/2507.15549v2}.
\newblock Revised July 2026.

\bibitem{vanBevernHNS14}
{\sc R.~van Bevern, S.~Hartung, A.~Nichterlein, and M.~Sorge}, {\em
  Constant-factor approximations for {Capacitated Arc Routing} without triangle
  inequality}, Operations Research Letters, 42 (2014), pp.~290--292,
  \url{https://doi.org/10.1016/j.orl.2014.05.002}.

\bibitem{Wohlk08}
{\sc S.~W{\o}hlk}, {\em An approximation algorithm for the capacitated arc
  routing problem}, The Open Operational Research Journal, 2 (2008), pp.~8--12,
  \url{https://doi.org/10.2174/1874243200802010008}.

\bibitem{ZhaoXiaoCARP25}
{\sc J.~Zhao and M.~Xiao}, {\em An improved approximation algorithm for the
  capacitated arc routing problem}.
\newblock arXiv:2508.05471v1, 2025, \url{https://arxiv.org/abs/2508.05471v1}.

\bibitem{ZhaoXiaoMultidepot25}
{\sc J.~Zhao and M.~Xiao}, {\em Multidepot capacitated vehicle routing with
  improved approximation guarantees}, Theoretical Computer Science, 1043
  (2025), \url{https://doi.org/10.1016/j.tcs.2025.115265}.
\newblock Article 115265.

\bibitem{ZhaoXiao26}
{\sc J.~Zhao and M.~Xiao}, {\em Approximating unsplittable capacitated vehicle
  routing: Analysis via factor-revealing {LPs}}.
\newblock arXiv:2601.21660v2, 2026, \url{https://arxiv.org/abs/2601.21660v2}.

\bibitem{ZhaoXiaoFixed26}
{\sc J.~Zhao and M.~Xiao}, {\em Improved approximations for the capacitated
  vehicle routing problem with fixed capacity}, Information and Computation,
  308 (2026), \url{https://doi.org/10.1016/j.ic.2026.105405}.
\newblock Article 105405.

\bibitem{ZhaoXiao26Stochastic}
{\sc J.~Zhao and M.~Xiao}, {\em Tight approximations for fixed-order vehicle
  routing with stochastic demands}.
\newblock arXiv:2610.16802v1, 2026, \url{https://arxiv.org/abs/2610.16802v1}.

\end{thebibliography}
}
\appendix

\section{Omitted proofs}\label{app:omitted-proofs}

\subsection{Proof of Theorem~\ref{thm:setpacking}}\label{app:proof-setpacking}
\begin{proof}
We first analyze a packing at which no squared-weight exchange
improves the potential. We assign each comparison set to an
intersecting selected set of largest rounded weight $z$. An exchange of at most
four comparison sets bounds the total positive charge assigned to
one selected set. Summing these charges gives a bound with the
selected sets' actual sizes. We then undo the rounding, comparing
with a packing of maximum $q$-weight to control the total loss.

Suppose $N>0$; the empty-family case satisfies the theorem directly.
The integer potential $\sum_{T\in\mathcal S}z(T)^2$ strictly increases
after every accepted exchange. It is at most $N(10N)^2=100N^3$, so
there are at most $100N^3$ accepted exchanges. There are $O(N^4)$
incoming families to test per scan. Their pairwise disjointness takes
constant time to check because they contain at most four sets of
size at most four. Inspecting at most $N$ selected sets identifies those intersecting
the incoming family and determines the change in potential.
The total runtime is therefore $O(N^8)$.

Let $\mathcal S$ be the packing at termination, and let $\mathcal O$
be any packing of sets $S\in\mathcal F$ with $z(S)>0$. For each
$S\in\mathcal O$, let
$N(S)=\{T\in\mathcal S:T\cap S\ne\varnothing\}$.
This set is nonempty: otherwise adding $S$ increases the potential.
Choose $t(S)\in N(S)$ with largest $z(t(S))$, breaking ties by a fixed
order, and assign to it the charge
$\operatorname{ch}(S)=z(S)-\frac12\sum_{T\in N(S)}z(T)$.
For $T\in\mathcal S$, let
$\mathcal C_T=\{S\in\mathcal O:t(S)=T,
\operatorname{ch}(S)>0\}$.
The sets in $\mathcal C_T$ are disjoint and each meets $T$.
Consequently, $|\mathcal C_T|\le|T|\le4$, so their exchange is among
those tested by local search.

Fix $T$ with $\mathcal C_T\ne\varnothing$.
For $S\in\mathcal C_T$ and $U\in N(S)\setminus\{T\}$, the choice of
$t(S)$ gives $z(U)\le z(T)$, and hence $z(U)^2\le z(T)z(U)$.
Using also $(z(S)-z(T))^2\ge0$, we obtain
\[
\begin{aligned}
z(S)^2-\sum_{U\in N(S)\setminus\{T\}}z(U)^2
&\ge z(S)^2-z(T)\sum_{U\in N(S)\setminus\{T\}}z(U)\\
&\ge 2z(T)z(S)-z(T)^2
       -z(T)\sum_{U\in N(S)\setminus\{T\}}z(U)\\
&=2z(T)\operatorname{ch}(S).
\end{aligned}
\]
When these inequalities are summed over $\mathcal C_T$, a term
$z(U)^2$ with $U\ne T$ may be subtracted several times. Subtracting
each such term only once increases the left-hand side. Since the exchange with incoming
family $\mathcal C_T$ does not improve the potential, it follows that
\[
\begin{aligned}
2z(T)\sum_{S\in\mathcal C_T}\operatorname{ch}(S)
&\le\sum_{S\in\mathcal C_T}z(S)^2
   -\sum_{U\in N_{\mathcal S}(\mathcal C_T)\setminus\{T\}}z(U)^2\\
&\le z(T)^2.
\end{aligned}
\]
Thus the positive charges assigned to $T$ sum to at most $z(T)/2$.
If $T\in\mathcal O\cap\mathcal S$, then $\mathcal C_T=\{T\}$ and
$\operatorname{ch}(T)=z(T)/2$.

To sum the charges, note that each selected set $T$ intersects at
most $|T|$ members of the disjoint family $\mathcal O$.
Discarding negative charges and counting these intersections gives
\begin{equation}\label{eq:setpacking-rounded}
\begin{aligned}
\sum_{S\in\mathcal O}z(S)
&=\sum_{S\in\mathcal O}\operatorname{ch}(S)
  +\frac12\sum_{T\in\mathcal S}
     |\{S\in\mathcal O:S\cap T\ne\varnothing\}|\,z(T)\\
&\le\frac12\sum_{T\in\mathcal S}z(T)
  +\frac12\sum_{T\in\mathcal S}|T|z(T)
 =\sum_{T\in\mathcal S}\frac{|T|+1}{2}z(T).
\end{aligned}
\end{equation}

It remains to account for rounding. Let $\mathcal O^*$ be a packing
of maximum total $q$-weight in $\mathcal F$. Apply
\eqref{eq:setpacking-rounded} to the members of $\mathcal O^*$ with
$z(S)>0$; the omitted members contribute zero to the rounded sum.
For every $S\in\mathcal O^*$, we have $0\le q(S)-az(S)<a$,
and $|\mathcal O^*|\le N$. Moreover, $q(\mathcal O^*)\ge Q$, since a
single set of maximum $q$-weight is a packing. Therefore
\[
\begin{aligned}
\sum_{T\in\mathcal S}w(T)
&=\sum_{T\in\mathcal S}\frac{|T|+1}{2}q(T)
 \ge a\sum_{T\in\mathcal S}\frac{|T|+1}{2}z(T)\\
&\ge a\sum_{S\in\mathcal O^*}z(S)
 \ge q(\mathcal O^*)-Na
 =q(\mathcal O^*)-\frac Q{10}
 \ge\frac9{10}q(\mathcal O^*).
\end{aligned}
\]
Every comparison packing $\mathcal O\subseteq\mathcal F$ satisfies
$q(\mathcal O)\le q(\mathcal O^*)$. Substituting
$q(S)=2w(S)/(|S|+1)$ proves \eqref{eq:setpacking-guarantee}.
\end{proof}

\subsection{Proof of Lemma~\ref{lem:local}}\label{app:proof-local}
\begin{proof}
The available set prevents a replacement route from sharing clients with
other stored routes. We then bound each route's saving by its big client's price.
The sum of $\Phi(R_v)/h_v$ over clients with $h_v>0$ is bounded
and increases by more than $\varepsilon_0$ at every accepted update.

Initially, each nonempty route is a singleton route $(r,v,r)$ with
$d_v>1/2$ and $h_v>0$. These routes are feasible and client-disjoint.
Since $p_v=h_v=2c(r,v)$, their savings and scores are zero.
Empty routes also have zero saving and score.

Consider a candidate returned for $v$.
By the guarantee of Algorithm~\ref{alg:quota}, it is feasible and serves
$v$ together with clients from the available set.
Its big client $v$ differs from the big clients of all other stored
routes, and its small clients lie outside $\bigcup_{u\ne v}S_u$.
Replacing $R_v$ therefore preserves feasibility and disjointness, and
the new route still serves $v$ and only small clients besides $v$.
An accepted replacement strictly increases
$\Phi(R_v)$, so all scores remain nonnegative.
Every replacement for $R_v$ still serves $v$, so every initially
nonempty route remains nonempty.

To prove $s(R_v)\le p_v$, we bound the price of the small clients on
$R_v$. For a nonempty route, feasibility and $d_v>7/16$ give
$d(S_v)\le1-d_v<9/16$. Since $h_u\le c(R_v)$ for every
$u\in S_v$, the price definition gives
$p(S_v)\le(8/7)d(S_v)c(R_v)<9c(R_v)/14$.
In particular, $p(S_v)\le c(R_v)$, and therefore
$s(R_v)=p_v+p(S_v)-c(R_v)\le p_v$.
Moreover, $s(R_v)=\Phi(R_v)+(2/5)p(S_v)\ge\Phi(R_v)\ge0$.
Empty routes satisfy these bounds by convention, and
$p_v\le h_v$ follows from \eqref{eq:price}. This proves (i).

For the number of updates, consider the potential
\[
\sum_{v\in V:\,h_v>0}\frac{\Phi(R_v)}{h_v}.
\]
It starts at zero and is at most $n$ by part~(i).
An accepted update changes only one stored route and increases its
score by more than $\varepsilon_0h_v$. Thus the potential increases by more
than $\varepsilon_0$, and fewer than $n/\varepsilon_0$ updates can be accepted.
Each scan makes at most $n$ calls to Algorithm~\ref{alg:quota} and either
accepts an update or terminates, so local search terminates after at most
$n/\varepsilon_0+1$ scans.

The final scan accepts no update. All stored routes therefore remain
fixed throughout that scan, and every available set is computed from
the same final state. Each returned candidate fails the strict
acceptance test, which gives the inequality in (ii).
\end{proof}

\subsection{Proof of Lemma~\ref{lem:exceptional-range}}\label{app:proof-exceptional-range}
\begin{proof}
For routes with at least two non-small clients, increase prices to
the upper bounds from \eqref{eq:price} and $h_u\le2\rho(R)$.
For $c(R)>0$, we bound $(F(R;p)-2\rho(R))/c(R)$ by a
piecewise-linear function of one variable. With a unique non-small
client $v$, the price formula gives $7/16<d_v<9/16$ and
$c(R)<16\rho(R)/7$ whenever $F(R;p)>2\rho(R)+c(R)/2$.

If $c(R)=0$, every client on $R$ has distance zero from the depot, so
all prices and the claimed bounds vanish. Otherwise put $L=c(R)>0$,
$H=2\rho(R)$, and $m=|V(R)\setminus V_{\mathrm{small}}|$.
Every client $u$ on $R$ satisfies $2c(r,u)\le H\le L$.
If $H=0$, all prices again vanish. We may therefore assume $H>0$.
If $m=0$, all clients are small and feasibility gives
$F(R;p)=p(V(R))\le8H/7$. Consequently,
$F(R;p)-H\le H/7\le L/7\le L/2$.

Suppose $m\ge2$. Since each non-small demand is greater than $1/8$,
feasibility gives $2\le m\le7$.
For each non-small client $u$ on $R$, put
$\widetilde d_u=\min\{d_u,1/2\}$.
There is at most one demand above $1/2$, and all other demands are
then below $1/2$. Thus truncation preserves the original non-small
demand order. It also preserves the upper bound
$\sum_{u\in V(R)\setminus V_{\mathrm{small}}}\widetilde d_u
+d(V(R)\cap V_{\mathrm{small}})\le1$.

Leave prices outside $V(R)$ unchanged. On $R$, raise each non-small
price to $\bar p_u=H(16\widetilde d_u-1)/7$ and each small price to
$\bar p_u=(8/7)H d_u$.
The price formula and $2c(r,u)\le H$ give $\bar p_u\ge p_u$
for every $u\in V(R)$.
Lemma~\ref{lem:packing-bound}(ii) gives $F(R;p)\le F(R;\bar p)$.
The raised non-small prices have the original demand order, so
$S_j(R)$, for $j=\min\{m,4\}$, consists of the $j$ non-small
clients with largest raised prices.

Put $A_m=(16-m)/(7m)$ and
$z=\bar p(V(R)\setminus V_{\mathrm{small}})/(mL)$.
The total truncated demand of the non-small clients is
$(7mLz/H+m)/16$. Feasibility therefore gives
$d(V(R)\cap V_{\mathrm{small}})\le1-(7mLz/H+m)/16$.
Substituting this bound in the price formulas gives
$0\le z\le(H/L)A_m\le A_m$ and
$\bar p(V(R))/L\le mz/2+(H/L)mA_m/2$.
The largest $j$ among $m$ nonnegative numbers have at least $j/m$
of their total, so $\bar p(S_j(R))\ge jzL$.
Using this lower bound in the subtracted term of $F$ gives
\begin{equation}\label{eq:packing-profile}
\begin{aligned}
\frac{F(R;p)-H}{L}
&\le\frac HL\left(\frac{mA_m}{2}-1\right)
     +\frac{mz}{2}-b_j(jz-1)_+\\
&\le\left(m-\frac1{A_m}\right)z-b_j(jz-1)_+,
\qquad 0\le z\le A_m.
\end{aligned}
\end{equation}
For the second inequality, $mA_m/2-1=(2-m)/14\le0$ and
$H/L\ge z/A_m$. The expression on the last line is piecewise
linear, so its maximum occurs at $0$, at $A_m$, or at $1/j$ if
that point lies in the interval. Evaluating these points gives
\begin{center}
\begin{tabular}{c|c|c|c|c}
$m$ & $A_m$ & interior breakpoint & maximizing $z$ & maximum\\ \hline
$2$ & $1$ & $1/2$ & $1/2$ & $1/2$\\
$3$ & $13/21$ & $1/3$ & $13/21$ & $33/70$\\
$4$ & $3/7$ & $1/4$ & $3/7$ & $16/35$\\
$5$ & $11/35$ & $1/4$ & $11/35$ & $419/875$\\
$6$ & $5/21$ & $\varnothing$ & $5/21$ & $3/7$\\
$7$ & $9/49$ & $\varnothing$ & $9/49$ & $2/7$
\end{tabular}
\end{center}
Every maximum is at most $1/2$, proving the required bound for
$m\ge2$.

It remains to consider $m=1$. Let $v$ be the unique non-small client
and write $d=d_v$.
There is no subtracted term in $F$, so $F(R;p)=p(V(R))$.
The total price of the small clients is at most $(8/7)H(1-d)$.
For $d\le1/2$, the price $p_v$ is at most $H(16d-1)/7$;
for $d\ge1/2$, it is at most $H$.
Combining each bound with the bound on the small-client prices yields
\[
p(V(R))-H\le\frac87H\min\{d,1-d\}\le\frac47H.
\]
If $p(V(R))>H+L/2$, then $H\le L$ forces
$d>7/16$ and $1-d>7/16$, giving $7/16<d<9/16$.
The same inequality gives $L/2<4H/7$, and hence
$c(R)=L<8H/7=16\rho(R)/7$.
Finally, if $c(r,v)=0$, then $p_v=0$.
Since $d>1/8$, the small-client prices sum to less than
$(8/7)H(1-1/8)=H$, which rules out an excess.
Thus $c(r,v)>0$ whenever $p(V(R))>H+L/2$.
\end{proof}

\subsection{Proof of Lemma~\ref{lem:packing-interface}}\label{app:proof-packing-interface}
\begin{proof}
Apply the packing bound to $\mathcal T^*$ and account for price
decreases separately on exceptional and other routes. On an exceptional
route, decreasing the price of its unique non-small client decreases
$F$ by the same amount. On the other routes, use the $2/5$
bound from Lemma~\ref{lem:packing-bound}(ii).

For $v\in V_{\mathrm{exc}}$, the route $R_v^*$ has exactly one
non-small client, so $F(R_v^*;p')=p'(V(R_v^*))$. Every price decrease
on this route decreases $F$ by the same amount. Using only $2/5$ of
the decrease at clients other than $v$ gives
\[
\begin{aligned}
F(R_v^*;p')\le{}&2\rho(R_v^*)+\frac12c(R_v^*)+\Delta(R_v^*)
-(p_v-p'_v)\\
&-\frac25\sum_{u\in V(R_v^*)\setminus\{v\}}(p_u-p'_u).
\end{aligned}
\]

For every nonexceptional reference route $R$,
$F(R;p)\le2\rho(R)+c(R)/2$.
For zero or at least two non-small clients this follows from
Lemma~\ref{lem:exceptional-range}; with one non-small client it
follows from the definition of exceptionality.
Lemma~\ref{lem:packing-bound}(ii) therefore gives
\[
F(R;p')\le2\rho(R)+\frac12c(R)
-\frac25\sum_{v\in V(R)}(p_v-p'_v).
\]
The client sets $V(R)$, $R\in\mathcal T^*$, partition $V$.
Summing these bounds retains
the full price decrease at every $v\in V_{\mathrm{exc}}$ and $2/5$
of the decrease at every other client. Applying
Lemma~\ref{lem:packing-bound}(i) proves the stated bound.
\end{proof}

\subsection{Running time}\label{sec:runtime}
The early-return cases $n\le1$ and $D=0$ take $O(n)$ time.
Assume $n\ge2$ and $D>0$.

Recall that $\varepsilon_0$ is a fixed rational constant, independent of the input.
Each client in an enumerated set $B$ has $d_u>\varepsilon_0/4$.
Together with $d_v+d(B)\le1$, this gives $|B|<4/\varepsilon_0$.
As in Algorithm~\ref{alg:quota}, let $m$ be the total number of optional
client copies, and let $N_{\mathrm{copy}}$ be the number of vertices
in the copy graph for the current $B$.
Every optional client has at most $n$ copies. Hence
\[
\begin{aligned}
\#\text{enumerated sets }B
&\le\sum_{\substack{0\le j\le n\\j<4/\varepsilon_0}}\binom nj=n^{O(1/\varepsilon_0)},\\
m&\le n^2,\qquad
N_{\mathrm{copy}}=m+(|B|+1)(m+3)+2
\le(n+2)(n^2+3)=O(n^3).
\end{aligned}
\]
Each call to Algorithm~\ref{alg:quota} uses one cutoff.
For each enumerated set $B$,
at most $n^2+1$ quota values are tested.
Lemma~\ref{lem:local}(ii) gives at most $n/\varepsilon_0+1$ scans,
with at most $n$ calls to Algorithm~\ref{alg:quota} per scan. The total number of
quota-tree calls is therefore at most
\[
n(n/\varepsilon_0+1)
\left(\sum_{\substack{0\le j\le n\\j<4/\varepsilon_0}}\binom nj\right)(n^2+1).
\]
At accuracy $\varepsilon_0/4$,
Theorem~\ref{thm:quota-tree} takes
$N_{\mathrm{copy}}^{O(1/\varepsilon_0)}=n^{O(1/\varepsilon_0)}$ time per call.
Constructing the copy graph and recovering a route take polynomial
time, so the entire local-search stage takes $n^{O(1/\varepsilon_0)}$ time.

There are $N=O(n^4)$ sets with positive weights $w(S)$.
For each set $S$, computing $R_S$ checks at most $4!=24$ visit orders,
and computing $w(S)$ uses at most four stored savings.
Theorem~\ref{thm:setpacking} has runtime $O(N^8)$.

There is one $O(n^7)$ call to Theorem~\ref{thm:full-tour}.
Forming the restriction of its tour to $X$ takes $O(n)$ time.
The final DP takes $O(n^2)$ time and $O(n)$ space.
This proves \eqref{eq:runtime}.

\section{Capacitated arc routing}\label{app:carp}

\subsection{Service tours}\label{carp:sec:tours}
\begin{definition}[\textsc{Capacitated Arc Routing}]\label{def:carp}
An instance of \textsc{Capacitated Arc Routing} (CARP) consists of:
\begin{itemize}[nosep]
    \item a complete undirected graph $G=(V,E)$ and a depot $r\in V$,
    \item nonnegative rational metric costs $c:E\to\mathbb R_{\ge0}$,
    \item a set of required edges $E^+\subseteq E$, where $n=|E^+|$,
    \item a rational demand $d_e\in(0,1]$ for each $e\in E^+$.
\end{itemize}
A route is a walk in $G$. A feasible solution is a collection of
routes $\mathcal T$ such that
\begin{itemize}[nosep]
    \item every route $R\in\mathcal T$ starts and ends at $r$,
    \item every required edge is assigned to exactly one route, which
    serves its demand in one designated traversal of the entire edge,
    \item $\sum_{e\in E(R)}d_e\le1$ for every $R\in\mathcal T$, where
    $E(R)$ is the set of required edges assigned to $R$.
\end{itemize}
Other traversals do not serve demand. The cost $c(R)$ counts all
traversals with multiplicity. The objective is to minimize
$c(\mathcal T)=\sum_{R\in\mathcal T}c(R)$.
Fix an optimal solution $\mathcal T^*$ and write $\opt=c(\mathcal T^*)$.
\end{definition}

The approximation-preserving reduction of
van Bevern et al.~\cite[Theorem~1]{vanBevernHNS14} reduces nonnegative
costs to this metric case. Rational inputs can first be scaled to
integers.

For $S\subseteq E^+$, write $d(S)=\sum_{e\in S}d_e$.
Give each required edge its own colocated endpoint copies and retain only
these endpoints and $r$. Inherited distances preserve costs and make
the required edges a matching with no endpoint equal to $r$;
distinct vertices may have zero distance.
For $e=ab\in E^+$, write $c_e=c(e)$ and $h_e=c(r,a)+c_e+c(r,b)$, the cost
of the singleton route $(r,a,b,r)$. For $E(R)\ne\varnothing$, put
$\rho(R)=\frac12\max_{e\in E(R)}h_e$, and set $\rho(\varnothing)=0$.
For $e\in E(R)$, replacing the walks before and after its designated
traversal by direct depot connections gives $h_e\le c(R)$. Hence $2\rho(R)\le c(R)$, and summing with
$d(E(R))\le1$ gives $D:=\sum_{e\in E^+}d_eh_e\le\opt$.

Apply the demand classes in Definition~\ref{def:classes} to required edges,
and write $E_{\mathrm{small}}$ for the small required edges. Define $p_e$
by \eqref{eq:price}, replacing $d_v$ by $d_e$ and $c(r,v)$ by
$h_e/2$. These prices satisfy $0\le p_e\le h_e$ and $p_e\le2d_eh_e$.
Use $p(S)=\sum_{e\in S}p_e$ and
$E(\mathcal R)=\bigcup_{R\in\mathcal R}E(R)$.
Define $s(R)=p(E(R))-c(R)$ and
$\Phi(R)=s(R)-(2/5)p(E(R)\cap E_{\mathrm{small}})$,
with both zero for the empty route.
Disjointness refers to the assigned subsets of $E^+$; routes may
traverse the same edges.

For $X\subseteq E^+$, a \emph{service tour} for $X$ is a closed
depot walk with one designated traversal of each edge in $X$, without
a capacity constraint.
To restrict a route or service tour to some served edges, retain their
designated traversals in order and direction. Connecting them and the depot at both
ends by shortest paths does not increase cost.
The construction of Zhao and Xiao~\cite[Section~6.4]{ZhaoXiao26Stochastic}
returns a service tour $R_{\mathrm{RPP}}$ for $E^+$ satisfying
\begin{equation}\label{carp:eq:full-tour}
c(R_{\mathrm{RPP}})\le
\sum_{R\in\mathcal T^*}\bigl(2c(R)-2\rho(R)\bigr).
\end{equation}
On each reference route $R$, choose $e$ maximizing $h_e$.
Deleting its designated traversal gives two depot-rooted walks of total
cost $c(R)-c_e$. Retaining their designated traversals and connecting
them directly gives paths of no greater cost. Their Section~6.4 bounds
the output cost by the sum of
$2(c(R)-c_e)+(2c_e-h_e)=2c(R)-2\rho(R)$ over the reference routes.

The completion bound requires cuts before or after whole required edges.
\begin{lemma}\label{carp:lem:completion}
Given a service tour $R_X$ for $X\subseteq E^+$, a feasible route family
serving $X$ with total cost at most $c(R_X)+p(X)$ can be computed in
polynomial time.
\end{lemma}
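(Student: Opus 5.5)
I will mirror the proof of Lemma~\ref{lem:completion}, replacing client prices by edge prices and the tank theorem by its arc-routing analogue. Edges $e\in X$ with $d_e>1/2$ are served by singleton routes $(r,a,b,r)$ of cost $h_e=p_e$. Restricting $R_X$ to the remaining edges, keeping designated traversals in order and direction and joining them by shortest paths, gives a service tour $R'$ with $c(R')\le c(R_X)$ by the triangle inequality. It then suffices to show that $R'$ can be cut into consecutive segments of demand at most one whose total cost is at most
\[
c(R')+\sum_{e\in X'}\frac{16}{7}\Bigl[d_e+\bigl(d_e-\tfrac18\bigr)_+\Bigr]\frac{h_e}{2}=c(R')+p(X'),
\]
where $X'$ is the set of remaining edges. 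Cuts are made only between designated traversals: a segment that starts with edge $e=ab$ traversed from $a$ to $b$ is opened by $(r,a)$, and a segment that ends with edge $f=a'b'$ traversed to $b'$ is closed by $(b',r)$.

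To obtain the tank bound, I will re-run the $\delta$-tank argument of Friggstad et al.\ with $\delta=1/8$. The charging quantity is the connection cost at a cut. A cut between consecutive edges $f$, exited at $x$, and $e$, entered at $y$, adds $c(x,r)+c(r,y)$. I will reduce the problem to the vertex version by contracting each designated traversal into a single item. The key inequality is that, for any cut adjacent to edge $e$, the connection cost is at most $h_e$ minus the traversal cost $c_e$, with the removed tour edge between segments accounted for separately. More precisely, I will compare averaged cut costs: a cut placed immediately before or after $e$ costs at most $c(r,a)+c(r,b)=h_e-c_e\le h_e$. This value plays the role of $2c(r,v)$ in the vertex proof, which is exactly why the prices use $h_e/2$ in place of $c(r,v)$.

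The tank argument chooses a random offset, or deterministically the best offset among polynomially many, for a threshold-based greedy cutting of the demand sequence. Its analysis charges each cut to the edges whose demand lies in a window of length proportional to $1-\delta$, giving per-unit charge $\frac{2}{1-\delta}\cdot\frac{h_e}{2}\,[d_e+(d_e-\delta)_+]$. This argument depends only on the sequence of items, their demands, and a per-item bound on the cost of cutting adjacent to the item. Hence it transfers verbatim once that bound is in place.

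Finally, as in Lemma~\ref{lem:completion}, a dynamic program over cut positions between consecutive designated traversals, analogous to \eqref{eq:dp}, finds a minimum-cost consecutive partition in $O(|X|^2)$ time. Its cost is no more than that of the partition produced by the tank argument. Adding the singleton costs gives total cost at most $c(R_X)+p(X)$.

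The main obstacle is the transfer of the tank theorem. I have to check that its proof only uses the bound "cost of connecting the depot to the cut point $\le 2c(r,v)$ for an adjacent item $v$." Here the cut point is an endpoint of the edge rather than the item itself. My first step will be to verify that each cut endpoint is within $h_e/2$ on average of the depot. If that per-endpoint bound fails, I will instead charge the pair of connections at a cut to both adjacent edges, using the symmetric bound $c(r,a)+c(r,b)\le h_e$ summed over the two cuts that touch $e$.
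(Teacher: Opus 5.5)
Your plan follows the paper's proof. It serves edges with $d_e>1/2$ by singleton routes, restricts $R_X$, and re-runs the shifted-grid tank argument with $\delta=1/8$ on the sequence of designated traversals. Each $e_i$ gets a cut after it with probability $d_{e_i}/(1-\delta)$ and an additional cut before it with probability $(d_{e_i}-\delta)_+/(1-\delta)$. The DP then returns a partition no worse than the best offset.

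One inequality needs fixing. Take a single cut after $e=ab$, where $e$ is exited at $b$ and the next edge is entered at $y$. It adds $c(b,r)+c(r,y)-c(b,y)\le 2c(r,b)$. This can exceed your claimed bound $c(r,a)+c(r,b)=h_e-c_e$: take $c(r,a)=0$, $c(r,b)=c_e>0$, and $y$ colocated with $b$.

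What is true, and what the paper uses, is $2c(r,b)\le c(r,a)+c_e+c(r,b)=h_e$, since $c(r,b)\le c(r,a)+c_e$. Symmetrically, $2c(r,a)\le h_e$ for a cut before $e$. This per-endpoint bound holds pointwise, not just on average, so the obstacle you flag is resolved at once. Your fallback is then unnecessary; it would also not fit the cut probabilities calibrated to each edge's own price, since it splits a cut's cost between both neighbouring edges.

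With the corrected bound, the expected added cost per edge is $\frac{h_e}{1-\delta}\bigl[d_e+(d_e-\delta)_+\bigr]=p_e$, as required.
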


\begin{proof}
We apply the shifted demand grid from Theorem~\ref{lem:tank}, cutting
before or after whole required edges. The shift determines the
probability of each cut, which bounds the expected added cost.

Serve required edges with $d_e>1/2$ separately at cost $h_e=p_e$.
The restriction of $R_X$ to the remaining required edges has cost
at most $c(R_X)$. If none remain, this proves the claim.
List the remaining edges as $e_i=u_iv_i$ in the order and directions
of their designated traversals in $R_X$, and connect consecutive
traversals and the depot at both ends by shortest paths.
Set $a_i=\sum_{j=1}^i d_{e_j}$, with $a_0=0$.
The interval $(a_{i-1},a_i]$ represents the demand of $e_i$.
For $\delta=1/8$, choose $\omega$ uniformly from $[0,1-\delta)$ and
place grid points $\omega+j(1-\delta)$, $j\in\mathbb Z$, on the
demand line. Define
\[
\begin{aligned}
B_i&=\{\omega\in[0,1-\delta):\exists j\in\mathbb Z,
          \ a_{i-1}<\omega+j(1-\delta)\le a_i\},\\
I_i&=\{\omega\in[0,1-\delta):\exists j\in\mathbb Z,
          \ a_{i-1}<\omega+j(1-\delta)<a_i-\delta\}.
\end{aligned}
\]
Thus $B_i$ means that the demand interval contains a grid point;
$I_i$ means that this point lies more than $\delta$ before its right
endpoint. Since $d_{e_i}\le1/2<1-\delta$, each interval contains
at most one grid point. Reducing such an interval modulo $1-\delta$
gives a subset of $[0,1-\delta)$ of the same total length.
The shifts in $B_i$ therefore have total length $d_{e_i}$, and those
in $I_i$ have total length $(d_{e_i}-\delta)_+$.
Uniformity gives $\Pr(B_i)=d_{e_i}/(1-\delta)$ and
$\Pr(I_i)=(d_{e_i}-\delta)_+/(1-\delta)$.

On $B_i\setminus I_i$, cut after $e_i$; on $I_i$, cut before and
after it, isolating the required edge. These cuts divide the service
order into consecutive segments. Join each nonempty segment to the
depot at both ends, and let $\mathcal R_\omega$ be the resulting
route family.

To check capacity, let $b$ be a grid point in $(a_{i-1},a_i]$.
On $B_i\setminus I_i$, the cut is at cumulative demand $a_i$,
where $b\le a_i\le b+\delta$. On $I_i$, the two cuts are at
$a_{i-1}<b$ and $a_i>b+\delta$, with $e_i$ alone between them.
Thus the first cut associated with $b$ is at most $b+\delta$,
and the last is at least $b$. The next grid point is
$b+(1-\delta)$, so a segment between the two groups of cuts has
demand at most $[b+(1-\delta)+\delta]-b=1$.
Isolated required edges have demand at most $1/2$.
The first positive grid point is at most $1-\delta$ from zero,
so the initial segment also has demand at most one. The final segment
starts at or after the last grid point in the demand intervals and
ends before the next, so its demand is at most $1-\delta$. If no
demand interval contains a grid point, total demand is at most
$1-\delta$.

For the cost, a cut between consecutive served edges $e_i,e_{i+1}$
replaces the connection from $v_i$ to $u_{i+1}$ by two depot
connections. If charged to the cut after $e_i$, its added cost is
\[
c(v_i,r)+c(r,u_{i+1})-c(v_i,u_{i+1})
\le2c(r,v_i)\le h_{e_i}.
\]
The first inequality follows from the triangle inequality, and the
second uses $c(r,v_i)\le c(r,u_i)+c_{e_i}$.
For a cut before $e_i$, the corresponding bound is
$2c(r,u_i)\le h_{e_i}$.
Cuts before the first or after the last served edge add no cost.
If two rules mark the same connection, cut it only once; counting
both marks still gives an upper bound.

Every $B_i$ marks a cut after $e_i$, and $I_i\subseteq B_i$ marks
an additional cut before it. Let $1_{B_i}$ and $1_{I_i}$ denote
their indicator functions. Summing the cut costs and then taking
expectations gives
\[
\begin{aligned}
c(\mathcal R_\omega)
&\le c(R_X)+\sum_i h_{e_i}
       \bigl(\mathbf{1}_{B_i}(\omega)+\mathbf{1}_{I_i}(\omega)\bigr),\\
\mathbb E[c(\mathcal R_\omega)]
&\le c(R_X)+\sum_i h_{e_i}\bigl(\Pr(B_i)+\Pr(I_i)\bigr)\\
&=c(R_X)+\sum_i\frac{h_{e_i}}{1-\delta}
       \bigl[d_{e_i}+(d_{e_i}-\delta)_+\bigr]
 =c(R_X)+\sum_i p_{e_i}.
\end{aligned}
\]
Some offset therefore gives a feasible route family of cost at most
$c(R_X)+\sum_i p_{e_i}$.

Every offset produces a feasible partition into consecutive segments
of the given service order, so each such partition is among those
considered by \eqref{eq:dp}. Apply that recurrence, counting each
segment's edge lengths, shortest connections between consecutive
edges, and depot connections at both ends. The DP takes quadratic
time using prefix sums and returns a partition of no greater cost.
Adding the singleton routes for $d_e>1/2$, each of cost $p_e$,
gives the bound $c(R_X)+p(X)$.
\end{proof}

\subsection{Local routes}\label{carp:sec:local}
To serve whole edges, modify Algorithm~\ref{alg:quota} as follows.
For a big required edge $e_0=ab$ with $h_{e_0}>0$ and an available set
$A\subseteq E_{\mathrm{small}}$, set $\theta=\varepsilon_0/16$ and
$A'=\{e\in A:h_e\ge6h_{e_0}/7\}$.
Apply the construction at both endpoints $t\in\{a,b\}$.
\begin{enumerate}[nosep]
\item Enumerate $B\subseteq\{e\in A':d_e>\theta\}$ with
$d_{e_0}+d(B)\le1$. For each $e\in A'$ with $d_e\le\theta$, put
$m_e=\lfloor nd_e/\theta\rfloor$ internal labels at
$jc_e/(m_e+1)$, $j=1,\ldots,m_e$, measured from a fixed endpoint.
Let $m$ be their total number.
Put $M=\lceil(m+1)/\theta\rceil$ equally spaced internal labels
on each $e\in B$. Replace each labelled edge by the path through
its endpoints and these labels, in order. Internal vertices have no
other incident edges; labels on zero-length edges remain distinct.
Use the complete graph on the labels and $r,t$, with shortest-path
distances in the subdivided graph. For $k=0,\ldots,m$, use quota
$|B|M+2+k$ and accuracy $\theta$. Expand the returned tree into
shortest paths and take a spanning tree $T$ preserving all visited
vertices.

\item Start with a copy of $T$. Add the missing parts of the paths
replacing edges in $B$.
For $e\in A'$ with $d_e\le\theta$ and $c_e=0$, also add the entire
path if $T$ visits one of its labels.
For each $e\in A'$ with $d_e\le\theta$, let $\ell_e$ be the total
length of the edges of its subdivided path that lie in $T$.
On $\mathcal F=\{e\in A':d_e\le\theta,\ 0<\ell_e<c_e\}$, solve
a fractional knapsack with weights $c_e$, profits $d_e$, and budget
$W=\sum_{e\in\mathcal F}\ell_e$. Use decreasing $d_e/c_e$ order
and discard the one fractional item, if present.
For every $e\in\mathcal F$, remove its portions in $T$ and their
internal vertices, retaining original endpoints already present.
Insert the entire paths for the edges selected by the knapsack.
Denote the resulting subgraph by $K$, and let $c(K)$ be the sum
of its edge costs.

\item Initialize $S=B$. Scan the edges $e\in A'$ with $d_e\le\theta$
whose entire subdivided paths lie in $K$, adding $e$ to $S$ only if
$d_{e_0}+d(S)+d_e\le1$. Keep every rejected edge's path in $K$
for travel without serving its demand.

\item Replace each complete subdivided path in $K$ by its original
edge, keeping the name $K$. Double every edge of $K$ and remove
one copy of a simple $r$--$t$ path.
Follow an Euler trail from $r$ to $t$, traverse $e_0$, and return
from its other endpoint to $r$. Serve $S$ on complete edge traversals
and $e_0$ on the appended traversal.
\end{enumerate}
In Step~4, every edge of $K$ retains at least one copy, so the
multigraph is connected. Removing one copy of a simple $r$--$t$
path from the doubled graph makes $r,t$ its only odd-degree vertices,
giving the required Euler trail.
Use fixed orders for enumeration and tie-breaking. As in
Algorithm~\ref{alg:quota}, return a route $R'$ maximizing $\Phi(R')$
among the singleton route and all candidates from both endpoints.

The next lemma replaces Lemma~\ref{lem:demand-recovery} for CARP.
\begin{lemma}\label{carp:lem:recovery}
Fix $e_0,A',t$ as above. For every $X\subseteq A'$ with
$d_{e_0}+d(X)\le1$ and every $r$--$t$ walk $P$ with one designated
traversal of each edge in $X$, the construction generates a connected
subgraph $K$ containing
$r,t$ and all edges of a set $S\subseteq A'$ such that
\begin{equation}\label{carp:eq:recovery}
c(K)\le(1+\varepsilon_0/4)c(P),\qquad
d(S)\ge d(X)-\varepsilon_0/4,\qquad d_{e_0}+d(S)\le1.
\end{equation}
Every generated subgraph is connected and contains the whole edges
selected for service, whose total demand fits with $e_0$.
\end{lemma}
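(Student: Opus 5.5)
The proof will follow Lemma~\ref{lem:demand-recovery}, adapted to whole edges. Fix $X$ and $P$. Consider the iteration with $B=\{e\in X:d_e>\theta\}$ and $k=\sum_{e\in X\setminus B}m_e$. Since $d_{e_0}+d(B)\le1$ and $0\le k\le m$, this iteration is enumerated.

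First, turn $P$ into a reference path in the label graph. Restrict $P$ to the designated traversals of $X$, connected by shortest paths; this does not increase cost. Each designated traversal of $e\in X$ then walks the entire subdivided path of $e$, visiting all its labels. The path therefore visits at least $|B|M+2+k$ labels, including $r$ and $t$. Shortcutting to label vertices in the complete graph gives an $r$--$t$ path with exactly the quota (drop surplus labels if needed) and cost at most $c(P)$. Theorem~\ref{thm:quota-tree} with accuracy $\theta$ then returns a tree of cost at most $(1+\theta)c(P)$. Expanding it into shortest paths and taking a spanning tree $T$ does not increase cost.

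Second, bound $c(K)$. The quota and the large multiplicity $M$ force $T$ to visit labels of every $e\in B$. Otherwise it has at most $(|B|-1)M+m+2<|B|M+2$ labels, using $M>m+1$. I need the missing parts of those paths to be short. The intended argument is a count: a tree visiting $j$ of the $M$ equally spaced labels on $e$ without covering the whole path costs at least a proportional share. Adding the missing portion of $e$ then costs at most about $c_e$ times the fraction of missing labels. The quota deficit allowed for $B$ labels is at most $m+k\le2m$ out of $M\ge(m+1)/\theta$, so this adds at most roughly $2\theta\,c(T)$. The knapsack step exchanges partial coverage $\ell_e$ of optional edges for whole edges of total length at most $W$, so it does not increase cost. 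Zero-length edges and full paths add nothing. Hence $c(K)\le(1+\theta)(1+3\theta)c(P)\le(1+\varepsilon_0/4)c(P)$ with $\theta=\varepsilon_0/16$.

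Third, bound the recovered demand. For every optional edge, the labels in $T$ bound $\ell_e$ from below: visiting $j$ labels of $e$ needs length at least about $(j-1)c_e/(m_e+1)$ unless both ends are covered. Each label corresponds to demand at most $\theta/n\le d_e/m_e$. Fractional knapsack in decreasing $d_e/c_e$ order recovers demand at least $\sum_{e\in\mathcal F}d_e\ell_e/c_e$ minus one item, which is at most $\theta$. Combining this with the fully covered edges and the $m_e$ rounding loss, at most $n\cdot\theta/n$, gives demand at least $d(X)-O(\theta)$ before the greedy scan. The greedy scan then reproduces the two cases of Lemma~\ref{lem:demand-recovery}. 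If every edge is accepted, the bound holds. If some edge is rejected, the residual capacity is below $\theta$, so $d(S)>1-d_{e_0}-\theta\ge d(X)-\theta$. With small constants, the total loss is at most $4\theta=\varepsilon_0/4$. Feasibility $d_{e_0}+d(S)\le1$ holds by construction: $d_{e_0}+d(B)\le1$ and the scan checks capacity. Connectivity holds because $T$ is connected, and each modification either adds whole paths attached at vertices of $T$ or removes partial pendant portions. The retained original endpoints keep $K$ connected.

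The main obstacle is the label-to-length accounting. Partial coverage of a subdivided path by a tree must be charged precisely, both for the $B$ completions and for the optional edges in the knapsack. The difficulty is that a tree may enter an edge's path from both ends. I would first prove a one-edge lemma: if $T$ contains $j$ of the $q$ internal labels of a path of length $c$ but not the whole path, then its length on that path is at least $(j-1)c/(q+1)$. I would then sum this bound over edges, absorbing the additive $-1$ per edge into the $\theta$ slack.
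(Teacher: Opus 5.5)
Your outline follows the paper's proof: the same choice of $B$ and $k$, the same reference path through all labels on edges of $X$, the quota argument forcing labels of every $e\in B$, the knapsack exchange, and the two-case greedy scan. The gap is in how you close the cost bound. With $\theta=\varepsilon_0/16$ we have $(1+\theta)(1+3\theta)=1+4\theta+3\theta^2>1+4\theta=1+\varepsilon_0/4$, so your final inequality is false.

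The paper avoids this by bounding the completion of $B$ additively against $c(P)$, not relative to $c(T)$. The label graph has only $m+2$ vertices that are not labels on $B$. So $T$ contains at least $M-m$ labels of each $e\in B$; your deficit ``$m+k$'' should be $m-k$. Hence at most $m+1$ of the $M+1$ subdivision edges of $e$ are missing, and adding them costs at most $(m+1)c_e/(M+1)\le\theta c_e$. Moreover $\sum_{e\in B}c_e\le c(P)$, because $P$ has a designated traversal of every edge of $X\supseteq B$. This gives $c(K)\le(1+2\theta)c(P)$. Your relative bound can also be rescued: it gives $c(K)\le c(T)/(1-2\theta)$, and $(1+\theta)/(1-2\theta)\le1+4\theta$ holds for $\theta\le1/8$. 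It does not work with the factor you stated.

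The one-edge lemma you call the main obstacle is sharper than you expect, and entering a path from both ends costs nothing. Internal subdivision vertices have no other incident edges, and $T$ is connected and contains $r$, which is not an endpoint of any required edge. So if $T$ misses part of a subdivided path, its intersection with that path is at most two segments. Each segment runs from an original endpoint to an internal leaf and has exactly as many subdivision edges as internal labels. This gives $\ell_e\ge N_ec_e/(m_e+1)$ with no $-1$, and hence $d_e\ell_e/c_e\ge(\theta/n)(N_e-1)$.

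With this sharper bound, the demand losses are at most $\theta$ from the one-label loss per edge in $\mathcal F$, $\theta$ from the discarded fractional item, and $\theta$ from rounding the $m_e$, for $3\theta$ in total. With your weaker $(j-1)$ bound, the per-edge loss becomes two labels and the total is exactly $4\theta=\varepsilon_0/4$. That still suffices, but there is no slack left. Your ``$d(X)-O(\theta)$ \ldots\ with small constants'' must therefore be replaced by this explicit count.
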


\begin{proof}
The label counts bound the missing length and demand loss.
The knapsack replaces segments ending at internal leaves by whole
paths without increasing cost.

Choose $B=\{e\in X:d_e>\theta\}$ and
$k=\sum_{e\in X\setminus B}m_e$. Retain the initial $r$, final $t$,
and one occurrence of each internal label on an edge in $X$.
Connecting them in their order gives an $r$--$t$ path meeting the quota
at cost at most $c(P)$, so Theorem~\ref{thm:quota-tree} gives
$c(T)\le(1+\theta)c(P)$ after expansion and taking a spanning tree.
There are at least $k$ labels on edges with $d_e\le\theta$, since
the labels on $B$ and the endpoints account for at most $|B|M+2$
vertices. For each $e\in B$, $T$ contains at least $M-m>0$ of its labels, since all
other vertices of the complete graph account for at most
$(|B|-1)M+m+2$.

Internal subdivision vertices have no other incident edges. If $T$
does not contain the entire path replacing an edge, its intersection
with that path consists of at most two segments, each joining an
original endpoint to an internal leaf. For $e\in A'$ with
$d_e\le\theta$, let $N_e$ count its visited labels.
A complete path, or these retained segments, contains at least as
many subdivision edges as internal labels. Each subdivision edge
has length $c_e/(m_e+1)$, so
$\ell_e/c_e\ge N_e/(m_e+1)$ when $c_e>0$.
For $e\in B$, at least $M-m$ labels are visited, so at most $m+1$
of the $M+1$ subdivision edges are missing. Their total length is
at most $(m+1)c_e/(M+1)\le\theta c_e$. Adding the missing parts therefore
costs at most $\theta\sum_{e\in B}c_e\le\theta c(P)$.

For every call, deleting the portions of edges in $\mathcal F$
preserves connectivity among $r,t$ and all original vertices of $T$:
paths between them cannot enter segments ending at internal leaves.
Each completed or inserted path meets the retained graph at an original
endpoint, so $K$ is connected.

The fractions $\ell_e/c_e$, $e\in\mathcal F$, form a feasible
knapsack solution of weight $W$. The optimum has at least their
profit. The selected whole paths cost at most $W$, the length removed,
so this replacement does not increase cost.

For $e\in\mathcal F$, uniform spacing gives
\[
d_e\frac{\ell_e}{c_e}
\ge\frac{\theta}{n}\frac{m_eN_e}{m_e+1}
\ge\frac{\theta}{n}(N_e-1).
\]
Each $e\in A'$ with $d_e\le\theta$ whose entire path is retained before
the knapsack step contributes at least $\theta N_e/n$ demand;
this includes zero-length paths added at zero cost.
Their total demand plus $\sum_{e\in\mathcal F}d_e\ell_e/c_e$ is
at least $\theta k/n-\theta$, since there are at most $n$ required
edges and at least $k$ labels on edges with $d_e\le\theta$.
Discarding the fractional item loses at most another $\theta$ in
demand, and $\theta k/n\ge d(X\setminus B)-\theta$ by rounding.
The edges in $B$, together with those scanned in Step~3, therefore
have total demand at least $d(X)-3\theta$.
If no edge is rejected, this bound holds for $S$.
Otherwise, a rejected edge has demand at most $\theta$, so
$d(S)>1-d_{e_0}-\theta\ge d(X)-\theta$.

The preceding bounds give $c(K)\le(1+2\theta)c(P)$.
Restoring complete paths to original edges preserves cost and connectivity.
Capacity follows from the acceptance rule; $2\theta,3\theta\le\varepsilon_0/4$
give the claims.
\end{proof}

The two endpoint constructions give the bound on $\Phi(R')$
required by Lemma~\ref{lem:selection}.

\begin{lemma}\label{carp:lem:score}
Let $R$ be a feasible route with unique non-small required edge $e_0$ and
$\Delta(R):=p(E(R))-2\rho(R)-c(R)/2>0$.
Write $H=2\rho(R)$. For an available set
$A\subseteq E_{\mathrm{small}}$, let
$U=(E(R)\cap E_{\mathrm{small}})\setminus A$.
The construction in Section~\ref{carp:sec:local} returns a feasible
route $R'$ serving $e_0$ and a subset of $A$ with
\[
\Phi(R')\ge2\Delta(R)-\frac3{10}p(U)+3\varepsilon_0H.
\]
\end{lemma}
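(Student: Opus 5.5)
The plan is to mirror the proof of Lemma~\ref{lem:route-score}, replacing vertex paths by the two depot walks around $e_0$ and Lemma~\ref{lem:demand-recovery} by Lemma~\ref{carp:lem:recovery}.

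\emph{Splitting $R$.} Let $e_0=ab$ be traversed in the direction from $a$ to $b$ in its designated traversal. Deleting that traversal splits $R$ into an $r$--$a$ walk $P_a$ and, after reversal, an $r$--$b$ walk $P_b$. Their costs satisfy $c(P_a)+c(P_b)=c(R)-c_{e_0}$. Let $S=E(R)\cap E_{\mathrm{small}}$, $t_0=6h_{e_0}/7$, $S_+=\{e\in S:h_e\ge t_0\}$, and let $X_a,X_b$ be the members of $S\cap A'$ designated on $P_a$ and $P_b$. These sets partition $S\cap A'$, and $d_{e_0}+d(X_t)\le1$ holds for each $t$.

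\emph{Candidate routes.} Lemma~\ref{carp:lem:recovery} applied at endpoint $t$ gives $K_t$ and $S_t$. The Step~4 route doubles $K_t$ minus one copy of an $r$--$t$ path, traverses $e_0$, and returns from the other endpoint $t'$. Its cost is at most
\[
2c(K_t)-c(r,t)+c_{e_0}+c(t',r)\le 2c(K_t)+h_{e_0}-2c(r,t).
\]
Each $e\in A'$ satisfies $p_e\ge(8/7)\cdot\frac67h_{e_0}d_e$. Hence the candidate's score is at least
\[
p_{e_0}+\tfrac35\tfrac{48}{49}h_{e_0}\bigl(d(X_t)-\tfrac{\varepsilon_0}4\bigr)-2(1+\tfrac{\varepsilon_0}4)c(P_t)-h_{e_0}+2c(r,t).
\]
Average over $t\in\{a,b\}$. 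The cost terms give
\[
-(1+\tfrac{\varepsilon_0}4)(c(R)-c_{e_0})-h_{e_0}+c(r,a)+c(r,b)=-(1+\tfrac{\varepsilon_0}4)(c(R)-c_{e_0})-c_{e_0}.
\]
This is at least $-c(R)-\frac{\varepsilon_0}4c(R)$, the same expression as in the vertex case. The demand term becomes $(12/35)t_0\,d(S\cap A')$ in the paper's normalization, with $t_0$ written in terms of $h$.

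\emph{Completing the bound.} First, use $S_+\setminus A'\subseteq U$ to trade $d(S\cap A')$ for $d(S_+)$ at a cost of $(3/10)p(U)$. Second, apply Lemma~\ref{lem:distance-cutoff} verbatim, since it only uses the price formula in terms of $h$, with $H=2\rho(R)=\max h_e$. Third, bound the coefficient times $(1-d_{e_0})$ using $p_{e_0}\le2d_{e_0}h_{e_0}$. This yields the identity $2p_{e_0}+2p(S)-2H-c(R)=2\Delta(R)$, given that $E(R)=S\cup\{e_0\}$. It remains to know $7/16<d_{e_0}<9/16$, $h_{e_0}>0$, and $c(R)<8H/7$. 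I would obtain these by rerunning the $m=1$ case of Lemma~\ref{lem:exceptional-range} with $h_e\le H\le c(R)$, which holds by the earlier CARP observation $h_e\le c(R)$. That argument only uses these inequalities and the price formula, so it transfers. The positive slack is then at least $H/280$, against an error of at most $(4/7)\varepsilon_0H$, which gives $3\varepsilon_0H$. Feasibility and disjointness from unavailable edges follow from the last sentence of Lemma~\ref{carp:lem:recovery}, since $S_t\subseteq A'\subseteq A$.

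\emph{Main obstacle.} The delicate step is the cost accounting of the appended edge $e_0$ and the return leg. I must check that removing a simple $r$--$t$ path saves at least $c(r,t)$, and that averaging over both endpoints exactly cancels the $h_{e_0}$ overhead against $c(r,a)+c(r,b)$. If the averaged cost does not collapse to $-(1+\varepsilon_0/4)c(R)$, the fallback is a cruder bound $c_{e_0}\le c(R)$, with the loss absorbed into the $\varepsilon_0$ slack.
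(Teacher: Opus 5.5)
Your proposal is correct and follows essentially the same argument as the paper: split $R$ at the designated traversal of $e_0$, apply Lemma~\ref{carp:lem:recovery} at both endpoints, and average the two route-cost bounds $2c(K_t)+c_{e_0}+c(r,t')-c(r,t)$ so that the endpoint distances cancel to give $(1+\varepsilon_0/4)(c(R)-c_{e_0})+c_{e_0}\le(1+\varepsilon_0/4)c(R)$; the cutoff, capacity, and slack calculations of Lemma~\ref{lem:route-score} then carry over unchanged, with the exceptional-range facts rederived from $h_e\le H\le c(R)$. Your fallback is unnecessary, since the averaging collapses exactly as you computed, and $d_{e_0}>7/16$ and $h_{e_0}>0$ should be established first, because they are what guarantee that the construction is actually run for $e_0$.
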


\begin{proof}
Write $e_0=ab$. Averaging the two route costs
cancels the endpoint-distance terms.

Write $L=c(R)$, $z=6h_{e_0}/7$, and
$A'=\{e\in A:h_e\ge z\}$.
Using prices, capacity, and $h_e\le H\le L$ for $e\in E(R)$,
the calculation in Lemma~\ref{lem:exceptional-range} gives
$7/16<d_{e_0}<9/16$, $h_{e_0}>0$, and $L<8H/7$.
Deleting the designated traversal of $ab$ from $R$ gives an
$r$--$a$ walk $P_a$ and, after reversing the other part, an
$r$--$b$ walk $P_b$, with $c(P_a)+c(P_b)=L-c_{e_0}$.
Assign each required edge of $E(R)\cap A'$ to the walk containing its
designated traversal. Apply Lemma~\ref{carp:lem:recovery}
to these two sets, each of total demand at most $1-d_{e_0}$.

Let $K_a,K_b$ be the resulting subgraphs and $R'_a,R'_b$ the routes.
Step~4 removes paths of length at least $c(r,a)$ and $c(r,b)$
from doubled $K_a$ and $K_b$, respectively. Each selected required edge
retains a copy for a complete traversal. Hence
\[
\begin{aligned}
c(R'_a)&\le2c(K_a)+c_{e_0}+c(r,b)-c(r,a),\\
c(R'_b)&\le2c(K_b)+c_{e_0}+c(r,a)-c(r,b).
\end{aligned}
\]
Their mean is at most
$(1+\varepsilon_0/4)(L-c_{e_0})+c_{e_0}\le(1+\varepsilon_0/4)L$.
For each endpoint, the selected small edges have total demand at
least the demand assigned to its walk minus $\varepsilon_0/4$.
Since $p_e\ge(8/7)z d_e$ on $A'$ and $\lambda=3/5$, and
$\Phi(R')\ge[\Phi(R'_a)+\Phi(R'_b)]/2$, we obtain
\[
\Phi(R')\ge p_{e_0}-L+\frac{12}{35}z\,d(E(R)\cap A')
-\frac{\varepsilon_0}{4}L-\frac{24}{35}z\,\frac{\varepsilon_0}{4}.
\]
This is the estimate in Lemma~\ref{lem:route-score} before subtracting
$3p(U)/10$. Its remaining cutoff and capacity calculations apply since $p_e=(8/7)d_eh_e$
and $h_e\le H$ for small $e\in E(R)$,
$p_{e_0}\le2h_{e_0}d_{e_0}$, and $L<8H/7$.
\end{proof}

\subsection{Approximation guarantee}\label{carp:sec:guarantee}
Apply Algorithm~\ref{alg:global} with $h_e,p_e,D$ and the singleton routes
above, writing $R_e,S_e,p'_e$ for its stored routes, sets of small required
edges, and adjusted prices. Make these changes:
\begin{itemize}[nosep]
\item Use $R_{\mathrm{RPP}}$ instead of $R_{\mathrm{TSP}}$ and
the candidates from Section~\ref{carp:sec:local} instead of
Algorithm~\ref{alg:quota}.
\item For each enumerated set of two to four required edges, compute a minimum-cost route
by enumerating their orders and service directions, at most $4!\,2^4$ choices.
\item Complete the unserved subset of $E^+$ using Lemma~\ref{carp:lem:completion}.
\end{itemize}
Let $\mathcal R$ be the family after set replacements and
$X=E^+\setminus E(\mathcal R)$.

\begin{proof}[Proof of Theorem~\ref{thm:carp}]
The packing and selection bounds cancel the $2\rho(R)$ terms
in \eqref{carp:eq:full-tour}.

The singleton cases are optimal: if $D=0$, positive demands give
$h_e=0$ for every required edge. For every nonempty stored route
$R_e$, the bound $h_f\le c(R_e)$ for every $f\in S_e$ gives
$p(S_e)\le(8/7)d(S_e)c(R_e)<9c(R_e)/14$.
Lemma~\ref{lem:local}'s proof gives feasible stored routes with
pairwise disjoint sets $E(R_e)$,
$0\le\Phi(R_e)\le s(R_e)\le p_e$, fewer than $n/\varepsilon_0$
updates, and hence $0\le p'_e\le p_e$.

Define $F$ by \eqref{eq:packing-contribution}, replacing $V(R)$ by
$E(R)$ and choosing $S_j(R)$ as the $j$ non-small required edges in
$E(R)$ with largest original demands. Restricting routes does not increase cost, so
Lemma~\ref{lem:packing-bound} applies. The price formula, capacity, and
$h_e\le2\rho(R)\le c(R)$ for $e\in E(R)$ give
Lemma~\ref{lem:exceptional-range}'s bound on $F(R;p)$ and exceptional
demand interval. Thus Lemma~\ref{lem:packing-interface} applies with
$b_2,b_3,b_4$.

Since packing uses only non-small required edges, each selected set
meets $E(R_e)$ only at $e$. Hence Lemma~\ref{lem:replacement} applies.
For an exceptional reference route $R$ with unique non-small required
edge $e$, the stopping loss $\varepsilon_0h_e\le2\varepsilon_0\rho(R)$
and Lemma~\ref{carp:lem:score} give \eqref{eq:stored-score} with $e$ in
place of $v$. Applying Lemma~\ref{lem:selection}'s proof,
\eqref{carp:eq:full-tour}, and Lemma~\ref{carp:lem:completion} gives
\[
\begin{aligned}
c(\mathcal T)
&\le c(R_{\mathrm{RPP}})+c(\mathcal R)+p(X)\\
&\le\sum_{R\in\mathcal T^*}
\left[2c(R)-2\rho(R)+2\rho(R)+\frac12c(R)\right]
=\frac52\opt.
\end{aligned}
\]
The selected routes and completion serve disjoint subsets partitioning $E^+$.

For the running time, $m\le n^2$ and $|B|<1/\theta$ give
$|B|M+m+2=O(n^2/\theta^2)$ vertices in each quota-tree call.
There are $n^{O(1/\theta)}$ choices of $B$; Theorem~\ref{thm:quota-tree}
gives polynomial runtime for fixed $\theta=\varepsilon_0/16$.
\end{proof}

\end{document}